  \let\oldparagraph\paragraph
  \renewcommand{\paragraph}{
    \@ifstar
      \xxxParagraphStar
      \xxxParagraphNoStar
  }
  \newcommand{\xxxParagraphStar}[1]{\oldparagraph*{#1}\mbox{}}
  \newcommand{\xxxParagraphNoStar}[1]{\oldparagraph{#1}\mbox{}}
  \let\oldsubparagraph\subparagraph
  \renewcommand{\subparagraph}{
    \@ifstar
      \xxxSubParagraphStar
      \xxxSubParagraphNoStar
  }
  \newcommand{\xxxSubParagraphStar}[1]{\oldsubparagraph*{#1}\mbox{}}
  \newcommand{\xxxSubParagraphNoStar}[1]{\oldsubparagraph{#1}\mbox{}}
\patchcmd\longtable{\par}{\if@noskipsec\mbox{}\fi\par}{}{}
\def\maxwidth{\ifdim\Gin@nat@width>\linewidth\linewidth\else\Gin@nat@width\fi}
\def\maxheight{\ifdim\Gin@nat@height>\textheight\textheight\else\Gin@nat@height\fi}
\def\fps@figure{htbp}
  \renewcommand*\contentsname{Table of contents}
  \newcommand\contentsname{Table of contents}
  \renewcommand*\listfigurename{List of Figures}
  \newcommand\listfigurename{List of Figures}
  \renewcommand*\listtablename{List of Tables}
  \newcommand\listtablename{List of Tables}
  \renewcommand*\figurename{Figure}
  \newcommand\figurename{Figure}
  \renewcommand*\tablename{Table}
  \newcommand\tablename{Table}
\newcommand{\anon}{1}
\definecolor{darkgreen}{rgb}{0.0, 0.2, 0.13}
\definecolor{ao}{rgb}{0.0, 0.5, 0.0}
\definecolor{vz}{rgb}{0.0, 0.0, 0.0}
\def\R{\mathbb{R}}
\def\C{\mathbb{C}}
\def\Z{\mathbb{Z}}
\def\N{\mathbb{N}}
\def\E{\mathbb{E}}
\def\P{\mathbb{P}}
\def\T{\mathsf{T}}
\def\G{\mathcal{G}} 
\def\Cov{\mathrm{Cov}}
\def\Var{\mathrm{Var}}
\def\indep{\perp\!\!\!\perp}
\def\th{^{\text{th}}}
\def\tr{\mathrm{tr}}
\def\df{\mathrm{df}}
\def\dim{\mathrm{dim}}
\def\col{\mathrm{col}}
\def\row{\mathrm{row}}
\def\nul{\mathrm{null}}
\def\rank{\mathrm{rank}}
\def\nuli{\mathrm{nullity}}
\def\spa{\mathrm{span}}
\def\sign{\mathrm{sign}}
\def\supp{\mathrm{supp}}
\def\diag{\mathrm{diag}}
\def\aff{\mathrm{aff}}
\def\conv{\mathrm{conv}}
\def\dom{\mathrm{dom}}
\def\hy{\hat{y}}
\def\hf{\hat{f}}
\def\hmu{\hat{\mu}}
\def\halpha{\hat{\alpha}}
\def\hbeta{\hat{\beta}}
\def\htheta{\hat{\theta}}
\def\cA{\mathcal{A}}
\def\cB{\mathcal{B}}
\def\cC{\mathcal{C}}
\def\cD{\mathcal{D}}
\def\cE{\mathcal{E}}
\def\cF{\mathcal{F}}
\def\cG{\mathcal{G}}
\def\cK{\mathcal{K}}
\def\cH{\mathcal{H}}
\def\cI{\mathcal{I}}
\def\cL{\mathcal{L}}
\def\cM{\mathcal{M}}
\def\cN{\mathcal{N}}
\def\cP{\mathcal{P}}
\def\cQ{\mathcal{Q}}
\def\cS{\mathcal{S}}
\def\cT{\mathcal{T}}
\def\cW{\mathcal{W}}
\def\cX{\mathcal{X}}
\def\cY{\mathcal{Y}}
\def\cZ{\mathcal{Z}}
\def\cQ{\mathcal{Q}}
\def\cbQ{\mathcal{\bar{Q}}}
\def\smallopsqrtN{o_P(N^{-\frac{1}{2}})}
\def\bQ{\bar{Q}}
\def\bO{\bar{O}}
\def\cbQ{\bar{\mathcal{Q}}}
\def\gtilde{\tilde{g}}
\def\bg{\mathbf{g}}
\def\bcH{\overline{\mathcal{H}}}
\def\bC{\mathcal{\bar{C}}}
\def\ind{\perp\!\!\!\perp}
\def\bgkt{\mathbf{g}_{k,1:t}}
\def\bg{\mathbf{g}}
\def\b0kt{\mathbf{g}_{0,1:t}}
\def\g0t{\mathbf{g_0^t}}
\def\g1{\mathbf{g_1}}
\def\gK{\mathbf{g_K}}
\def\tW{\tilde{W}}
\def\tA{\tilde{A}}
\def\tY{\tilde{Y}}
\def\tO{\tilde{O}}
\def\bO{\bar{O}}
\def\bo{\bar{o}}
\def\bbO{\overline{\overline{O}}}
\def\bbo{\overline{\overline{o}}}
\def\bO{\bar{O}}
\def\bo{\bar{o}}
\def\tPsi{\tilde{\Psi}}
\def\hsigma{\hat{\sigma}}
\def\hsigma2{\hat{\sigma}^2}
\def\asto{\overset{\mathrm{as}}{\to}}
\def\pto{\overset{p}{\to}}
\def\dto{\overset{d}{\to}}
\def\Pnonopt{p^{\textit{non-opt}}}
\def\Tend{T_{\textit{end}}}
\def\nend{n_{\textit{end}}}
\newcommand{\supcovering}[1]{N_{\infty}({#1},\bar{\mathcal{Q}}_K)}
\newcommand{\supcoveringTwo}[2]{\mathcal{N}_{\infty}\left({#1},{#2}\right)}
\newcommand{\supintegral}[1]{J_{\infty}({#1},\bar{\mathcal{Q}}_K)}
\newcommand{\supintegralTwo}[2]{\mathcal{J}_{\infty}\left({#1},{#2}\right)}
\newcommand{\supnorm}[1]{\left\| #1 \right\|_{\infty}}
\newcommand{\bracketingnumber}[2]{\mathcal{N}_{[]}\left(N, \Theta, {#2}, {#1}\right)}
\newcommand{\bracketingintegral}[2]{\mathcal{J}_{[]}\left(N, \Theta, {#2}, {#1}\right)}
\newcommand{\argmin}{\mathop{\mathrm{argmin}}}
\newcommand{\argmax}{\mathop{\mathrm{argmax}}}
\newcommand{\logit}{\mathop{\mathrm{logit}}}
\newtheorem{assumption}{Assumption}
\newtheorem{theorem}{Theorem}
\newtheorem{lemma}{Lemma}
\newtheoremstyle{custom}
  {0pt}   
  {0pt}   
  {\itshape} 
  {}       
  {\bfseries} 
  {.}      
  { }      
  {}       
\begin{document}

\def\spacingset#1{\renewcommand{\baselinestretch}%
{#1}\small\normalsize} \spacingset{1}


\if1\anon
{
  \title{\bf An Online Meta-Level Adaptive Design Framework with Targeted Learning Inference: Applications to Evaluating and Utilizing Surrogate Outcomes in Adaptive Designs}
  \author{Wenxin Zhang\textsuperscript{1}, Aaron Hudson\textsuperscript{2,3}, Maya Petersen\textsuperscript{1}, Mark van der Laan\textsuperscript{1} \thanks{
    \looseness -1
    This work was supported by the National Institutes of Health under Grant R01AI074345. The content is solely the responsibility of the authors and does not necessarily represent the official views of the NIH. 
    We also gratefully acknowledge the contributions of the ADAPT-R Study Team (ClinicalTrials.gov number NCT02338739) and Sponsor (NIH grants R01MH104123, K24AI134413, and R01AI074345). 
    } \\
    
    \textsuperscript{1}Division of Biostatistics, University of California, Berkeley\\
    \textsuperscript{2}Vaccine and Infectious Disease Division, Fred Hutchinson Cancer Center\\
    \textsuperscript{3}Department of Biostatistics, University of Washington}
    
    \maketitle
} \fi

\if0\anon
{
  \bigskip
  \bigskip
  \bigskip
  \begin{center}
    {\LARGE\bf An Online Meta-Level Adaptive Design Framework with Targeted Learning Inference: Applications to Evaluating and Utilizing Surrogate Outcomes in Adaptive Designs}
\end{center}
  \medskip
} \fi

\vspace{-1em}
\begin{abstract}
\looseness -1
Adaptive designs are increasingly used in clinical trials and online experiments to improve participant outcomes by dynamically updating treatment allocation as data accumulate. In practice, experimenters often consider multiple candidate designs, each with distinct trade-offs. However, typically only one design is implemented at a time, leaving benefits and costs of alternative designs unobserved and unquantified. To address this, we propose a novel meta-level adaptive design framework that enables real-time, data-driven evaluation and selection among candidate adaptive designs. Specifically, we define a new class of causal estimands to evaluate adaptive designs and propose Targeted Maximum Likelihood Estimators for these estimands. These estimators are asymptotically normal while accommodating dependence in adaptive-design data without parametric assumptions, enabling online selection among candidate designs. We further apply this framework to a motivating example where multiple surrogates of a long-term outcome are considered for updating randomization probabilities in adaptive experiments. Unlike existing surrogate evaluation methods, our approach comprehensively quantifies surrogates’ utility to accelerate detection of heterogeneous treatment effects, expedite updates to treatment randomization, and improve participant outcomes, facilitating dynamic selection among surrogate-guided designs. Overall, our framework provides a unified approach for evaluating opportunities and costs of various adaptive designs and guiding real-time decision-making in adaptive experiments.
\end{abstract}

\noindent%
{\it Keywords:} 
Adaptive Experimental Design, Causal Inference, Targeted Maximum Likelihood Estimation, Surrogacy
\vfill

\newpage
\spacingset{1.6} 

\bibliographystyle{apalike}

\renewcommand{\bibfont}{\small}
\titlespacing*{\section}    {0pt}{1.4ex plus .2ex minus .2ex}{0.8ex plus .2ex}
\titlespacing*{\subsection} {0pt}{1.2ex plus .2ex minus .2ex}{0.6ex plus .2ex}
\titlespacing*{\subsubsection}{0pt}{1.0ex plus .2ex minus .2ex}{0.5ex plus .2ex}

\setlength{\abovedisplayskip}{4pt plus 2pt minus 2pt}
\setlength{\belowdisplayskip}{4pt plus 2pt minus 2pt}
\setlength{\abovedisplayshortskip}{4pt}
\setlength{\belowdisplayshortskip}{4pt}


\section{Introduction}
\subsection{Overview}
\label{section:intro_overview}
\begingroup
\color{vz}

Adaptive experimental designs are gaining popularity in clinical trials and online experiments \citep{robertson2023response,larsen2024statistical}. Unlike traditional static
experimental designs, adaptive designs allow continuous updates to treatment
randomization probabilities and/or other experimental features based on accumulated data during the experiment, supporting a range of experimental objectives. 
A key advantage of adaptive designs is their ability to increase the probability of providing participants with their optimal treatment, particularly when this optimal treatment varies between individuals, thereby improving their outcomes.

However, in practice, experimenters often face the challenge of choosing among multiple candidate adaptive design configurations, each with its own trade-offs. 
A common and compelling example is an adaptive experiment with a longitudinal data structure, in which new participants are continuously enrolled and existing participants generate multiple short-term outcomes that unfold progressively over different follow-up times, while the primary outcome requires long-term follow-up. 
In such settings, surrogate outcomes, either the early proxies of the primary outcome or other relevant metrics requiring shorter follow-ups, have the potential to quickly identify more effective personalized treatments, inform timely adaptations of treatment randomization probabilities with sufficient data, and thereby enhance participants' welfare.
However, when multiple candidate surrogates are available, selecting among them involves complex trade-offs.
Early surrogates provide more abundant and timely data for rapid adaptation, but may be less accurate in detecting treatment effects relevant to the primary outcome. Later surrogates may offer more accurate guidance for optimal treatment, but their delayed availability and limited sample size reduce their impact on early adaptation and participant benefit.
In addition, the sensitivity with which a candidate surrogate captures heterogeneous treatment effects is a key driver of its utility: a surrogate that provides stronger signals for distinguishing which individuals benefit from which treatments enables more timely and targeted adaptation toward personalized treatment.
Despite all these important factors, in practice, only one candidate adaptation can be implemented at each time point of the experiment, which inevitably leaves the gains and losses of unchosen designs unobserved and unquantified, and further complicates the challenge of navigating these trade-offs in a real-time experiment, underscoring the need for a rigorous statistical framework to overcome these difficulties.

To address these challenges, we propose an online meta-level adaptive design framework with targeted learning inference: a general and data-driven adaptive design system
that continuously evaluates and dynamically updates its choice among different candidate designs in real time during ongoing adaptive experiments. 
We refer to this unified framework as TMLE-OSLAD, a Targeted Maximum Likelihood Estimation-based Online Super Learner Adaptive Design, which itself constitutes a meta-level adaptive design.
This framework comprises three key components, each making a distinct methodological contribution.

First, the proposed framework introduces a novel counterfactual perspective for adaptive experiments by asking: How might participants' outcomes have differed under alternative adaptive designs? To formalize this question, we define a new class of target estimands: the mean counterfactual primary outcome that would have been observed had each participant’s treatment, beginning at their enrollment, been assigned according to treatment randomization functions specified by the candidate adaptive design.

Second, to enable valid inference from the non-i.i.d. data generated by adaptive experiments, we establish a Targeted Maximum Likelihood Estimation (TMLE) approach that yields a consistent and asymptotically normal estimator without parametric assumptions, building upon martingale central limit theorems and equicontinuity results via maximal inequalities for martingale processes. We further leverage recent advances in higher-order Highly Adaptive Lasso \citep{van2023higher} to provide equicontinuity conditions. These developments enable robust inference on the proposed causal estimand—the expected outcome under a candidate counterfactual adaptive design. At the end of the experiment, we also provide TMLEs for estimating other causal estimands of interest, such as the
average treatment effect (ATE) and the mean outcome under optimal dynamic
treatment regimes (rules) \citep{murphy2003optimal,robins2004optimal,van2015targeted}.

Third, we develop a TMLE-based Online Super Learner
Adaptive Design, which continuously updates estimates of the proposed estimand to evaluate and compare a library of candidate adaptive designs in real time, selects the data-driven best-performing design based on TMLE estimation and inference, dynamically adapts treatment randomization probabilities according to the chosen design, and continuously monitors and reevaluates the set of candidate designs throughout the experiment.

Returning to the motivating example of adaptive experiments involving surrogates, TMLE-OSLAD demonstrates an additional methodological advantage: it provides an adaptive design perspective for rigorously quantifying the value of information flow as it unfolds over time and is continually leveraged to guide sequential decision-making.
Specifically, TMLE-OSLAD introduces, to the best of our knowledge, for the first time, a comprehensive causal estimand to evaluate the evolving surrogate utility in such dynamic settings with statistical inference.
In this context, the target estimand is defined as the mean counterfactual primary outcome under a candidate surrogate-guided adaptive design, where each participant’s treatment randomization probability at enrollment is determined by the estimated conditional average treatment effect function of the surrogate outcome and adaptively tilted toward the estimated optimal personalized treatment with the objective of optimizing the surrogate outcome.
This metric reflects all critical factors of the utility of a surrogate in accelerating participant benefit through sequential adaptive design, including the timeliness of surrogate availability for enabling early adaptation, its sensitivity to signal heterogeneous treatment effects, and its predictive accuracy for identifying the optimal personalized treatments for the primary outcome. This is distinct from existing methods that focus on evaluating static statistical associations or causal relationships between surrogate and primary outcomes rather than adopting a dynamic, sequential decision-making perspective.

Furthermore, TMLE-OSLAD enables a principled, data-driven implementation of adaptive experiments that facilitates real-time evaluation and selection among different surrogate-guided adaptive designs to improve the primary outcome.

\looseness -1
In summary, TMLE-OSLAD provides a comprehensive meta-level adaptive design framework for monitoring and quantifying opportunities and costs across a library of candidate adaptive experimental strategies. 
These strategies may differ in the quantities that guide adaptation, the statistical methods used to estimate them, or the rules governing design updates. 
The framework dynamically selects among these strategies as data accumulate, empowering data-driven, statistical-inference-based decision-making in diverse adaptive experimental settings.

\par
\endgroup

\subsection{Related Work}
\label{subsection:related_work}

\textbf{Adaptive experimental designs.}
Adaptive design is an active research area in statistics. In clinical trials, adaptive design methods are useful due to their flexibility and efficiency \citep{chow2008adaptive}. These designs enable adaptive modifications to key aspects such as eligibility criteria, randomization probabilities, and other design features based on sequentially accrued data. Among these, Response-Adaptive Randomization (RAR) updates treatment probabilities using treatment and outcome data of previously enrolled subjects to achieve estimation efficiency or enhance participant welfare by reducing assignment to inferior treatments \citep{hu2006theory,robertson2023response}.
Covariate-Adjusted Response-Adaptive (CARA) designs extend RAR by incorporating baseline covariates into randomization, enabling more personalized allocation and improved efficiency \citep{rosenberger2001covariate,hu2006theory,zhang2007asymptotic,van2008construction,rosenberger2008handling,atkinson2011covariate,cheung2014covariate}. One can also guide CARA for multiple objectives, e.g., by integrating both efficiency and ethics considerations in adaptation procedures \citep{hu2015unified}, and by both balancing prognostic covariates and facilitating adaptation procedures using predictive covariates in CARA designs \citep{zhao2022incorporating}.
The origins of RAR and CARA designs also trace back to \citet{thompson1933likelihood} and \citet{robbins1952some} and connect to multi-armed and contextual bandit literature \citep{bubeck2012regret,lattimore2020bandit,simchi2023multi}.
Our work differs by considering settings where multiple candidate adaptive designs are available to choose from and providing a principled way to navigate their trade-offs and select from these designs by TMLE-OSLAD framework, with applications to CARA designs with surrogate outcomes.

\textbf{Surrogate outcomes.}
Surrogate outcomes are often considered as a substitute for the primary outcome when the latter requires a long follow-up time or is expensive to measure. In clinical trials, using surrogate outcomes can result in earlier detection of treatment effects, accelerated drug approvals, and reduced costs \citep{buyse2000validation}. Examples include vaccine-induced immune responses as endpoints for HIV protection \citep{gilbert2008evaluating} and predictive short-term metrics to facilitate digital experiments \citep{duan2021online,richardson2023pareto}. The use of surrogates when primary outcomes are delayed has also been discussed in the adaptive design literature \citep{robertson2023response}. Examples include using short-term responses to guide treatment allocation for outpatients with depressive disorder \citep{tamura1994case} and for participants in stroke trials \citep{nowacki2017surrogate}, Bayesian modeling of surrogate–primary relationships to support RAR designs \citep{huang2009using}, and contextual bandits that handle delayed rewards by running pre-trained Bayesian reward models \citep{mcdonald2023impatient} or imputing delayed outcomes using surrogates constructed from external data under surrogacy assumptions \citep{yang2024targeting}.
Relatedly, \cite{gao2024response} analyzes RAR designs driven by an informative surrogate and provides consistent, asymptotically normal statistical inference for the primary outcome under exponential-family models.
While these approaches implement RAR or CARA procedures under a specific surrogate-guided design, a unified framework is lacking to dynamically re-evaluate and select among multiple surrogates to guide adaptive experiments.

Another related research area is surrogate evaluation. Many statistical frameworks have been developed to evaluate surrogates \citep{prentice1989surrogate,freedman1992statistical,robins1992identifiability,daniels1997meta,lin1997estimating, gilbert2008evaluating} and reviewed comprehensively \citep{buyse2000validation,vanderweele2013surrogate,weir2022informed,elliott2023surrogate}. 
However, most do not account for the potential benefits and risks of using surrogate outcomes in adaptive experiments that aim to learn and respond to treatment effect heterogeneity. Navigating these trade-offs requires both new approaches for quantifying the benefit of candidate surrogates, and novel designs and estimators that make possible online evaluation of these benefits and dynamic surrogate selection in response during the course of a study, which is addressed by our general TMLE-OSLAD framework.

\textbf{Inference based on adaptive design data.}
The statistical inference procedure in TMLE-OSLAD is related to the problem of how to provide valid inference for adaptively collected data in the adaptive design literature. 
For example, asymptotic inference under specified parametric models has been established for data generated by CARA designs \citep{zhang2007asymptotic,zhu2015covariate}.
Within the Targeted Maximum Likelihood Estimation framework \citep{van2006targeted, van2011targeted, van2018targeted}, \citet{van2008construction} developed TMLE tailored to adaptive designs without relying on specific parametric assumptions; subsequent work employed TMLE to estimate risk difference and log-relative risk \citep{chambaz2014inference} and the mean outcome under optimal dynamic treatment rules \citep{chambaz2017targeted} in CARA designs. 
Another line of work develops inference methods for adaptively collected data based on variance-stabilizing weighting techniques introduced by \cite{luedtke2016statistical} in non-adaptive settings and extended to treatment effect estimation or policy evaluation in multi-armed and contextual bandits \citep{hadad2021confidence, bibaut2021post, zhan2021off}, as well as to M-estimation of parametric models using adaptively collected data \citep{zhang2021statistical}.
Our proposed target estimand, the mean outcome under counterfactual adaptive designs, is distinct from these methods and related to causal inference for time series \citep{van2018robust,malenica2021adaptive,malenica2024adaptive}. We develop a TMLE for this estimand, and establish asymptotic normality by martingale theories and equicontinuity results. We also leverage properties of the càdlàg function classes \citep{gill1995inefficient} and non-parametric higher-order spline Highly Adaptive Lasso \citep{van2023higher} to establish equicontinuity conditions. 

\looseness -1
The rest of the paper is organized as follows. Section~\ref{section:stat_setup} presents the statistical setup and the general TMLE-OSLAD framework. Section~\ref{section:our_CARA} specializes the framework for CARA designs involving multiple surrogates. Section~\ref{section:tmle_analysis} provides theoretical results. Section~\ref{section:simulations} presents simulation studies, followed by a real-data-based simulation in Section~\ref{section:real_sim}.  Section~\ref{section:discussions} concludes.

\section{TMLE-Based Online Super Learner Adaptive Design}
\label{section:stat_setup}
\looseness -1
This section outlines the data structure, notation, and methods for the proposed TMLE-OSLAD framework. 
Suppose we have baseline covariates $W \in \cW$ and a treatment $A \in \cA = \{0,1\}$. Let $Y_K$ denote the primary outcome, observed $K$ follow-up periods after treatment assignment, with larger values indicating better outcomes. We also consider $K-1$ intermediate surrogate outcomes $Y_k$, each observed at $k = 1, \ldots, K-1$.
We assume all outcomes $Y_k$ lie in $[0,1]$.
Let $T$ denote the total duration of enrollment in the experiment. At each time point, the experimenter uses historical data from previously enrolled participants to update the treatment assignment mechanism for new participants. These updates can aim to increase the chance that participants receive optimal personalized treatment by learning heterogeneous treatment effects, which are detailed in Section~\ref{section:our_CARA} and Appendix~\ref{appendix:details_of_CARA}.

\subsection{Statistical Setup}
\textbf{Sequential data structure.}
\looseness -1
We use $t = 1,\ldots,T$ to index the sequential enrollment times when participants enter the study and are randomized to one of the treatment arms. Let $E(t)$ denote the number of subjects enrolled at time $t$, and define the cumulative enrollment by $N(t) = \sum_{t'=1}^{t} E(t')$.
For each participant $i$, we sequentially observe 
$
O_i = (t_i, W_i, A_i, Y_{i,1}, \ldots, Y_{i,K})$,
where $t_i$ is the enrollment time, $W_i$ are baseline covariates observed at $t_i$, and $A_i$ is the treatment assigned at $t_i$. Surrogate outcomes $Y_{i,1}, \ldots, Y_{i,K-1}$ are observed at times $t_i + k$ for $k = 1,\ldots,K-1$. The final outcome of interest, $Y_{i,K}$, is observed at $t_i + K$. To emphasize the timeline when each variable becomes available, define
$
\tilde{W}_i(t_i) \equiv W_i, \, 
\tilde{A}_i(t_i) \equiv A_i, \, $ and $ 
\tilde{Y}_i(t_i + k) \equiv Y_{i,k}$ for $k = 1,\ldots,K$.
We use these notations interchangeably.
At each time $t$, the experimenter has access to all information accrued up to $t-1$, denoted by
$
\bar{O}(t-1) := \{ \tilde{W}_i(t'), \tilde{A}_i(t'), \tilde{Y}_i(t') : i \in [N(t-1)],\ t' \leq t-1 \}.
$
This represents all covariates, treatments, and outcomes of previous participants that have become observable before time $t$.
From $t-1$ to $t$, new data are collected from two sets:
$\cI_1(t) := \{ i : t - K \leq t_i \leq t-1 \}$ and 
$\cI_2(t) := \{ i : t_i = t \}.
$
$\cI_1(t)$ contains previously enrolled participants whose outcomes are still arriving, and $\cI_2(t)$ contains participants newly enrolled at time $t$. The experimenter first observes the outcomes revealed at time $t$ for $\cI_1(t)$, then updates the history $
\bcH(t) := \bar{O}(t-1) \cup \{ \tilde{Y}_i(t) : i \in \cI_1(t) \}
$,  
where $\tilde{Y}_{i}(t)$ is the outcome for participant $i$ that becomes observable at time $t$. Next, the experimenter enrolls $\cI_2(t)$, observes their baseline covariates, and assigns treatments.

\textbf{Adaptive treatment randomization.}  
When a new participant $i$ is enrolled at time \( t_i \), a treatment $A_i$ can be randomized based on the information available up to that point and the baseline covariates $W_i$.
We denote the participant-specific pre-treatment history as $\cH(i):=\left(\bcH(t_i), W_i\right)$, where $\bcH(t_i)$ is the observed data of previous participants enrolled before time $t_i$.
We assume that treatment randomization depends on a fixed-dimensional summary context of the pre-treatment history, defined as
$
C_i := f_C\big(\cH(i)\big) \in \mathcal{C}.
$
The treatment randomization probability is generated by a mapping $g_{i}: \mathcal{C} \to \cG$, where $\cG$ is a class of probability distributions of treatment assignment, and $g_i(a | C_i)$ is the probability of assigning $a \in \cA$ given context $C_i$.

\textbf{Likelihood.}
\looseness -1
Let $Q_W$ denote the distribution of baseline covariates $W$ and $Q_{Y_k}$ denote the distribution of $Y_k$ given $W,A,Y_1,\ldots,Y_{k-1}$ for outcome $Y_k$ ($k = 1,\ldots,K$).
We assume for every participant $i$, $t_i \ind (W_i,A_i,Y_{i,1},\ldots,Y_{i,K})$, $W_i \sim Q_W$, and $Y_{i,k} \sim Q_{Y_k}(\cdot|W_i,A_i,Y_{i,1},\ldots,Y_{i,k-1})$.
Specifically, let $q_W$ denote the marginal density of the baseline covariates $W$, and $q_{Y_k}$ denote the conditional density of the $k$-th outcome $Y_k$ given $W,A,Y_1,\ldots,Y_{k-1}$.
The likelihood of the observed data of $n$ subjects is given by 
$
p^n(o_1,\ldots,o_n) 
= \prod_{i=1}^{n}
q_W(w_i) \,
g_i(a_i | C_i)
\prod_{k=1}^{K}
q_{Y_k}\big(y_{i,k} \mid w_i, a_i, y_{i,1}, \ldots, y_{i,k-1} \big)$.

\textbf{Structural Causal Model.}
We formalize the data generating process through a Structural Causal Model (SCM) \citep{pearl2000causality} over the experiment, which also underpins estimation and inference in our setting. At each time $t$, we first observe the history data $\bcH(t)$, which includes, for previously enrolled participants with $t_i=t-k$ ($k = 1,\ldots,K$), the outcomes given by $Y_{i,k} = f_{Y_k}(W_i, A_i, Y_{i,1},\ldots, Y_{i,k-1}, U_{Y_{i,k}})$.
Second, for each new unit $i$ enrolled at time $t$, we observe 
$t_i = t$; $W_i = f_W(U_{W_i})$; 
$C_i = f_C\left(\cH(i)\right)$; and 
$A_i = f_A\left(C_i, U_{A_i}\right)$.
Here, $U_i = (U_{W_i}, U_{A_i}, U_{Y_{i,1}}, \ldots, U_{Y_{i,K}})$ are exogenous variables independent across participants. 
The actual adaptive design for $n$ subjects is represented by the sequence of applied treatment-randomization functions, denoted by $\bg_{1:n}:=\{g_{i}: i=1,\ldots,n\}$.

\subsection{Evaluation of Counterfactual CARA Designs}
\label{subsection:general_target_parameter}

Here we define a target estimand within a modified SCM under a counterfactual adaptive design. 
Specifically, we replace every $A_i = f_A(C_i,U_{A_i})$ in the original SCM with $A_i^* \sim g_{*,i}(\cdot|C_i)$ for every $i$, which is a treatment randomization function under a counterfactual adaptive design, based on the observed summary measure $C_i$ for each unit $i$ at their entry time. The outcomes then follow $Y^*_{i,k} = f_{Y_k}(W_i, A_i^*, Y_{i,1}^*,\ldots, Y_{i,k-1}^*, U_{Y_{i,k}})$ for $k = 1, \ldots, K$. Let $\bC(n) := (C_1,\cdots,C_{n})$. Under this modified SCM, the proposed estimand is:
\[
\Psi_{\bg_{*,1:n}\left(\bC(n)\right)}
= \frac{1}{n} \sum_{i=1}^{n} E_{g_{*,i}\left(\cdot|C_i\right)}\left[Y_{i,K}^*\right],
\]
the mean counterfactual primary outcome had all participants, at their entry, been treated according to the treatment randomization mechanism $g_{*,i}$ that could have been applied under an alternative adaptive design $\bg_{*,1:n} = (g_{*,1},\ldots,g_{*,n})$. 

\begingroup
\color{vz}
This target estimand naturally supports the counterfactual evaluation of multiple candidate adaptive designs of interest.
Suppose we have $J$ different candidate adaptive designs, where each design is defined by a series of counterfactual treatment randomization functions $\{g_{j,i}^* : i=1, \ldots, n\}$ for $j \in [J]$, and depends on the $j$-specific design configuration aiming to increase the probability that new participants receive personalized optimal treatment assignments to maximize their outcomes. The corresponding counterfactual estimand for the $j$-th candidate design is defined as:
$
\Psi_j := 
\frac{1}{n} \sum_{i=1}^{n} 
E_{g_{j,i}^*(\cdot | C_i)} \big[ Y_{i,K}^* \big]$, reflecting the expected final outcome had the population been treated under that adaptive design.
This enables a principled comparison of all candidate designs by comparing the values of $\{ \Psi_j \}_{j=1}^J$. 
\par
\endgroup

Let $Q_0 := (Q_{0,W},Q_{0,K}) \in \cQ_W \times \cQ_K$, where $Q_{0,W}$ is the true distribution of $W$, and $Q_{0,K}$ is the conditional distribution of $Y_K$ given $A$ and $W$, with the associated $\bQ_{0,K}$ denoting the true conditional mean function of the final outcome $Y_K$ given $A$ and $W$.
We also use $g_{0,i}$ to denote the actual treatment randomization function applied to each participant $i$ in the experiment, which is known to the experimenter.
One can identify the proposed causal estimand based on the observed data under the following assumptions.

\begin{assumption}[Sequential Exchangeability]\label{sequential_assumption_general} 
For any $i \in [n]$, $Y_{i, K}(a) \indep A_i \mathrel{|} C_i$.
\end{assumption}

\begin{assumption}[Positivity]\label{assumption:positivity_general}
For any $i \in [n]$ and $a \in \mathcal{A}$, $g_{0, i}(a|C_i) > 0$.
\end{assumption}

\begin{theorem}[Identification]
With assumptions \ref{sequential_assumption_general} and \ref{assumption:positivity_general},
one can identify $\Psi_{j}$ under $Q_0$ by
\[
\Psi_{j}(Q_0) 
= \Psi_{\bg^*_{j,1:n}}(Q_0)
= \frac{1}{n} \sum_{i=1}^{n} E_{Q_0,g^*_{j,i}}\left[Y_{i,K}^*\right]= \frac{1}{n} \sum_{i=1}^{n} \sum_{a=0}^{1} \bQ_{0,K}(a,W_i) g^*_{j,i}(a|C_i).
\]
\end{theorem}

We note that this estimand provides a fast and robust approximation to an ideal estimand: the expected outcome obtained by integrating the entire distribution of data that would sequentially arise from the candidate adaptive design of interest. However, the latter's estimation is considerably more complex and computationally intensive.
In contrast, our proposed context-specific estimand substantially simplifies the estimation procedure, enables doubly-robust estimation, and closely approximates the ideal estimand when the observed contexts are reflective of those that would be observed in the candidate design, especially when the adaptation procedure is based on learning from a stationary target (e.g. $Q_0$) that dominates over the experiments. This yields an efficient and effective approach to distinguishing high-performing from low-performing designs and facilitating online selection among candidate designs. We refer readers to Appendix~\ref{appendix:ideal_estimand} for further discussions and empirical results showing minimal difference between the two estimands in our simulations. 

\subsection{Targeted Maximum Likelihood Estimation}
\label{subsection:TMLE_general}
At each time point $t$, let $n_t := N(t-K)$ denote the total number of participants for whom the final outcome $Y_K$ has been observed at that time. We define the target estimand for the $j$-th counterfactual adaptive design at time $t$ as $\Psi_{t,j}:= \frac{1}{n_t} \sum_{i=1}^{n_t} E_{Q_0,g^*_{j,i}}\left[Y_{i,K}^*\right]= \frac{1}{n_t} \sum_{i=1}^{n_t} \sum_{a=0}^{1} \bQ_{0,K}(a,W_i) g^*_{j,i}(a|C_i)$. 
To estimate $\Psi_{t,j}$, we provide a Targeted Maximum Likelihood Estimation (TMLE) approach.
Let $\bQ_{n,K,t}$ be an initial estimate of $\bQ_{0,K}$ based on the historical data $\bcH(t)$ observed up to time $t$.
We apply a TMLE procedure to update $\bQ_{n,K,t}$ by fitting the logistic model \citep{van2006targeted,van2008construction,van2011targeted,malenica2024adaptive}:
$
\logit \bQ_{n,K,t, \epsilon} \left(A, W \right) = \logit \bQ_{n,K,t} \left(A,W \right) + \epsilon$, 
with weights $\frac{g^*_{j, i}(A_i|C_i)}{g_{0, i}(A_i|C_i)}$.
The estimate of $\epsilon$ is denoted by $\hat{\epsilon}_{t}$, and an updated estimator of the conditional mean function $\bQ_{n,K,t}^* := \bQ_{n,K,t,\hat{\epsilon}_{t}}$ is obtained, which solves:
$
\frac{1}{n_t} \sum_{i=1}^{n_t}  \frac{g^*_{j, i}(A_i|C_i)}{g_{0, i}(A_i|C_i)}\left(Y_{i, K}-\bQ_{n,K,t}^*(A_i, W_i)\right)=0$.
The TMLE estimate of $\Psi_{t,j}$ is:
\[
\hat{\Psi}_{t,j} := \frac{1}{n_t} \sum_{i=1}^{n_t} \sum_{a=0}^{1} \bQ_{n,K,t}^*(a, W_i) g^*_{j,i}(a|C_i).
\]
The estimated standard error is $\sqrt{\hsigma2_{t,j}/n_t}$, 
where 
$
\hsigma2_{t,j} := \frac{1}{n_t} \sum_{i=1}^{n_t} \left[\frac{g^*_{j, i}(A_i|C_i)}{g_{0, i}(A_i|C_i)}\left(Y_{i, K}-\bQ_{n,K,t}^*(A_i, W_i)\right)\right]^2$.
The $100(1-\alpha)\%$ Wald confidence interval for $\hat{\Psi}_{t,j}$ is 
$
\left[\hat{\Psi}_{t,j}-z_{1-\alpha/2}\sqrt{\frac{1}{n_t}\hsigma2_{t,j}},\hat{\Psi}_{t,j}+z_{1-\alpha/2}\sqrt{\frac{1}{n_t}\hsigma2_{t,j}} \, \right]$, where $z_{1-\alpha/2}$ is the $1-\alpha/2$ quantile of the standard normal distribution (e.g. $\alpha = 0.05$).

\looseness -1
Under the conditions in Section~\ref{section:tmle_analysis}, TMLE is consistent, asymptotically normal and doubly robust, meaning that consistency holds if the outcome regression or the treatment assignment mechanism is correctly specified. In adaptive designs, the treatment assignment mechanism is known, so TMLE remains consistent even if the outcome regression is misspecified. That said, more accurate estimation of $\bQ_{0,K}$ can reduce the variance of TMLE. When multiple estimators are available, a Super Learner~\citep{van2007super} can be used to select the best-performing estimator by cross-validation. We further recommend including the higher-order Highly Adaptive Lasso~\citep{van2023higher} to facilitate the equicontinuity conditions.

\subsection{Online Super Learner Adaptive Design}
\label{subsection:OSAD}
As the experiment progresses to a new time point $t$, there are up to $J$ candidate treatment randomization functions $g^*_{1,i},\ldots,g^*_{J,i}$ for assigning treatment randomization probabilities for a new participant $i$ enrolled at $t$.
For each $j \in [J]$, $g^*_{j,i}$ is generated based on historical data using the $j$-specific adaptive design configurations.
The proposed Online Super Learner Adaptive Design begins by estimating
$\Psi_{t,j}$ for each $j \in [J]$ and obtaining confidence intervals for these estimates.
Then, it selects the design indexed by
$
j^*_t := \argmax_{j \in [J]}\hat{\Psi}_{t,j}-z_{1-\alpha/2}\sqrt{\frac{1}{n_t}\hsigma2_{t,j}}
$, 
the candidate design with the highest lower confidence interval bound among all $\hat{\Psi}_{t,j}$ for $j \in [J]$, prioritizing designs with high performance with low uncertainty to mitigate the risk of selecting a suboptimal design due to unstable estimates. Finally, for all participants 
$i$ enrolled at time 
$t$, we assign treatment randomization probabilities using $g^*_{j_t^*,i}$.

\subsection{Statistical Inference at the End of Adaptive Experiments}
\label{section:inference_at_the_end_general}
\looseness -1
We denote by $\Tend := T + K$ the time when the experiment concludes, i.e., when all final outcomes $Y_K$ have been observed after the last enrollment at time $T$. 
The corresponding sample size is $\nend := N(T) = \sum_{t=1}^T E(t)$. 
At time $\Tend$, the TMLE procedure in Section~\ref{subsection:TMLE_general} applies to estimate $\Psi_{\Tend,j}$, the expected final outcome under the $j$th counterfactual adaptive design.
We also provide TMLEs for other causal estimands commonly studied in randomized trials, including the average treatment effect and the mean outcome under dynamic treatment rules.
Their statistical inference is based on Wald confidence intervals using the corresponding TMLE estimates and their estimated standard errors.

\looseness -1
\textbf{Average treatment effect.} We define the following average treatment effect (ATE) on the primary outcome $Y_K$ across all participants within the experiment:
$
\psi_{\Tend}^{\text{ATE}} := \frac{1}{\nend} \sum_{i=1}^{\nend} \left[\bQ_{0,K}(1,W_i)-\bQ_{0,K}(0,W_i)\right]$.
For estimation, we first obtain an initial estimate $\bQ_{n,K,\Tend}$ of $\bQ_{0,K}$ and then perform the TMLE update with weights $\frac{2A_i-1}{g_{0,i}(A_i|C_i)}$, yielding $\bQ_{n,K,\Tend}^*$ that solves
$
\frac{1}{\nend} \sum_{i=1}^{\nend}  \frac{2A_i-1}{g_{0, i}(A_i|C_i)}\left(Y_{i, K}-\bQ_{n,K,\Tend}^*(A_i, W_i)\right)=0.
$
The TMLE estimate of $\psi_{\Tend}^{\text{ATE}}$ is  $\hat{\psi}_{\Tend}^{\text{ATE}} := \frac{1}{\nend} \sum_{i=1}^{\nend} \left[\bQ_{n,K,\Tend}^*(1,W_i)-\bQ_{n,K,\Tend}^*(0,W_i)\right]$.
The standard error estimate is $\sqrt{\hat{\sigma}^{2,\text{ATE}}_{\Tend}/\nend}$, where
$\hat{\sigma}^{2,\text{ATE}}_{\Tend} = \frac{1}{\nend} \sum_{i=1}^{\nend} \left[\frac{2A_i - 1}{g_{0,i}(A_i \mid C_i)} \left(Y_{i,K} - \bQ_{n,K,\Tend}^*(A_i, W_i)\right)\right]^2$.  

\looseness -1
\textbf{Mean outcome under a dynamic treatment rule.}
Let $d: \cW \to \cA$ denote a fixed dynamic treatment rule mapping baseline covariates to a treatment arm. 
We define a target estimand 
$\psi^{d}_{\Tend} := \frac{1}{\nend} \sum_{i=1}^{\nend} \bQ_{0,K}\big(d(W_i),W_i\big)$
as the mean final outcome $Y_K$ under rule $d$. 
We apply TMLE with weights
$\frac{I(A_i=d(W_i))}{g_{0,i}(A_i|C_i)}$ to solve
$\frac{1}{\nend} \sum_{i=1}^{\nend}
\frac{I(A_i=d(W_i))}{g_{0,i}(A_i|C_i)}
(Y_{i,K}-\bQ^*_{n,K,\Tend}(A_i,W_i))=0$, and yield 
$\hat{\psi}^{d}_{\Tend}:=\frac{1}{\nend}\sum_{i=1}^{\nend}\bQ^*_{n,K,\Tend}(d(W_i),W_i)$.
The standard error estimate is $\sqrt{\hat{\sigma}^{2,d}_{\Tend}/\nend}$, where 
$\hat{\sigma}^{2,d}_{\Tend} := \frac{1}{\nend}\sum_{i=1}^{\nend}\!\left[\frac{I(A_i=d(W_i))}{g_{0,i}(A_i|C_i)}(Y_{i,K}-\bQ^*_{n,K,\Tend}(A_i,W_i))\right]^2$.

\looseness -1
We also note their marginal versions $\psi^{\text{ATE}} := E\left[\bQ_{0,K}(1,W)-\bQ_{0,K}(0,W)\right]$ and 
$\psi^{d} := E\left[\bQ_{0,K}\big(d(W),W\big)\right]$. The TMLE estimates are the same as above, while the standard error estimates differ, with 
$\sqrt{\hat{\sigma}^{2,\text{ATE}}/\nend}$
for $\psi^{\text{ATE}}$ and $\sqrt{\hat{\sigma}^{2,d}/\nend}$ for $\psi^{d}$, where
$\hat{\sigma}^{2,\text{ATE}}:
=
\frac{1}{\nend}\sum_{i=1}^{\nend}
\Big[
\frac{2A_i-1}{g_{0,i}(A_i\mid C_i)}
\left(Y_{i,K}-\bQ_{n,K,\Tend}^*(A_i,W_i)\right)
+
\bQ_{n,K,\Tend}^*(1,W_i)
-
\bQ_{n,K,\Tend}^*(0,W_i)
-
\hat{\psi}^{\text{ATE}} 
\Big]^2$ and  
$\hat{\sigma}^{2,d} := \frac{1}{\nend}\sum_{i=1}^{\nend}\!\left[\frac{I(A_i=d(W_i))}{g_{0,i}(A_i|C_i)}(Y_{i,K}-\bQ^*_{n,K,\Tend}(A_i,W_i))+\bQ^*_{n,K,\Tend}(d(W_i),W_i)- \hat{\psi}^{d} \right]^2.$ 

\section{Evaluating and Utilizing Surrogates in CARA Designs}
\label{section:our_CARA}
\looseness -1
In this section, we apply the TMLE-OSLAD framework to the motivating example of adaptive experiments involving multiple surrogate outcomes before the final outcome.
Recall that there are $K-1$ candidate surrogate outcomes, $Y_1,\ldots,Y_{K-1}$  measured before the primary outcome $Y_K$ at follow-up times $k=1,\ldots,K-1$.
Each surrogate $Y_k$ defines a distinct candidate adaptive design, which leverages accumulating information from short-term surrogate outcomes to increase the probability for assigning the estimated personalized treatment associated with that surrogate over time.
\begingroup
\color{vz}
TMLE-OSLAD is implemented to continuously evaluate and dynamically select among different surrogate-specific adaptive designs over the experiment.
We first introduce a CARA procedure that adapts treatment randomization probabilities based on the estimated conditional average treatment effect (CATE) to improve participant outcomes. Building on Section \ref{section:stat_setup}, we define our target estimand that evaluates each candidate surrogate within a counterfactual adaptive design that utilizes the surrogate to adapt treatment randomization probabilities.
We then demonstrate TMLE-OSLAD that uses TMLE to estimate these target estimands to evaluate surrogate-guided adaptive designs, and utilizes the most useful one to guide the ongoing adaptive experiment. 
Finally, we conduct post-experiment inference for other causal estimands such as ATE and the mean primary outcome under optimal dynamic treatment rules that optimize primary or surrogate outcomes.
\par
\endgroup

\subsection{CARA Designs Using Surrogate Outcomes}
\label{section:sampling_schemes}

In our CARA designs, at each time point, we update the model to estimate the CATE for the outcome of interest and its associated uncertainty, and use these estimates to assign randomization probabilities to newly enrolled participants. 
This design aims to maximize participant outcomes by increasing the chance to allocate the estimated optimal treatment, with a principled exploration–exploitation balance ensuring sufficient exploration when the superiority of a treatment arm is uncertain and maintaining the ability to continue learning heterogeneous treatment effect from data collected at later time points. Specifically, when the estimated CATE is small relative to its uncertainty, indicating that the superiority of one treatment arm is uncertain, the treatment randomization probabilities remain close to 0.5. This avoids overcommitting to a noisy and potentially suboptimal rule and ensures sufficient exploration across treatment arms. As the estimated CATE suggests a stronger signal for the superior treatment arm, the probability of assigning that treatment increases up to a maximum exploitation probability. Meanwhile, a positive lower bound on the probability of assigning the alternative arm is maintained to ensure sufficient support for valid estimation and inference of other estimands at the end of the experiment (see Sections~\ref{section:inference_at_the_end_general} and~\ref{section:inference_at_the_end}).

More details of the CARA design are provided in the Appendix. Specifically, CATE is estimated using a doubly robust approach based on pseudo-outcomes \citep{van2006statistical,luedtke2016super} (Appendix~\ref{appendix:CATE}), with point estimates and confidence intervals constructed based on higher-order Highly Adaptive Lasso (HAL) \citep{van2023higher} (Appendix~\ref{appendix:HAL}). Details of the treatment randomization function can be found in Appendix~\ref{appendix:details_of_CARA}.

\subsection{Surrogate-Guided Online Super Learner Adaptive Design}
\label{subsection:TMLE_OSLAD_surrogate}
\begingroup
\color{vz}
To enable data-adaptive selection of the most effective surrogate to guide treatment allocation, we leverage the general TMLE-OSLAD framework in Section~\ref{section:stat_setup} to both evaluate and dynamically utilize candidate surrogate outcomes within real-time CARA experiments. At each interim time $t$, we use all observed historical data to assess the utility of each surrogate-driven adaptive design and to select the most promising surrogate for adaptation.

First, at time $t$, we define the target estimand
$
\Psi_{t, \bg^*_{k,1:n_t}}
:= \frac{1}{n_t} \sum_{i=1}^{n_t} E_{g^*_{k,i}}\left[Y_{i,K}^*\right]$  for each candidate surrogate $Y_k$,
which represents the mean final outcome that would be achieved for the $n_t$ fully-followed participants if each had been assigned with the treatment randomization function $g^*_{k,i}$, at their enrollment, determined by the adaptive design using surrogate $Y_k$. 
We refer to this estimand as ``the utility of a surrogate'' in the CARA design (see Appendix \ref{appendix:remark_utility_of_surrogates} for further discussion of this estimand).

To construct $g^*_{k,i}$ from the CATE for $Y_k$, we impose the following identification assumption.
\begin{assumption}[Sequential Exchangeability for Surrogate Outcomes]\label{sequential_assumption} 
For any $k \in \{1,\ldots,K-1\}$ and any subject $i \in \{1, \ldots, n_t\}$, $Y_{i, k}(a) \indep A_i \mathrel{|} C_i$.
\end{assumption}

Under Assumptions \ref{sequential_assumption_general} and  \ref{assumption:positivity_general}, $\Psi_{t, \bg^*_{k,1:n_t}}$ can be identified as
$
\Psi_{t,k} := \frac{1}{n_t} \sum_{i=1}^{n_t} \sum_{a=0}^{1} \bQ_{0,K}(a, W_i) g^*_{k,i}(a|C_i)$.
Then, we estimate $\Psi_{t,k}$ for each surrogate $Y_k$ using TMLE, which is given by
$
\hat{\Psi}_{t,k} = \frac{1}{n_t} \sum_{i=1}^{n_t} \sum_{a=0}^{1} \bQ_{n,K,t}^*(a, W_i) g^*_{k,i}(a|C_i)$.
The estimated standard error is $\sqrt{\hsigma2_{t,k}/n
_t}$,
where 
$
\hsigma2_{t,k}= \frac{1}{n_t} \sum_{i=1}^{n_t} \left[\frac{g^*_{k, i}(A_i|C_i)}{g_{0, i}(A_i|C_i)}\left(Y_{i, K}-\bQ_{n,K,t}^*(A_i, W_i)\right)\right]^2$.
Next, we compare all candidate surrogates and the final outcome by constructing confidence intervals for $\Psi_{t,k}$ ($k = 1,\ldots, K$) and identify the outcome $Y_{k^*_t }$ with the highest lower confidence bound, where
$
k^*_t := \argmax_{k \in [K]}\hat{\Psi}_{t,k}-z_{1-\alpha/2}\sqrt{\hsigma2_{t,k}/n_t}$.
For example, if two surrogates have similar point estimates but one yields a much narrower confidence interval, the surrogate with the higher lower bound is prioritized, as it reflects more evidence in its utility relative to uncertainty.
Accordingly, we select $Y_{k^*_t}$ as the outcome to guide the next-stage treatment assignment for newly enrolled participants at time $t$, utilizing the treatment randomization function $g^*_{k^*_t,i}$. 
This procedure is repeated at each subsequent time point, allowing adaptive surrogate selection and treatment allocation as new data accrue.

\par
\endgroup
\subsection{Inference after Adaptive Experiments Involving Surrogates}
\label{section:inference_at_the_end}
\looseness -1
At the end of the adaptive experiment, one can follow Section~\ref{section:inference_at_the_end_general}
to conduct statistical inference for causal estimands such as the average treatment effect or the mean outcome under dynamic treatment rules. 
Here, we consider the mean primary outcome under optimal dynamic treatment rules (ODTR) that maximize the primary outcome \citep{murphy2003optimal,robins2004optimal,van2015targeted} or a surrogate outcome \citep{hsu2015surrogate,yang2024targeting}.

\looseness -1
For a primary or surrogate outcome $Y_k$ ($k\in\{1,\ldots,K\}$), define the optimal dynamic treatment rule $d^*_{0,k}:\cW\to\cA$ by $d^*_{0,k}(W)=I\!\left(E[Y_k(1)-Y_k(0)\mid W]>0\right)$. We estimate this rule using all observed data and obtain $d^*_{n,k,\Tend}$ (for brevity, we also write $d^*_{n,k}$). The target estimand is the mean primary outcome $Y_K$ under this rule
$\psi^{d^*_{n,k}}_{\Tend}:=\frac{1}{\nend}\sum_{i=1}^{\nend}\bQ_{0,K}(d^*_{n,k}(W_i),W_i)$,
or its marginal version
$\psi^{d^*_{n,k}}:=E[\bQ_{0,K}(d^*_{n,k}(W),W)]$.
We estimate them by TMLE solving 
$\frac{1}{\nend}\sum_{i=1}^{\nend}\frac{I(A_i=d^*_{n,k}(W_i))}{g_{0,i}(A_i\mid C_i)}(Y_{i,K}-\bQ^*_{n,K,\Tend}(A_i,W_i))=0$,
yielding a point estimate 
$\hat{\psi}^{d^*_{n,k}}_{\Tend}:=\frac{1}{\nend}\sum_{i=1}^{\nend}\bQ^*_{n,K,\Tend}(d^*_{n,k}(W_i),W_i)$, with standard error estimates given by Section \ref{section:inference_at_the_end_general}.

\looseness -1
We note that this approach provides an additional treatment-rule-based perspective on evaluating surrogate utility by the mean primary outcome $Y_K$ achieved under ODTRs maximizing a surrogate outcome $Y_k$ ($k \ne K$), which can be compared with the value achieved under the ODTR maximizing $Y_K$ itself.
Moreover, for $d^*_{n,K}$, the estimated ODTR maximizing the primary outcome $Y_K$, the standard TMLE for its evaluation may exhibit upward bias as the same data are used to both learn the rule and evaluate its value. We recommend applying cross-validated TMLE (CV-TMLE) \citep{zheng2011cross,van2015targeted, montoya2021optimal}, which uses sample splitting to avoid finite sample bias.

\section{Theoretical Analysis of TMLE}
\label{section:tmle_analysis}
This section derives consistency and asymptotic normality of the
proposed TMLE.
For notational continuity with the surrogate-guided adaptive design setup introduced above, we continue to index candidate designs by \( k \in [K] \) in the TMLE analysis, and the target estimand of interest is \( \Psi_{t,k} \).  
This indexing is notationally equivalent to the general index \( j \in [J] \) used in Section~\ref{section:stat_setup} for estimating \( \Psi_{t,j} \); the theoretical results and proofs apply identically under either notation.
We first study the difference between TMLE $\hat{\Psi}_{t,k}$ and the truth $\Psi_{t,k}$. 
For any $\bQ_K \in \cbQ_K$, define
$
D_k(\bQ_K)\left(O_i,C_i\right) := \frac{g^*_{k,i}\left(A_i|C_i\right)}{g_{0,i}\left(A_i|C_i\right)}\left( Y_{i, K}-\bQ_K\left(A_i, W_i\right) \right)$.
\begin{restatable}{theorem}{repThmDecomposition}
\label{Decomposition}
For any 
$\bQ_{n,K,t}^* \in \cbQ_K$ and $\bQ_{1,K} \in \cbQ_{K}$,  
the difference between $\hat{\Psi}_{t,k}$ and $\Psi_{t,k}$ can be decomposed as
$
\hat{\Psi}_{t,k}-\Psi_{t,k} = M_{1,n_t}(\bQ_{1,K}) +  M_{2,n_t}(\bQ_{n,K,t}^*, \bQ_{1,K}),
$
where
\begin{align*}
M_{1,n_t}\left(\bQ_{1,K} \right) &= \frac{1}{n_t}\sum_{i=1}^{n_t}\Big[ D_k\left(\bQ_{1,K}\right)\left(O_i, C_i\right)-E\left[ D_k\left(\bQ_{1,K}\right)\left(O_i, C_i\right)|\cH(i) \right] \Big],\\
M_{2,n_t}\left(\bQ_{n,K,t}^*, \bQ_{1,K}\right) &= \frac{1}{n_t}\sum_{i=1}^{n_t}\Biggl\{ \biggl[
D_k\left(\bQ_{n,K,t}^*\right)\left(O_i,C_i\right)-E\left[ D_k\left(\bQ_{n,K,t}^*\right)\left(O_i,C_i\right)|\cH(i) \right] \biggr]\\
&- \biggl[ D_k\left(\bQ_{1,K}\right)\left(O_i, C_i\right)-E\left[ D_k\left(\bQ_{1,K}\right)\left(O_i, C_i\right)|\cH(i) \right] \biggr] \Biggr\}.
\end{align*}
\end{restatable}
The difference of TMLE $\hat\Psi_{t,k}$ and the true $\Psi_{t,k}$ is decomposed into $M_{1,n_t}$, an average of a martingale difference sequence, and a martingale process $M_{2,n_t}$. 
Under the strong positivity and stabilized conditional variance conditions below, the asymptotic normality of $M_{1,n_t}$ follows directly from the martingale central limit theorem (e.g., Theorem 2 in \cite{brown1971martingale}).

\begin{assumption}[Strong positivity]\label{assumption:strong_positivity}
    For any $i$, there exists $\zeta >0$ such that $g_{0,i} > \zeta$ and $g^*_{k,i} > \zeta$.
\end{assumption}

\begin{assumption}[Stabilization of the average of conditional variances] 
\label{assumption:stabilized_variance}
    There exists $\sigma^2_{t,k} \in \R^+$ such that 
    $\frac{1}{n_t} \sum_{i=1}^{n_t}Var\left[D_k(\bQ_{1,K})(O_i, C_i) \mid \cH(i)\right] \pto \sigma^2_{t,k}.
    $
\end{assumption}

\begin{theorem}[Asymptotic normality of the first term] 
\label{asynormal_first}
    Suppose that assumptions \ref{assumption:strong_positivity} and \ref{assumption:stabilized_variance} hold, then $\sqrt{n_t}M_{1,n_t}\left(\bQ_{1,K} \right) \dto N(0,\sigma^2_{t,k})$.
\end{theorem}
\looseness -1
We note that $M_{1,n_t}$ depends only on a fixed function $\bQ_{1,K} \in \cbQ_K$. For asymptotic normality of the TMLE, $\bQ_{1,K}$ need not equal the true conditional mean $\bQ_{0,K}$ as long as $\bQ_{n,K,t}^*$ converges to $\bQ_{1,K}$ satisfying the equicontinuity condition for the second term, $M_{2,n_t}$, as formalized below.

\begin{assumption}[Entropy integral of $\cbQ_{K}$]
\label{assumption:reasonable_covering_integral}
Let $\supcovering{\epsilon}$ denote the $\epsilon$-covering number in supremum norm of $\cbQ_K$. Define $\supintegral{\epsilon}:= \int_0^\epsilon \sqrt{\log \left(1+\supcovering{u} \right)} d u$ as the $\epsilon$-entropy integral in supremum norm of $\cbQ_K$.
Assume that $\supintegralTwo{\epsilon}{\cbQ_K}$ is finite and $\supintegralTwo{\epsilon}{\cbQ_K} \to 0$ as $\epsilon \to 0$.
\end{assumption}

\begin{assumption}[Convergence of $\bQ_{n,K,t}^*$]
\label{assumption:sigma_N_convergence}
For any $\bQ_K \in \cbQ_K$, define
\[
s_{n_t}\left(\bQ_K, \bQ_{1,K}\right):=\sqrt{\frac{1}{n_t} \sum_{i=1}^{n_t} E\left[\phi\left(\frac{|  D_k\left(\bQ_K\right)(O_i,C_i)-D_k\left(\bQ_{1,K}\right)(O_i,C_i)| }{\upsilon}\right) \bigm\vert C_i\right]},
\]
where $\phi(x):=e^x-x-1$, $\upsilon := 8(1-\zeta)/\zeta$.
Assume that $s_{n_t}\left(\bQ_{n,K,t}^*,\bQ_{1,K}\right) = O_P(\delta_{n_t})$ for some $\delta_{n_t} \to 0$.
\end{assumption}

\begin{theorem}[Equicontinuity]
\label{theorem:equicontinuity}
Under Assumptions 
\ref{assumption:strong_positivity},
\ref{assumption:reasonable_covering_integral}, and 
\ref{assumption:sigma_N_convergence},
$
M_{2,n_t}(\bQ_{n,K,t}^*, \bQ_{1,K}) = o_P\left(n_t^{-\frac{1}{2}}\right).
$
\end{theorem} 

This equicontinuity result demonstrates that, in addition to the asymptotic normality of $M_{1,n_t}$, the remainder term $M_{2,n_t}$ is negligible. The proof is given in Appendix~\ref{appendix:equicontinuity}, which depends on the equicontinuity result of the martingale process in general setup with sequential dependence (Appendix \ref{appendix:equicontinuity_general}).
We further provide a sufficient way to satisfy the equicontinuity condition by leveraging the recent work of \cite{van2023higher} that introduces a class of $m$-th order smoothness class of functions with $m$-th-order Radon-Nikodym derivative w.r.t. Lebesgue measure is càdlàg and of bounded variation ($m = 1,2,\ldots$), along with a higher-order spline Highly Adaptive Lasso (HAL) estimator (see Appendix \ref{appendix:HAL}).
We show that if $\bQ_{n,K,t}$ is estimated by higher-order HAL and converges to a $\bQ_{1,K}$ that is in the same class of functions, then the equicontinuity condition holds
(see Appendix~\ref{appendix:equicontinuity}).

\begin{restatable}{assumption}{repAssmpHigherOrderSpline}
\label{assumption:HigherOrderSpline}
    $\bQ_{1,K}$ and $\bQ_{n,K,t}$ are both from a class of functions that are represented by an $m$-th order smooth function whose $m$-th order Radon-Nikodym derivatives are càdlàg functions with bounded sectional variation norm ($m = 1, 2, \ldots $), and $\bQ_{n,K,t}$ is obtained by the $m$-th order Highly Adaptive Lasso estimator.
\end{restatable}
\begin{restatable}{corollary}{repCoroHighOrderSplineEquicontinuity}
\label{corollary:HighOrderSplineEquicontinuity}
Suppose Assumptions 
\ref{assumption:strong_positivity} and
\ref{assumption:HigherOrderSpline} hold. 
Then $M_{2,n_t}(\bQ_{n,K,t}^*, \bQ_{1,K}) = o_P\left(n_t^{-\frac{1}{2}}\right)$.
\end{restatable}

\begin{theorem}[Asymptotic normality of TMLE]
\label{asynormal_tmle}
    Suppose Assumptions 
    \ref{assumption:strong_positivity},
    \ref{assumption:stabilized_variance},
    \ref{assumption:reasonable_covering_integral}, and 
    \ref{assumption:sigma_N_convergence} hold, or alternatively, Assumptions 
    \ref{assumption:strong_positivity},
    \ref{assumption:stabilized_variance}, and 
    \ref{assumption:HigherOrderSpline}
    hold, then
    $
    \sqrt{n_t}(\hat{\Psi}_{t,k}-\Psi_{t,k})\dto N(0,\sigma^2_{t,k}).
    $
\end{theorem}

The asymptotic normality of TMLE is obtained directly from Theorems \ref{asynormal_first} and \ref{theorem:equicontinuity}.
We estimate the standard error by $\hsigma2_{t,k} := \frac{1}{n_t} \sum_{i=1}^{n_t}\left\{\frac{g^*_{k, i}\left(A_i \mid C_i\right)}{g_{0, i}\left(A_i \mid C_i\right)}\left[Y_{i, K}-\bQ_{n, K, t}^*\left(A_i, W_i\right)\right]\right\}^2$ (which converges in probability to $\sigma^2_{t,k}$), with the confidence interval  
$\left[\hat{\Psi}_{t,k}-z_{1-\alpha/2}\sqrt{\hsigma2_{t,k}/n_t}, \, \hat{\Psi}_{t,k}+z_{1-\alpha/2}\sqrt{\hsigma2_{t,k}/n_t} \right]$.

\section{Simulation Studies}
\label{section:simulations}

\textbf{Simulation setup.} 
We present simulation studies to evaluate the performance of our proposed adaptive design and estimation methods.\footnote{The code can be found at https://github.com/WenxinZhang25/TMLE-OSLAD.}
We consider a binary treatment $A \in \{0,1\}$, a univariate baseline covariate $W$ following a uniform distribution $U(-4,4)$, four surrogate outcomes $Y_k$ measured at follow-up times $k = 1, 2, 3, 4$, respectively, and a final outcome $Y_5$. 
Each participant's outcome $Y_k$ is generated by its conditional mean $E(Y_k|A,W)$ plus a random error following the normal distribution ($\sim N(0,1)$).
For detailed information on the simulations, we refer readers to Appendix \ref{appendix:CATE} for CATE estimation, Appendix \ref{appendix:details_of_CARA} for treatment randomization mechanism and 
Appendix \ref{appendix:additional_results} for data generating distributions.
We evaluate the performance of various designs in two scenarios.
In Scenario 1 (Figure \ref{fig:scenario1}), the optimal dynamic treatment rule based on the final outcome is more aligned with that defined in terms of later versus earlier surrogates. 
This implies that, despite requiring some time to measure and thus delaying adaptation in randomization probabilities, later surrogates are more useful in guiding treatment assignment probabilities.
In Scenario 2 (Figure \ref{fig:scenario2}), all surrogate outcomes have the same ODTR as that of the final outcome.
Notably, early surrogates have larger absolute CATE compared to later surrogates, suggesting that early surrogates are more sensitive to different treatments and potentially serve as superior indicators of the most beneficial treatment compared to the final outcome. 
In other words, in this scenario, earlier surrogates have the joint advantage of both timeliness and improved sensitivity for detecting treatment effect heterogeneity.  
\begin{figure}[H]
    \centering
    \begin{subfigure}[b]{0.4\textwidth}
        \includegraphics[width=\textwidth]{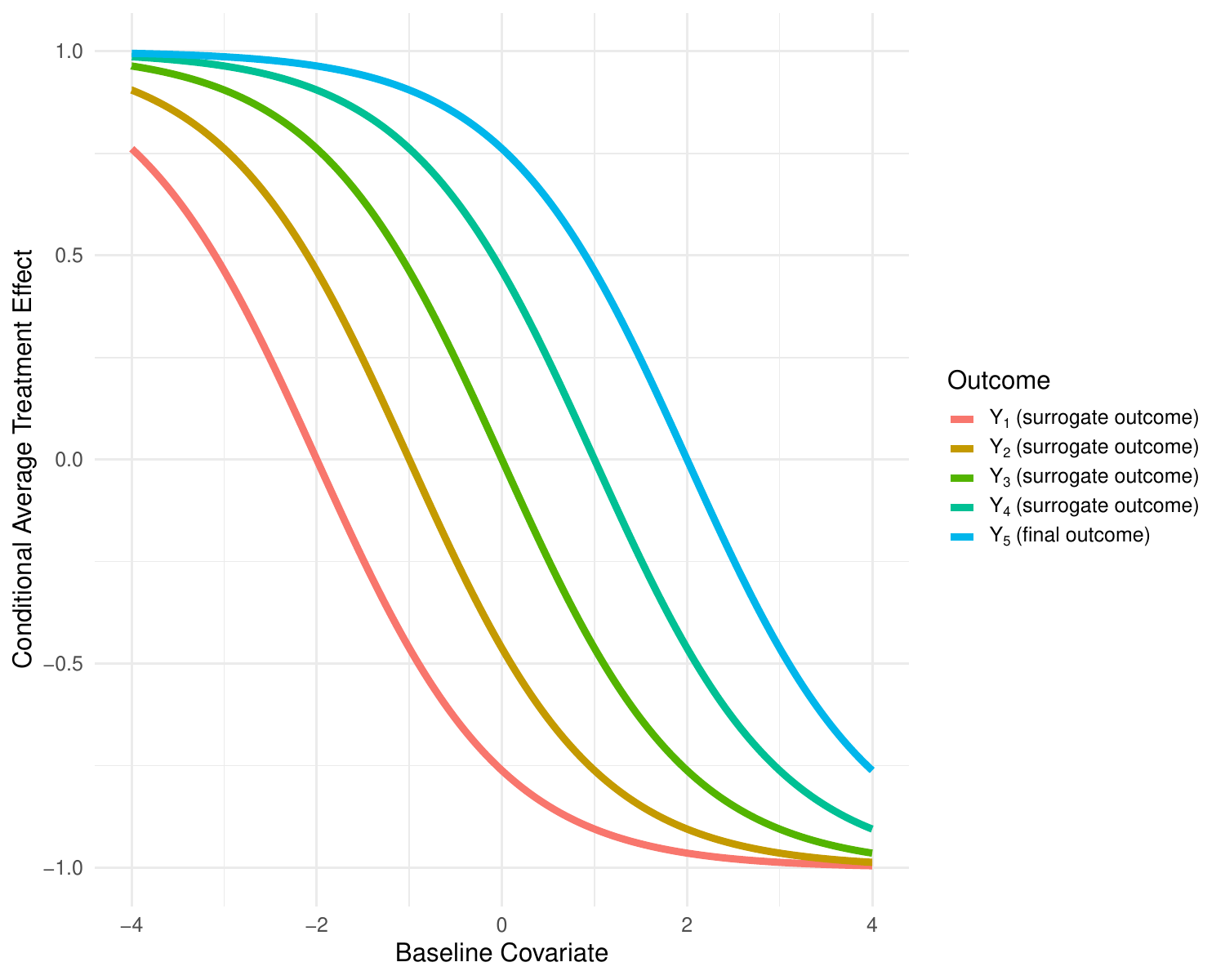}
        \caption{}
        \label{fig:scenario1}
    \end{subfigure}
    \hspace{0.05\textwidth}
    \begin{subfigure}[b]{0.4\textwidth}
        \includegraphics[width=\textwidth]{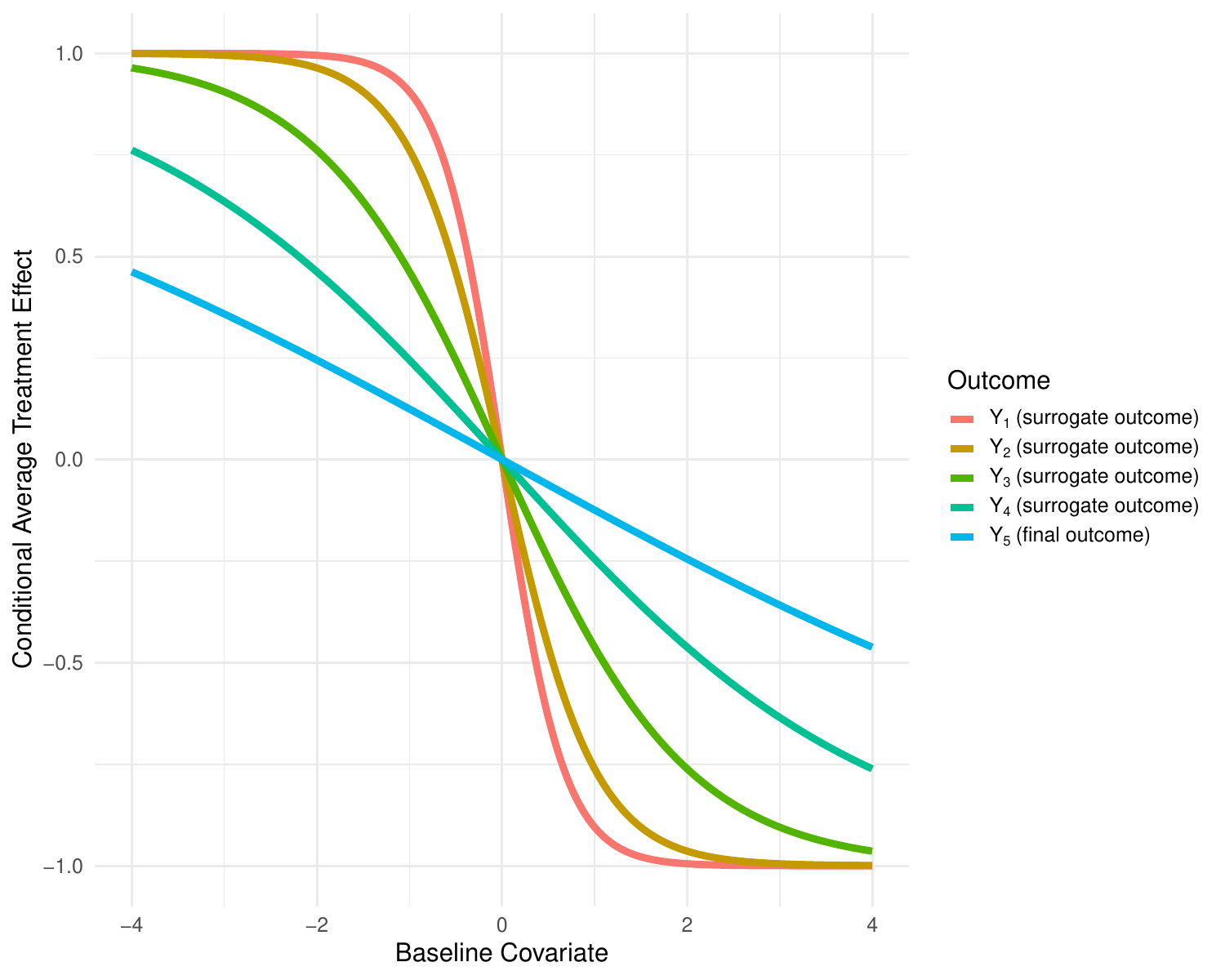}
        \caption{}
        \label{fig:scenario2}
    \end{subfigure}
    \caption{
    Conditional Average Treatment Effect (CATE) functions for the simulation scenarios: (a) Scenario 1, where later surrogates have higher utility; (b) Scenario 2, where earlier surrogates have higher utility. Each curve depicts the CATE function for a candidate outcome, $Y_1$ through $Y_5$.
    }
    \label{fig:scenarios}
\end{figure}

\looseness -1
We compare seven distinct designs with varying treatment randomization mechanisms over 500 Monte Carlo runs: (a) a simple non-adaptive randomized controlled trial (RCT) design, which allocates two treatments with equal probability at each time point; (b) five adaptive designs with randomization probabilities updated in response to an \textit{a priori} fixed choice of either a surrogate outcome $Y_k$ ($k$ = 1, ..., 4) or the final outcome $Y_K$ ($K=5$); (c) the proposed TMLE-OSLAD adaptive design that dynamically evaluates and selects which outcome (among $Y_1, \ldots, Y_5$) is used to update treatment randomization probabilities at each time point.
For each design, we run experiments for $T = 50$ time points with 50 subjects enrolled per time point. The maximal treatment assignment probability is capped at $90\%$.

\textbf{Surrogate evaluation and TMLE performance.}
We first evaluate if TMLE-OSLAD can consistently identify the oracle surrogate, which we define as the candidate with the highest true $\Psi_{t,k}$ at time $t$ among all $Y_k$ (in other words, the surrogate that if used would result in the largest expected primary outcomes for all enrolled participants in the experiment). In particular, if $\Psi_{t,K}$ is the largest, then the primary outcome $Y_K$ itself is the oracle.
Table \ref{tab:truePsi} presents the average of true $\Psi_{t,k}$ values at illustrative time points 11, 21, 31, 41, and 50 for both scenarios across Monte Carlo runs. The expected outcome of $Y_5$ under the non-adaptive randomization is also reported, denoted by $\Psi_{t,RCT}$.
In Scenario 1, the oracle surrogate gradually shifts towards later surrogates as time elapses and more data from these surrogates become available to guide the adaptation of treatment randomization probabilities. 
Initially, surrogates $Y_3$ and $Y_4$ have the highest value at times 11 and 21, reflecting the balance they achieve between earlier availability and degree to which they reflect the true optimal individualized treatment assignment. While these surrogates provide benefit in the short term, they do not consistently predict the correct optimal treatment for the primary outcome for all individuals, leading them to be eventually outperformed by designs that utilize the true primary outcome $Y_5$ as additional follow-up time is accrued.
This is expected, as the ultimate assessment of a surrogate’s effectiveness as follow-up time extends hinges on its ability to predict the optimal personalized treatment for the primary outcome. 
In Scenario 2, as expected, surrogate $Y_1$ remains the most useful (i.e. is the oracle surrogate) throughout, due to its early availability, and the fact that it correctly identifies the true optimal individualized treatment for the long-term primary outcome. In contrast, later surrogates $Y_4$ and $Y_5$ have lower scores. This disparity can be attributed to the fact that while $Y_4$ and $Y_5$ will eventually serve equally well to identify the optimal personalized treatment as time since trial initiation goes to infinity, their utility is limited under finite trial durations because they have been observed for fewer participants. 
Across both scenarios, the TMLE estimates for these estimands exhibit low bias and variance, and their confidence intervals obtain nominal coverage (see Tables \ref{tab:bias}, \ref{tab:variance}, \ref{tab:coverage} in  Appendix~\ref{appendix:additional_results}).

\begin{table}[h]\centering
    \caption{True target estimand $\Psi_{t,k}$ had participants enrolled from $1, \cdots, t-K$ were adaptively treated by a candidate surrogate $Y_k$ in Scenarios 1 and 2 ($t = 11,21,31,41,50$). Expected outcomes under oracle surrogate are shown in bold.}
    \label{tab:truePsi}
    \footnotesize
    \begin{minipage}[t]{0.48\linewidth}
        \centering\footnotesize (a) Scenario 1\par\vspace{1mm}
        \resizebox{0.96\linewidth}{5.2cm}{
        \begin{tabular}{*{7}{c}}
            \toprule
           $t$ & $\Psi_{t,RCT}$ & $\Psi_{t,1}$ &  $\Psi_{t,2}$ & $\Psi_{t,3}$ & $\Psi_{t,4}$ & $\Psi_{t,5}$ \\
            \midrule
            11 & 0.000 & -0.016 & 0.053 & \textbf{0.078} & 0.066 & 0.033 \\
            21 & 0.000 & -0.008 & 0.083 & 0.145 & \textbf{0.177} & 0.173 \\
            31 & 0.000 & -0.005 & 0.091 & 0.161 & 0.203 & \textbf{0.208} \\
            41 & 0.000 & -0.002 & 0.094 & 0.169 & 0.215 & \textbf{0.225} \\
            50 & 0.000 & -0.001 & 0.096 & 0.173 & 0.222 & \textbf{0.234} \\
            \bottomrule
        \end{tabular}
    }
    \vfill
    
\end{minipage}\hfill
    \begin{minipage}[t]{0.48\linewidth}
        \centering\footnotesize (b) Scenario 2\par\vspace{1mm}
        \resizebox{0.96\linewidth}{5.2cm}{
        \begin{tabular}{*{7}{c}}
            \toprule
           $t$ & $\Psi_{t,RCT}$ & $\Psi_{t,1}$ &  $\Psi_{t,2}$ & $\Psi_{t,3}$ & $\Psi_{t,4}$ & $\Psi_{t,5}$ \\
            \midrule
            11 & 0.000 & \textbf{0.073} & 0.057 & 0.039 & 0.019 & 0.004 \\
            21 & 0.000 & \textbf{0.087} & 0.081 & 0.073 & 0.061 & 0.038 \\
            31 & 0.000 & \textbf{0.090} & 0.086 & 0.081 & 0.073 & 0.052 \\
            41 & 0.000 & \textbf{0.091} & 0.089 & 0.085 & 0.078 & 0.060 \\
            50 & 0.000 & \textbf{0.092} & 0.090 & 0.087 & 0.081 & 0.065 \\
            \bottomrule
        \end{tabular}
        }
        \vfill
    \end{minipage}\hfill
\end{table}

\looseness -1
Figures \ref{fig:sim1_OSLAD_selection}--\ref{fig:sim2_OSLAD_selection} show the frequency with which each candidate design is selected by TMLE-OSLAD over time in Scenarios 1 and 2, respectively, while Figures \ref{fig:sim1_OSLAD_example}--\ref{fig:sim2_OSLAD_example} present a single-run example illustrating the selection mechanism. Without prior knowledge of relationships between surrogate and primary outcomes, 
the TMLE-OSLAD procedure is able to identify the oracle surrogate in each setting.
In Scenario 1, where later surrogate outcomes serve as better proxies for the final outcome
when determining optimal individualized treatments, TMLE-OSLAD progressively shifts its selections toward these later surrogates, ultimately favoring $Y_4$ and $Y_5$ most frequently. 
In contrast, in Scenario 2 where the earliest surrogate $Y_1$ remains the oracle surrogate throughout,
TMLE-OSLAD predominantly selects early surrogates, with $Y_1$ most frequently selected, followed by $Y_2$ and $Y_3$, while $Y_4$ and $Y_5$ are rarely chosen. 
\begin{figure}[H]
    \centering
    \begin{subfigure}[t]{0.4\textwidth}
        \centering
        \includegraphics[width=\textwidth]{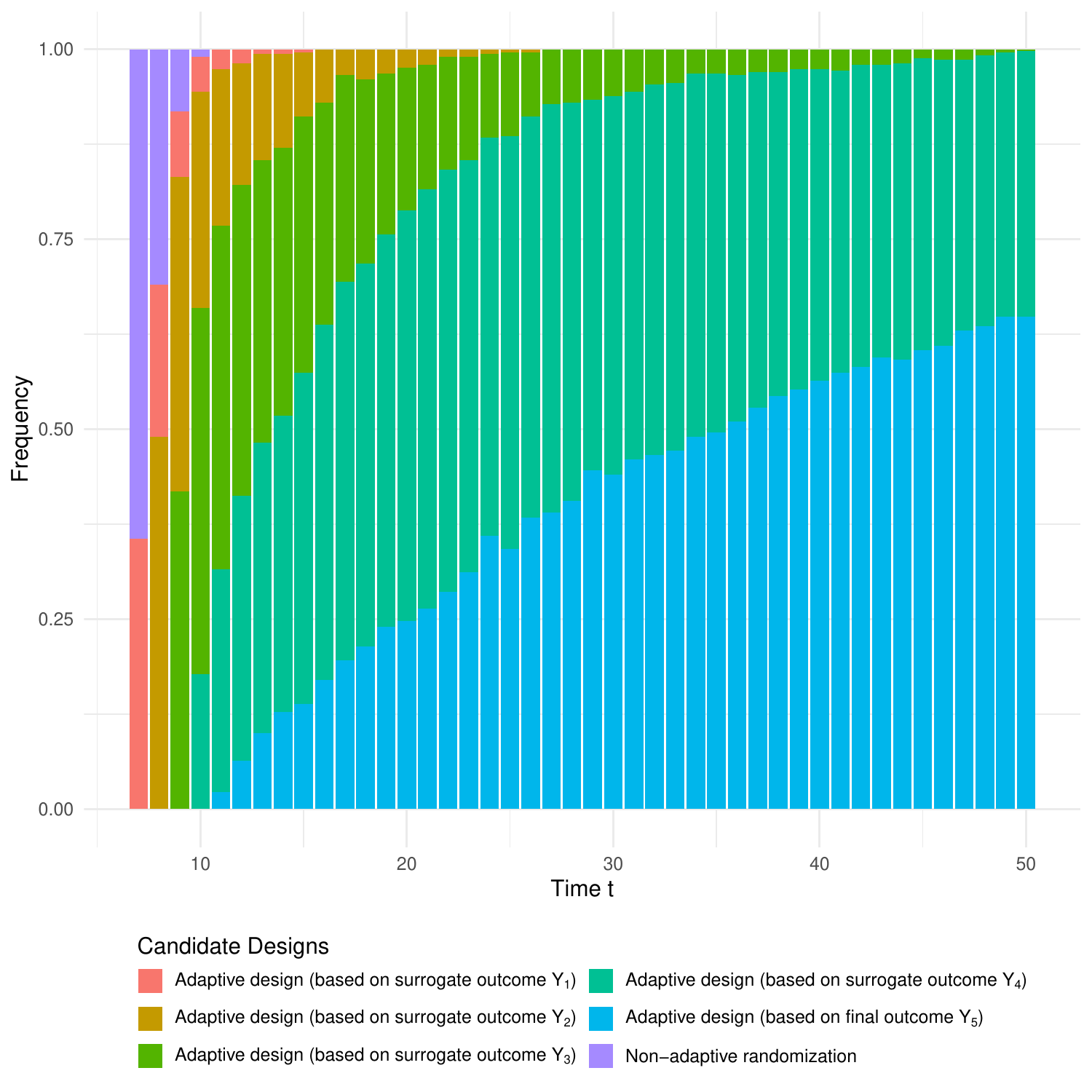}
        \caption{} 
        \label{fig:sim1_OSLAD_selection}
    \end{subfigure}
    \begin{subfigure}[t]{0.4\textwidth}
        \centering
        \includegraphics[width=\textwidth]{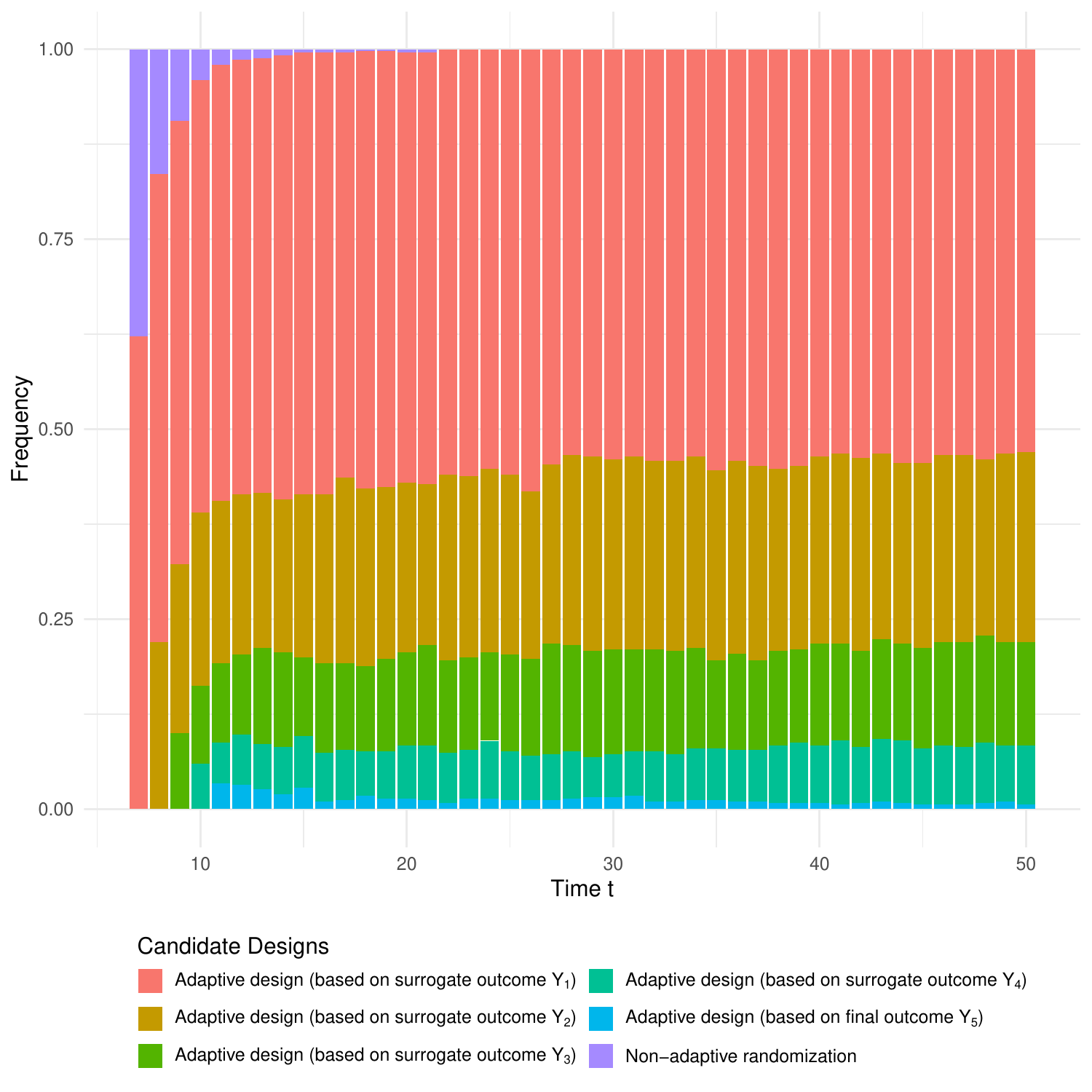}
        \caption{} 
        \label{fig:sim2_OSLAD_selection}
    \end{subfigure} \\
    \begin{subfigure}[t]{0.4\textwidth}
        \centering
        \includegraphics[width=\textwidth]{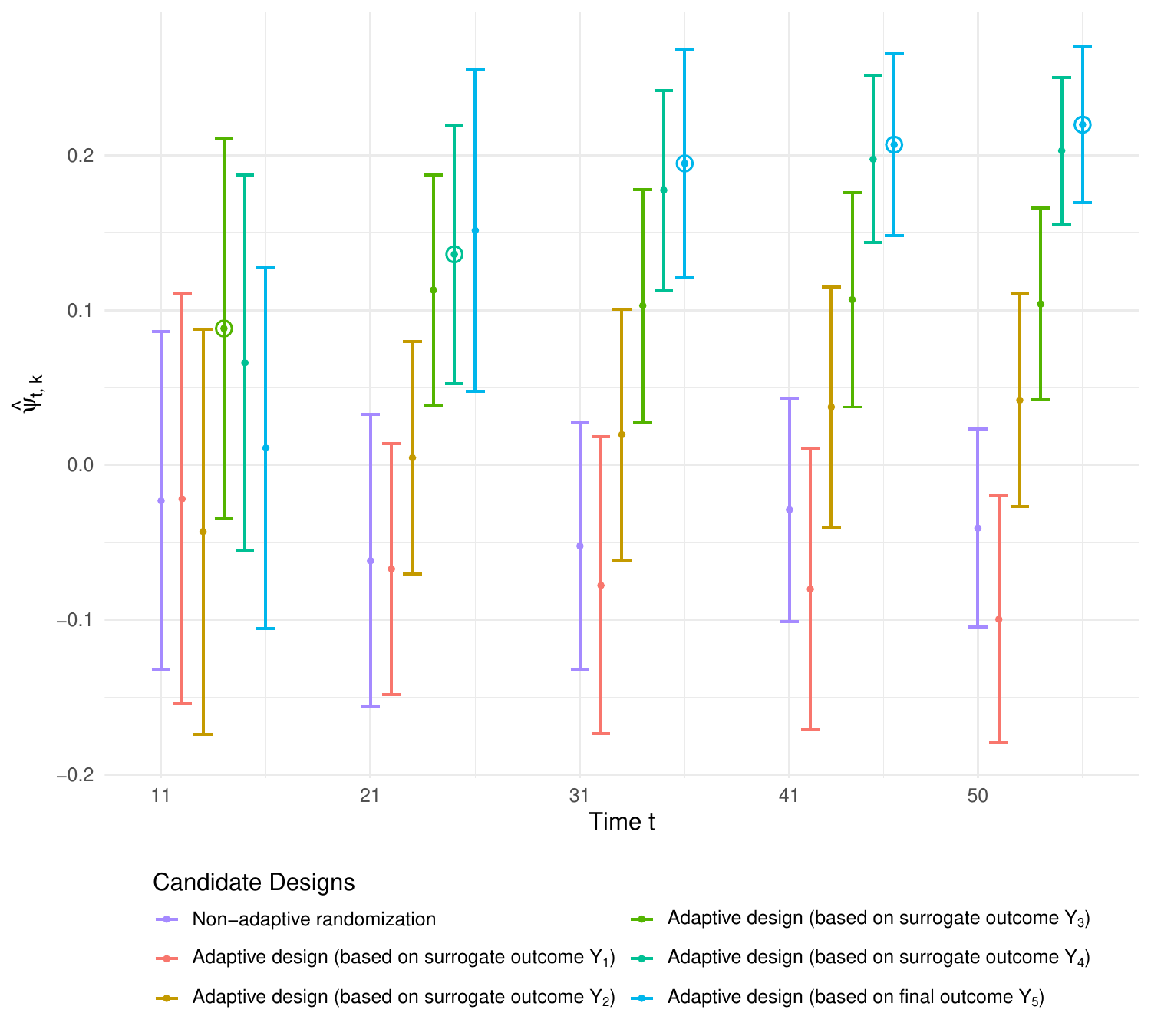}
        \caption{} 
        \label{fig:sim1_OSLAD_example}
    \end{subfigure}
    \begin{subfigure}[t]{0.4\textwidth}
        \centering
        \includegraphics[width=\textwidth]{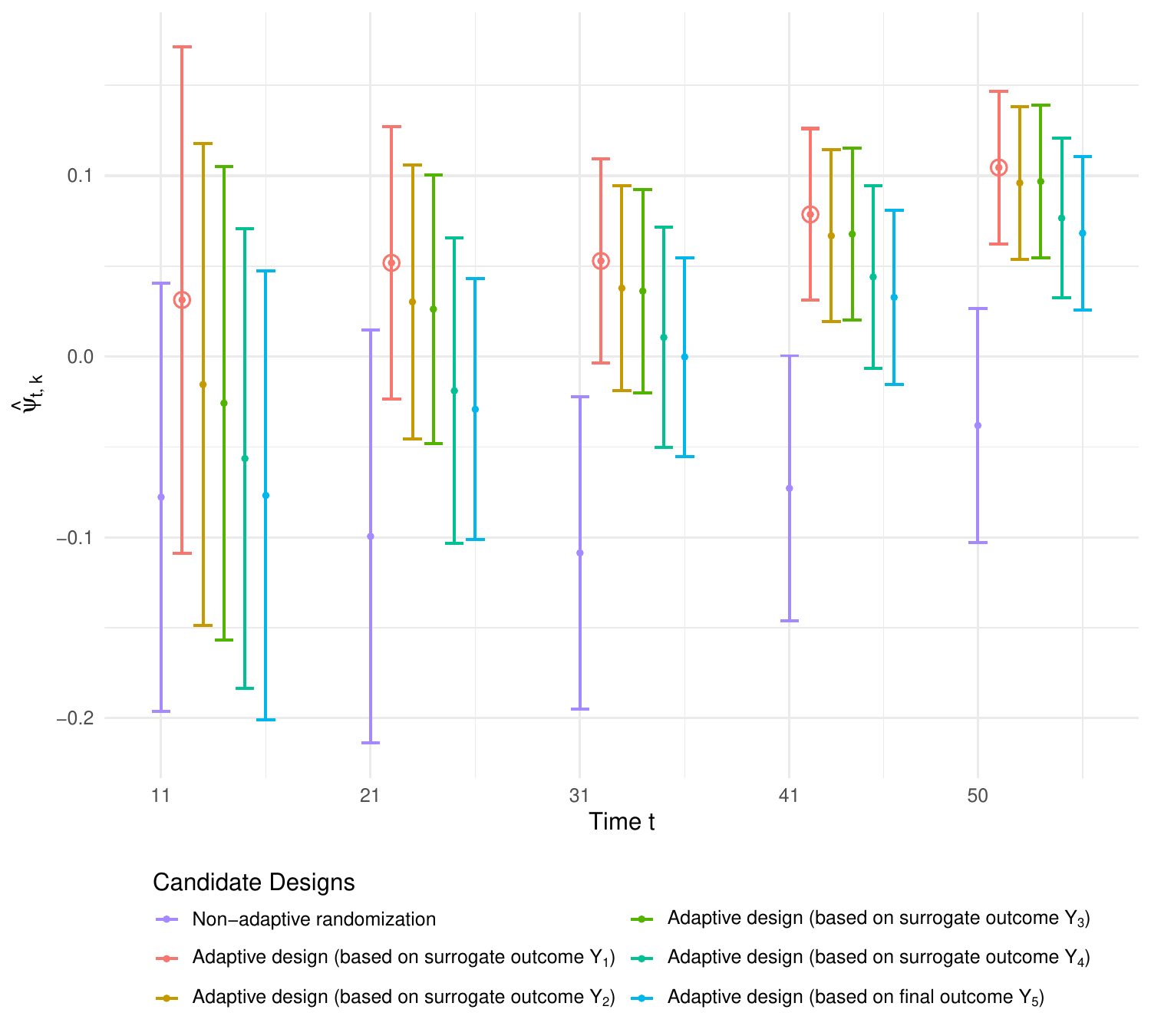}
        \caption{} 
        \label{fig:sim2_OSLAD_example}
    \end{subfigure}
    \caption{Frequency of design selection by TMLE-OSLAD at each time point. 
    Panels (a) and (b) correspond to Scenario~1 and Scenario~2, respectively. 
    Each color represents one candidate design, including adaptive designs based on outcomes $Y_k$ ($k \in \{1,\ldots,5\}$) and the non-adaptive design. The height of each bar indicates the proportion of simulation runs 
    in which that design attains the highest lower bound of the 95\% confidence interval for $\hat{\psi}_{t,k}$ among all candidate designs at time $t$, reflecting the selection mechanism of TMLE-OSLAD.  Panels (c) and (d) display the selection mechanism at illustrative time points  ($t = 11, 21, 31, 41, 50$) in a single simulation run under Scenarios~1 and~2, respectively. Colored points show the TMLE point estimates of $\psi_{t,k}$ (y-axis) for evaluating each candidate design at each time point (x-axis). Vertical bars indicate the corresponding 95\% Wald confidence intervals. At each time point, the selected design (defined as the one with the largest lower confidence bound among all candidate designs) is highlighted by a circular outline. }\label{fig:SLselection}
\end{figure}

\textbf{Performance of benefiting participants.}
We use $d_{0,K}: \cW \to \cA$ to denote the true optimal individualized treatment rule for $Y_K$, which is defined as $d_{0,K}(W) := I(\bQ_{0,K}(1,W)-\bQ_{0,K}(0,W)>0)$. This rule assigns to each individual the treatment that would maximize their expected final outcome, given their covariates.
To evaluate the extent to which each design assigns treatments that match the underlying oracle optimal rule, we define probability (\%) of not assigning optimal individualized treatments as $\Pnonopt_{t} := \frac{1}{E(t)}\sum_{i: t_i = t}I\left(A_i \ne d_{0,K}(W_i)\right)$,
where $E(t)$ is the number of participants enrolled at time $t$. 
Similarly, $\Pnonopt_{t,RCT}$ and $ \Pnonopt_{t,SL}$ denote this probability for the non-adaptive RCT and the proposed TMLE-OSLAD adaptive design, respectively. Table \ref{tab:incorrectTreatment} in Appendix~\ref{appendix:additional_results} compares these suboptimal treatment assignment probabilities across different designs.
We also define the average regret for participants enrolled at time $t$ within an adaptive design as
$
R_{t} := \frac{1}{E(t)}\sum_{i: t_i = t}I\left(A_i \ne d_{0,K}(W_i)\right) \times |\bQ_{0,K}(1,W_i)-\bQ_{0,K}(0,W_i)|.
$
Here, each participant \( i \) contributes the absolute difference in expected final outcomes between the optimal and actual treatments. A lower $R_{t}$ indicates a better design that brings benefit in $Y_K$ for participants enrolled at time $t$.
We use \( R_{t,k} \) to denote the average regret within an adaptive design responsive to \( Y_k \) at time \( t \), while \( R_{t,\textit{RCT}} \) and \( R_{t,\textit{SL}} \) represent the average regret for the RCT design and TMLE-OSLAD design, respectively.
Table \ref{tab:regret} reports average regret across simulations at \( t = 11, 21, 31, 41, \) and \( 50 \), with full trajectories shown in Figure~\ref{fig:regret}.

In Scenario 1, where later surrogates are superior at identifying the optimal individualized treatment, the adaptive design responsive to $Y_5$, as expected, consistently demonstrates the lowest regret across the experiment. 
In contrast, in Scenario 2, the regret within the adaptive design that responds to $Y_1$ remains consistently the lowest. 
There is a noticeable trend that the adaptive design in response to the final outcome $Y_5$ exhibits a slower convergence toward the regret of the oracle surrogate $Y_1$, with a substantial discrepancy in regret that persists even at $t=50$.
Of course, in practice, the oracle surrogate is not known \textit{a priori}, motivating the use of the TMLE-OSLAD procedure to evaluate and select among candidate surrogates in adaptive design.
Across both scenarios, the proposed adaptive design demonstrates stable efficacy in approximating the minimal achievable regret compared to other designs. 
Its regret ranks second lowest starting at $t = 23$ in Scenario 1, and matches the lowest regret trajectory of $Y_1$ following $t = 21$ in Scenario 2. 
This robust performance reflects the advantage of the real-time evaluation and selection mechanism in TMLE-OSLAD, which continuously assesses each candidate surrogate-guided adaptive design, selects the most advantageous one to guide treatment randomization in subsequent stages.

\textbf{Post-experiment inference for other causal estimands}.
Table \ref{tab:estimation_at_end} reports TMLE performance in estimating the additional causal estimands introduced in Section \ref{section:inference_at_the_end}, including the average treatment effect and the mean final outcome under the estimated optimal dynamic treatment rules that maximize surrogate outcomes and the primary outcome.
The mean final outcome $Y_5$ under the estimated optimal dynamic treatment rule for $Y_5$ is estimated using five-fold CV-TMLE.
Across all estimands, TMLE provides robust performance with low bias and variance, and nominal coverage of its confidence intervals.
\begin{table}[h]\centering
    \caption{Performance of TMLEs for other causal estimands at the end of experiments in Scenarios 1 and 2 across Monte Carlo simulations.}
    \label{tab:estimation_at_end}
    \setlength{\extrarowheight}{2pt}
    \makebox[\textwidth]{\footnotesize (a) Scenario 1} 
    \resizebox{0.96\textwidth}{!}{
    \footnotesize
    \renewcommand{\arraystretch}{1.16}
    \begin{tabular}{*{6}{c}}
        \toprule
        Estimand & Notation & Truth & Bias ($\times 10^{-3}$) & Var ($\times 10^{-3}$) & Coverage  (\%)\\
        \midrule
        Average treatment effect on the final outcome $Y_5$ & $\psi^{\text{ATE}}_{\Tend} $ & 0.468 & 3.94 & 3.92 & 95.8 \\
        Expected final outcome $Y_5$ under the estimated ODTR based on surrogate $Y_1$ & $\psi_{\Tend}^{d^*_{n,1}}$ & 0.011 & -1.01 & 2.85 & 94.4 \\
        Expected final outcome $Y_5$ under the estimated ODTR based on surrogate $Y_2$ & $\psi^{d^*_{n,2}}_{\Tend}$ & 0.130 & -0.87 & 2.02 & 93.8 \\
        Expected final outcome $Y_5$ under the estimated ODTR based on surrogate $Y_3$ & $\psi^{d^*_{n,3}}_{\Tend}$ & 0.235 & -0.08 & 1.62 & 93.6 \\
        Expected final outcome $Y_5$ under the estimated ODTR based on surrogate $Y_4$ & $\psi^{d^*_{n,4}}_{\Tend}$ & 0.312 & -1.01 & 0.89 & 94.6 \\
        Expected final outcome $Y_5$ under the estimated ODTR based on $Y_5$ & $\psi^{d^*_{n,5}}_{\Tend}$ & 0.340 & 0.42 & 0.72 & 95.6 \\ \bottomrule
    \end{tabular}
    }
    \vspace{0.3em} 
    
    \makebox[\textwidth]{\footnotesize (b) Scenario 2} 
    \resizebox{0.96\textwidth}{!}{
    \footnotesize
    \renewcommand{\arraystretch}{1.16}
    \begin{tabular}{*{6}{c}}
        \toprule
        Estimand & Notation & Truth & Bias ($\times 10^{-3}$) & Var ($\times 10^{-3}$) & Coverage (\%)\\
        \midrule
        Average treatment effect on the final outcome $Y_5$ & $\psi_{\Tend}^{\text{ATE}}$ & 0.000 & -1.17 & 3.99 & 94.8\\
        Expected final outcome $Y_5$ under the estimated ODTR based on surrogate $Y_1$ & $\psi^{d^*_{n,1}}_{\Tend}$ & 0.120 & -0.25 & 0.57 & 93.8 \\
        Expected final outcome $Y_5$ under the estimated ODTR based on surrogate $Y_2$ & $\psi^{d^*_{n,2}}_{\Tend}$ & 0.120 & -0.32 & 0.59 & 94.2 \\
        Expected final outcome $Y_5$ under the estimated ODTR based on surrogate $Y_3$ & $\psi^{d^*_{n,3}}_{\Tend}$ & 0.120 & -0.57 & 0.61 & 93.6 \\
        Expected final outcome $Y_5$ under the estimated ODTR based on surrogate $Y_4$ & $\psi^{d^*_{n,4}}_{\Tend}$ & 0.119 & -0.38 & 0.60 & 94.6 \\
        Expected final outcome $Y_5$ under the estimated ODTR based on $Y_5$ & $\psi^{d^*_{n,5}}_{\Tend}$ & 0.114 & 0.31 & 0.81 & 93.0 \\ \bottomrule
    \end{tabular}
    }
\end{table}
\section{Application}
\label{section:real_sim}

We conduct an extended simulation study inspired by the Adaptive Strategies for Preventing and Treating Lapses of Retention in HIV Care (ADAPT-R) trial \citep{geng2023adaptive}, a sequential multiple assignment randomized (SMART) trial designed to evaluate adaptive strategies for improving retention in care among individuals initiating HIV treatment. We demonstrate our proposed TMLE-OSLAD framework in this context, showing how it utilizes effective surrogates to benefit participants and evaluating TMLE performance in this setup.

\textbf{Simulation setup}.
\label{subsection:realSim_setup}
We focus on Phase 2 of the ADAPT-R trial, where participants who had a lapse in care in Phase 1 (defined as missing a clinic visit by 14 or more days) were re-randomized to a re-engagement strategy: either a peer navigation strategy (Navigator) or a combined Short Message Service reminder of appointments and conditional cash transfer (SMS+CCT). 
We discretize the study timeline into 50-day stages and define outcomes as the proportion of time in care during which a participant is engaged in across five windows, from the first day of receiving the re-engagement strategy to five specific time points: 50, 100, 150, 200, and 250 days. 
These outcomes are labeled as $Y_1$, $Y_2$, $Y_3$, $Y_4$, and $Y_5$, respectively, with $Y_5$ (proportion of time in care in 250 days) the primary outcome, and $Y_1$ through $Y_4$ the surrogate outcomes. 
We consider two covariates in the conditioning set of heterogeneous treatment effect: (a) initial Phase 1 treatment arm: SMS, CCT, or standard of care (SOC) and (b) time from Phase 1 enrollment to initial lapse in care. We used data from 215 participants who were assigned either Navigator or SMS+CCT in Phase 2 after a lapse in care in Phase 1 and fit a generalized linear model to obtain the counterfactual outcomes for simulation.
The CATE functions for the surrogate outcomes and the final outcome are presented in Figure~\ref{fig:CATE-ADAPT-R} (see Appendix \ref{appendix:real_sim} with more details). 
We implement a series of CARA designs as outlined in Section~\ref{section:simulations}, including a simple sequential RCT, five adaptive designs that use a fixed choice from 
$Y_1$ through $Y_5$, and the proposed TMLE-OSLAD adaptive design that evaluates and selects the best surrogate in real time. 
We also compared their performance with a naive policy that employs the latest available outcome at each time point to adapt treatment randomization probabilities.
In each simulation, participants arrive following the enrollment schedule in the original trial with their observed covariates, and treatment assignments are generated sequentially according to the type of the simulated design and the covariate, treatment and outcome data of previously enrolled participants, then the outcomes are generated in response to the assigned treatment. This construction ensures that each observation is conditionally independent given the past. We simulate each design 500 times to assess the performance of TMLEs and different adaptive designs.

\textbf{Results.}
\label{subsection:realSim_results}
\looseness -1
We compare regret over time obtained from implementing different design strategies (Figure \ref{fig:realSim_abs_regret} in Appendix \ref{appendix:real_sim}). The adaptive design using the earliest outcome $Y_1$ outperforms those based on later outcomes or the naive policy from $t=2$ through $t=11$.
After $t=12$, adaptive designs using later outcomes start achieving smaller regret for new subjects than the earliest outcome, as the underlying true CATE functions of later surrogates are closer to that for the final outcome and are estimated more precisely as data accumulate. However, this delayed improvement does not compensate the overarching benefit provided by the earliest surrogate, which delivers prompt benefits to participants by adopting surrogate-guided optimal personalized treatments learned from earlier and more abundant data.
For the naive policy, although it initially matches the best performing adaptive design based on $Y_1$, its aggressive shift to the most recent outcomes leads it to behave similarly to the suboptimal later-outcome designs, resulting in higher regret.
In contrast, TMLE-OSLAD predominantly selects the earliest surrogates ($Y_1$ and $Y_2$) over time (Figure \ref{fig:realSim_SLselectionA}). It yields a higher expected final outcome (Figure \ref{fig:realSim_SLselectionB}) than designs that adhere to later outcomes without ongoing evaluation and approaches the performance of the oracle design using the earliest surrogate.
TMLE results for the proposed estimands during the experiment and for additional estimands after the experiment are reported in Appendix~\ref{appendix:real_sim}, showing robust performance.
\begin{figure}[h]
    \centering
    \begin{subfigure}[b]{0.42\textwidth}
        \centering
        \includegraphics[width=\textwidth]{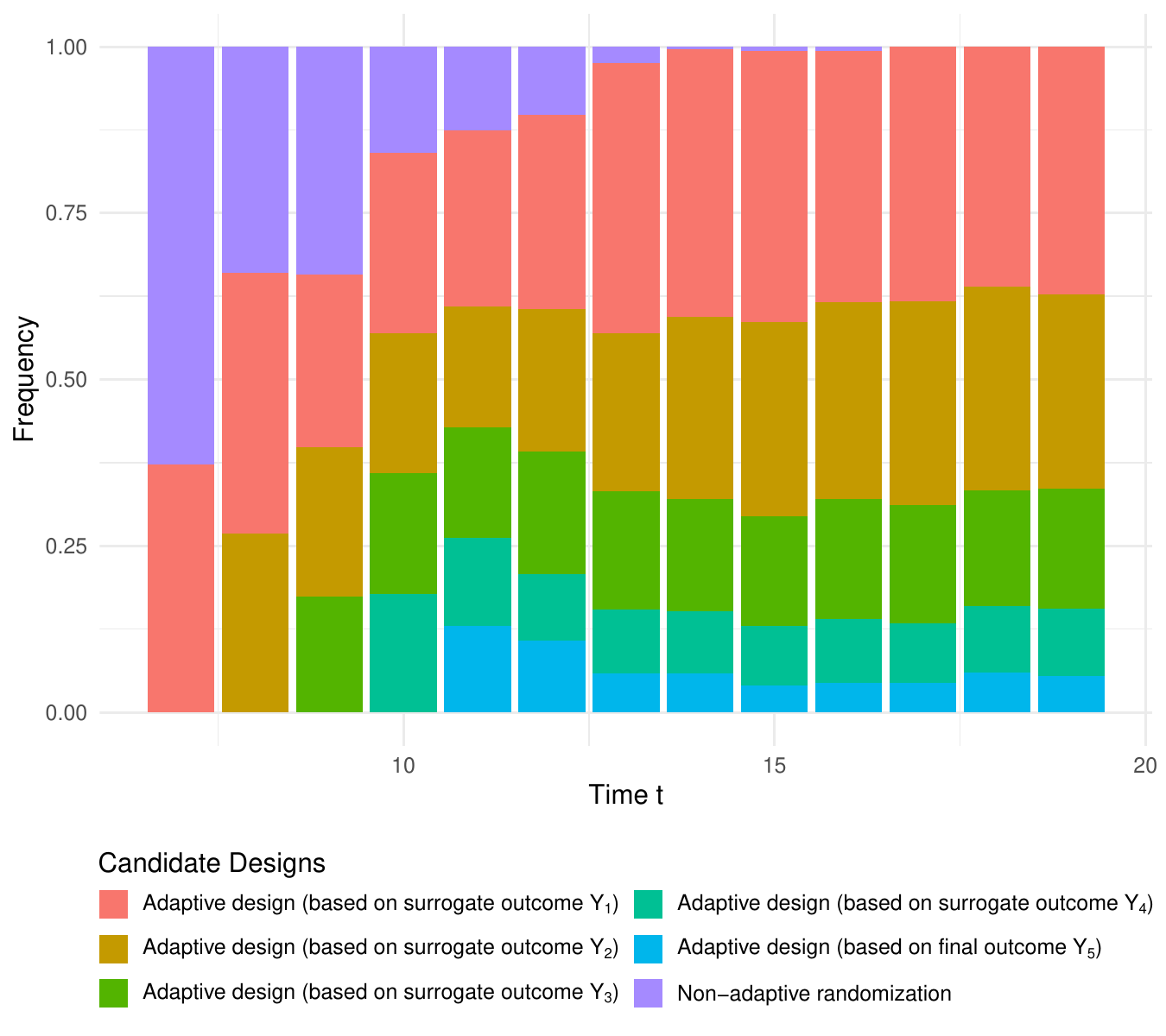}
        \caption{} 
        \label{fig:realSim_SLselectionA}
    \end{subfigure}
    \hspace{0.05\textwidth}
    \begin{subfigure}[b]{0.46\textwidth}
        \centering
        \includegraphics[width=\textwidth]{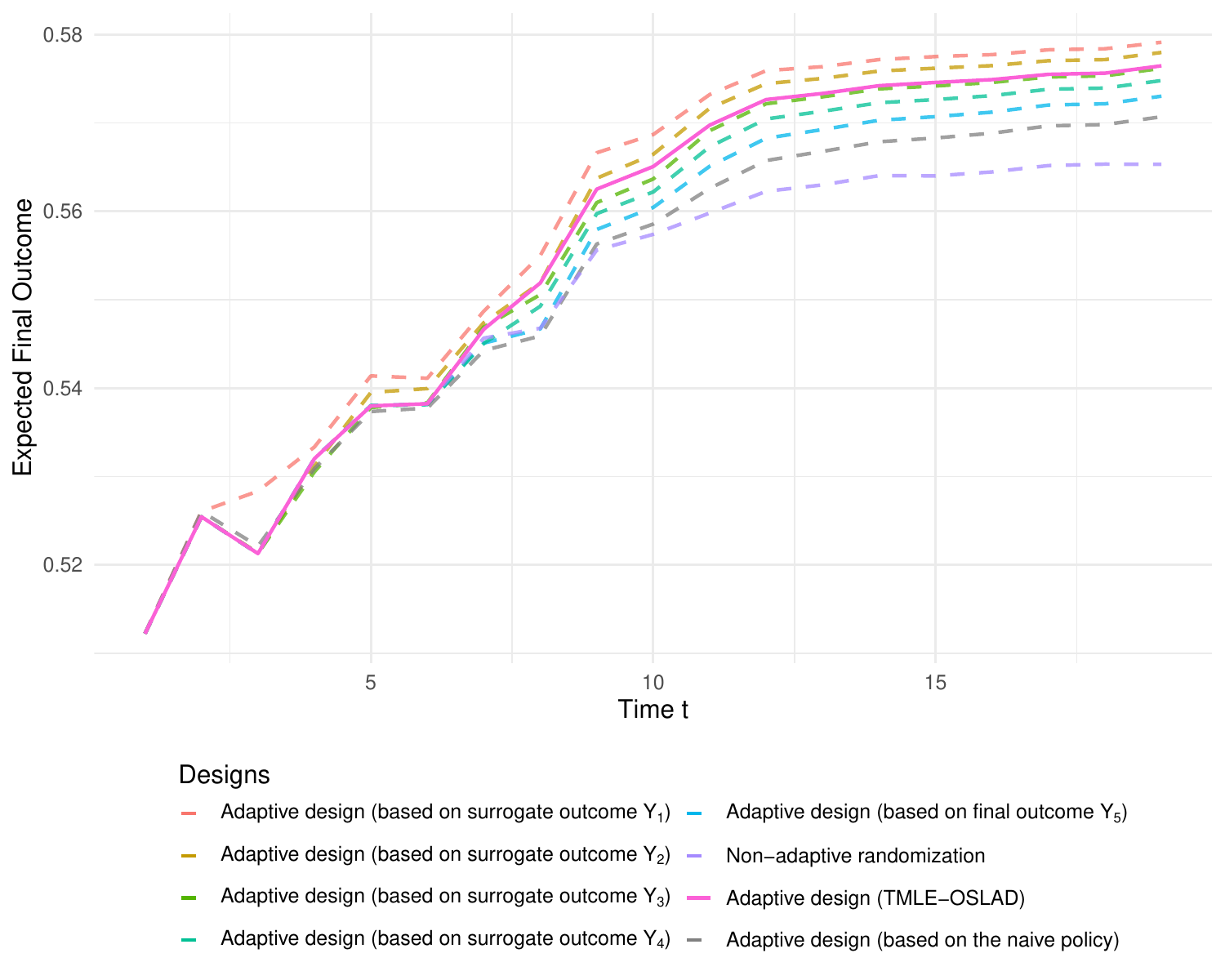}
        \caption{} 
        \label{fig:realSim_SLselectionB}
    \end{subfigure}
    \caption{Panel (a) shows the frequency of candidate adaptive design selections by TMLE-OSLAD at each time point in the ADAPT-R trial simulation. 
    Panel (b) shows the cumulative average expected final outcome over time obtained from implementing different designs.}
    \label{fig:realSim_SLselection}
\end{figure}

\section{Conclusions}
\label{section:discussions}
In summary, this work provides a general TMLE-based Online Super Learner Adaptive Design Framework (TMLE-OSLAD) for constructing a real-time, data-driven ensemble system for adaptive experimental designs and shows how this framework contributes to the statistical theory and application of surrogate outcomes in adaptive designs. 
We propose a novel causal estimand for surrogate utility in sequential adaptive designs: the mean primary outcome under a counterfactual CARA design where, at enrollment, each participant’s randomization is shifted toward the optimal individualized rule learned from a candidate surrogate. It captures both how well the surrogate guides optimal treatment assignment and the extent to which it enables earlier benefit for more participants, with the goal of improving participants' primary outcomes.
Building on this, we define an Online Super Learner adaptive design that incorporates the proposed estimand to evaluate a library of candidate surrogates, selects the most useful surrogate based on TMLE inference, and dynamically adapts randomization probabilities by the selected surrogate-guided design. 

This work on surrogate outcomes is a concrete application of our highly general TMLE-OSLAD framework, which provides statistical inference on the mean counterfactual outcome under each candidate adaptive design in a user-supplied set, and adaptively evaluates and updates the best-performing design over time to adapt treatment randomization for new participants. 
This unified framework enables principled evaluation of the opportunities and costs of diverse adaptive designs varying in their design choices, quantifies how accumulating information benefits participants over time, and supports dynamic navigation of trade-offs among multiple designs during an ongoing experiment. 

Methodologically, this framework advances non-parametric statistical inference for adaptive experiments: the proposed TMLE approach yields consistent, asymptotically normal estimators for the proposed estimand and other classical causal estimands, leveraging martingale theories, equicontinuity results, and Highly Adaptive Lasso. It broadly applies to estimation of various estimands in sequentially dependent settings like adaptive designs and time series.

Beyond the proposed context-specific estimand, further study of the computationally intensive ``ideal estimand'', the expected outcome under the full data distribution induced by a candidate design (as is discussed in Appendix~\ref{appendix:ideal_estimand}), could provide more insight into the strengths and limitations of different estimands, as well as trade-offs between statistical properties and computational efficiency. We leave this to future work.

\bibliography{TMLE-OSLAD}

\appendix
\renewcommand{\thefigure}{S\arabic{figure}}
\setcounter{figure}{0}

\renewcommand{\thetable}{S\arabic{table}}
\setcounter{table}{0}

\newpage

\renewcommand{\thepage}{S-\arabic{page}}
\setcounter{page}{1}

\renewcommand{\thetheorem}{E.\arabic{theorem}}
\renewcommand{\theassumption}{E.\arabic{assumption}}
\renewcommand{\thelemma}
{E.\arabic{lemma}}
\renewcommand{\thedefinition}{E.\arabic{definition}}
\renewcommand{\thecorollary}{E.\arabic{corollary}}
\renewcommand{\theproposition}{E.\arabic{proposition}}

\setcounter{theorem}{0}
\setcounter{assumption}{0}
\setcounter{lemma}{0}
\setcounter{definition}{0}
\setcounter{corollary}{0}
\setcounter{proposition}{0}

\renewcommand{\theHtheorem}{appendix.\thesection.\arabic{theorem}}
\renewcommand{\theHassumption}{appendix.\thesection.\arabic{assumption}}
\renewcommand{\theHlemma}{appendix.\thesection.\arabic{lemma}}
\renewcommand{\theHdefinition}{appendix.\thesection.\arabic{definition}}
\renewcommand{\theHcorollary}{appendix.\thesection.\arabic{corollary}}
\renewcommand{\theHproposition}{appendix.\thesection.\arabic{proposition}}



\section{Additional Remarks on the Target Estimand}
\label{appendix:diff_of_two_target_estimands}

\subsection{Surrogate Evaluation with Adaptive Design Perspective}
\label{appendix:remark_utility_of_surrogates}
Our approach adopts an adaptive design perspective to evaluate surrogates in sequential decision-making contexts.
The proposed target estimand evaluates the utility of surrogates by their early availability, predictability, and sensitivity in indicating effective treatments in the following sense. 
We define a trajectory of treatment rules $\gtilde^*_{k,1:t} := 
(\gtilde^*_{k,1},\ldots,\gtilde^*_{k,t})$, with each rule $\gtilde^*_{k,t'}$ generating a personalized treatment randomization function $g^*_{k,i}$ for each participant $i$ enrolled at time $t'$. Every $g^*_{k,i}$ is specifically tailored to maximize the surrogate outcome $Y_k$ based on their baseline covariates $W_i$ and context.
We note that at each time point $t'$, $\gtilde^*_{k,t'}$ depends only on historical data $\bcH(t')$.
Therefore, $\gtilde^*_{k,t'}$ can not deviate from a static design with equal randomization probabilities during the initial $k$ time points, as observation of $Y_k$ is not yet available.
Once $Y_k$ is available at $t' = k+1$, $\gtilde^*_{k,t'}$ starts moving towards its optimal dynamic treatment rule $d_{0,k}(W):= I\left(\bQ_{0,k}(1,W)-\bQ_{0,k}(0,W)>0\right)$ over time.
If $d_{0,k}$ closely aligns with the true optimal rule for the primary outcome $d_{0,K}$, then the surrogate $Y_k$ demonstrates the advantage of its early availability in informing which treatment is better in the context of sequential decision making, compared to later outcomes.
The progression of $\gtilde^*_{k,t'}$ from a static design with equal randomization probabilities towards the optimal dynamic treatment rule $d_{0,k}$ depends on the magnitude of treatment superiority (as measured by the point estimate of CATE) and its uncertainty (as reflected by estimated standard error of CATE).
Therefore, if a surrogate $Y_k$ exhibits good alignment with the primary outcome in terms of optimal dynamic treatment rule and presents a strong signal of individualized optimal treatment over other surrogates,
this surrogate will be preferred and given a higher $\Psi_{t,k}$. 
Of course, even if a surrogate demonstrates early availability and sensitivity, it will present a lower $\Psi_{t,k}$ value if it fails to predict optimal personalized treatments of the primary outcome. This is because the ultimate evaluation of a surrogate's performance is based on the expected final outcome, underscoring the critical role of predictability in evaluating surrogate outcomes in adaptive designs.

In summary, our proposed target estimand provides a comprehensive measure of a candidate surrogate’s utility to accelerate participant benefit in sequential adaptive designs. It captures not only the timeliness with which the surrogate becomes available for early adaptation, but also its sensitivity to heterogeneous treatment effects and its predictive accuracy in identifying optimal personalized treatments for the primary outcome. This approach contrasts with existing surrogate evaluation methods, which are generally developed for static settings rather than from a dynamic adaptive design perspective.

\subsection{Approximation of an Ideal Estimand} 
\label{appendix:ideal_estimand}
\begingroup
\color{vz}
Ideally, one could fully evaluate a candidate adaptive design by defining an ideal estimand as the expected final outcome when the entire experiment is conducted under that design, averaged across all participants. Formally, this requires integrating over the joint distribution of entry-time histories, contexts, treatments, and outcomes that would be sequentially induced if the candidate adaptive design were implemented from the start of the experiment. In practice, however, targeting this estimand is computationally intensive and entails a substantially more involved estimation procedure.

Instead, our context-specific estimand provides a fast and robust approximation to the ideal estimand by evaluating the expected outcome under the candidate design using contexts of the observed histories under the actual experiment. This estimand avoids modeling the full induced distribution under the candidate design and admits doubly robust estimation via TMLE (see Section~\ref{section:tmle_analysis}). The approximation is most accurate when the observed contexts from the actual experiment are representative of those the candidate design would induce - especially when the candidate design’s adaptation is based on learning from a stationary target such as $Q_0$ or CATE that dominates over the experiments. This provides an effective approach to distinguishing low-utility from high-utility adaptive designs, thereby facilitating online selection among candidate designs with robust statistical support.

In simulation studies, we also compute the expected counterfactual final outcome obtained by implementing each candidate adaptive design (see Table~\ref{tab:truePsi_marginal} in Appendix~\ref{appendix:additional_results}), which corresponds to the ideal estimand. The results show that our context-specific estimand (Table~\ref{tab:truePsi}) closely matches this ideal target with minimal differences.
\par
\endgroup

\def\Dzero{D^{(0)}([0,1]^{d})}
\def\Dmzero{D^{(m)}([0,1]^{d})}
\def\DMnzero{D^{(0)}_{M_n}([0,1]^{d})}
\def\Done{D^{(1)}([0,1]^{d})}

\newcommand{\Dwhatzero}[1]{D^{(#1)}([0,1]^{d})}

\def\Dkzero{D^{(m)}([0,1]^{d})}
\def\DCzero{D_C^{(0)}([0,1]^{d})}
\def\DkCzero{D_M^{(m)}([0,1]^{d})}

\def\DmCzero{D_C^{(m)}}
\def\DmMRn{D^{(m)}_{M}(\cR_n)}
\def\DmMnRn{D^{(m)}_{M_n}(\cR_n)}
\def\DmCRn{D^{(m)}_{C}(\cR_n)}

\def\onetod{\{1, \ldots, d\}}

\def\Qv{\|Q\|_v^*}
\def\bQv{\|\bar{Q}\|_v^*}

\def\Szerod{\mathcal{S}^0_{[d]}}
\def\Szerostard{\mathcal{S}^{0,*}_{[d]}}
\def\Rzeros{\mathcal{R}^0(s)}
\def\Rzerostars{\mathcal{R}^{0,*}(s)}
\def\Rzerod{\mathcal{R}^0_{[d]}}
\def\Rzerostard{\mathcal{R}^{0,*}_{[d]}}

\def\RNzerostars{\mathcal{R}_n^{0,*}(s)}
\def\RNzerod{\mathcal{R}^{0,n}_{[d]}}

\def\Rkd{\mathcal{R}^m_{[d]}}
\def\Rkstars{\mathcal{R}^{m,*}(\bar{s}(m+1))}
\def\Rkstard{\mathcal{R}^{m,*}_{[d]}}
\def\RNkd{\mathcal{R}^{m,n}_{[d]}}
\def\RNkstars{\mathcal{R}_n^{m,*}(\bar{s}(m+1))}

\newcommand\RNwhatstard[1]{\mathcal{R}^{#1,n}_{[d]}}

\def\RzerosJ{\mathcal{R}^0(s, \mathbf{J}(s))}
\def\RzerodJ{\mathcal{R}^{0}[d,\mathbf{J}]}
\def\RzerostarsJ{\mathcal{R}^{0,*}(s, \mathbf{J}(s))}
\def\RzerostardJ{\mathcal{R}^{0,*}[d,\mathbf{J}]}

\def\RksJ{\mathcal{R}^m(s, \mathbf{J}(s))}
\def\RkdJ{\mathcal{R}^{m}[d,\mathbf{J}]}
\def\RkstarsJ{\mathcal{R}^{m,*}(s, \mathbf{J}(s))}
\def\RkstardJ{\mathcal{R}^{m,*}[d,\mathbf{J}]}

\def\Qsbark{Q_{\bar{s}(m+1)}^{(m)}}
\def\sbar{\bar{s}}

\def\Dhatzero{D^{(0)}}

\def\ibeta{\beta^{\textit{init}}}
\def\tbeta{\beta^{\textit{target}}}
\def\halbeta{\beta^{\textit{hal}}}
\def\rbeta{\beta^{\textit{relax}}}
\def\betaLone{\|\beta\|_1}

\def\tpsi{\psi^{\textit{target}}}

\def\tV{V^{\textit{target}}}
\def\tphi{\tilde{\phi}}

\def\dto{\xrightarrow{d}}
\def\pto{\xrightarrow{p}}

\def\EICbetank{D_{P_n, \phi_k}}
\def\EICxbetank{D_{P_n, \phi_k, \tx}}

\def\textlu{\textit{local-u}}
\def\textgu{\textit{global-u}}

\def\kcv{k_{cv}}
\def\tx{\tilde{x}}
\def\sigmahat{\hat{\sigma}}
\def\tsigmahat{\hat{\sigma}^{\textit{target}}}
\def\halsigmahat{\hat{\sigma}^{\textit{hal}}}
\def\relaxsigmahat{\hat{\sigma}^{\textit{relax}}}
\def\isigmahat{\hat{\sigma}^{\text{init}}}

\def\Mcv{M_{cv}}
\def\Mgu{M_{global-u}}
\def\Mlu{M_{local-u}}
\def\QnM{Q_{n, M}}
\def\Dk{D^{(k)}}
\def\betanM{\beta_{n,M}}
\def\betanMcv{\beta_{n,M_{cv}}}
\def\betanMgu{\beta_{n,{M_{global-u}}}}
\def\betanMlu{\beta_{n,{M_{local-u}}}}

\def\phinM{\phi_{n,M}}
\def\phinMcv{\phi_{n,M_{cv}}}
\def\betanphinM{\beta_{n,\phinM}}
\def\betanphi{\beta_{n}(\phi)}
\def\Dbetaphi{D_{\beta}(\phi)}
\def\Dbetanphi{D_{\beta_n}(\phi)}
\def\Dbetaphixtilde{D_{\beta,\tilde{x}}(\phi)}
\def\Dbetanphixtilde{D_{\beta_n,\tilde{x}}(\phi)}
\def\DbetanMphinMxtilde{D_{\beta_{n,M},\tilde{x}}(\phi_{n,M})}
\def\DbetaphiMphixtilde{D_{\beta,\tilde{x}}(\phi)}
\def\DbetaphiMkphixtilde{D_{\beta,\tilde{x}}(\phi_{n, M_k},\phi_{M_{cv}})}

\def\DbetanMphiMphixtilde{D_{\beta_{n,M},\tilde{x}}(\phi_{n, M},\phi_{n,M})}
\def\DbetanMkphiMkphixtilde{D_{\beta_{n,M_k},\tilde{x}}(\phi_{n, M_k},\phi_{n,M_k})}
\def\DbetanMkphiMkphixtildej{D_{\beta_{n,M_k},\tilde{x}_j}(\phi_{n, M_k},\phi_{n,M_k})}
\def\DbetanMkphiMcvphixtilde{D_{\beta_{n,M_k},\tilde{x}}(\phi_{n, M_k},\phi_{n, M_{cv}})}
\def\DbetanMkphiMcvphixtildej{D_{\beta_{n,M_k},\tilde{x}_j}(\phi_{n, M_k},\phi_{n, M_{cv}})}

\def\P{\mathbb{P}}
\def\R{\mathbb{R}}
\def\C{\mathbb{C}}
\def\Z{\mathbb{Z}}
\def\N{\mathbb{N}}
\def\E{\mathbb{E}}
\def\P{\mathbb{P}}
\def\T{\mathsf{T}}
\def\G{\mathcal{G}} 

\def\Cov{\mathrm{Cov}}
\def\Var{\mathrm{Var}}
\def\indep{\perp\!\!\!\perp}
\def\th{^{\text{th}}}
\def\tr{\mathrm{tr}}
\def\df{\mathrm{df}}
\def\dim{\mathrm{dim}}
\def\col{\mathrm{col}}
\def\row{\mathrm{row}}
\def\nul{\mathrm{null}}
\def\rank{\mathrm{rank}}
\def\nuli{\mathrm{nullity}}
\def\spa{\mathrm{span}}
\def\sign{\mathrm{sign}}
\def\supp{\mathrm{supp}}
\def\diag{\mathrm{diag}}
\def\aff{\mathrm{aff}}
\def\conv{\mathrm{conv}}
\def\dom{\mathrm{dom}}
\def\hy{\hat{y}}
\def\hf{\hat{f}}
\def\hmu{\hat{\mu}}
\def\halpha{\hat{\alpha}}
\def\hbeta{\hat{\beta}}
\def\htheta{\hat{\theta}}
\def\cA{\mathcal{A}}
\def\cB{\mathcal{B}}
\def\cD{\mathcal{D}}
\def\cE{\mathcal{E}}
\def\cF{\mathcal{F}}
\def\cG{\mathcal{G}}
\def\cK{\mathcal{K}}
\def\cH{\mathcal{H}}
\def\cI{\mathcal{I}}
\def\cL{\mathcal{L}}
\def\cM{\mathcal{M}}
\def\cN{\mathcal{N}}
\def\cP{\mathcal{P}}
\def\cR{\mathcal{R}}
\def\cS{\mathcal{S}}
\def\cT{\mathcal{T}}
\def\cW{\mathcal{W}}
\def\cX{\mathcal{X}}
\def\cY{\mathcal{Y}}
\def\cZ{\mathcal{Z}}
\def\bQ{\bar{Q}}
\def\bO{\bar{O}}

\def\bgkt{\mathbf{g}_{k,1:t}}
\def\g0t{\mathbf{g_0^t}}
\def\g1{\mathbf{g_1}}
\def\gK{\mathbf{g_K}}
\def\tW{\tilde{W}}
\def\tA{\tilde{A}}
\def\tY{\tilde{Y}}
\def\tO{\tilde{O}}
\def\bO{\bar{O}}
\def\bo{\bar{o}}
\def\bbO{\overline{\overline{O}}}
\def\bbo{\overline{\overline{o}}}

\def\asto{\overset{\mathrm{as}}{\to}}
\def\pto{\overset{p}{\to}}
\def\dto{\overset{d}{\to}}

\section{The Highly Adaptive Lasso Estimator}
\label{appendix:HAL}
Highly Adaptive Lasso (HAL) \citep{van2015generally,benkeser2016highly} is a flexible non-parametric regression estimator that minimizes empirical risk over the class of càdlàg (right-continuous with left limits) functions with bounded sectional variation norms \citep{gill1995inefficient}. Instead of relying on local smoothness assumptions, HAL imposes a global smoothness constraint through the sectional variation norm and achieves a convergence rate of order $n^{-1/3}$ up to logarithmic factors when using zero-order splines \citep{bibaut2019fast}. Recent work by \cite{van2023higher} studies higher-order smoothness classes of càdlàg functions and develops higher-order spline HAL, showing that the $m$-th order HAL is asymptotically normal and attains a uniform convergence rate of order $n^{-\frac{m+1}{2m+3}}$ up to a logarithmic factor ($m = 1,2,\ldots$).
In this work, we use first-order HAL to estimate conditional average treatment effect (CATE) functions (Appendix \ref{appendix:CATE}) and provide a corollary for proving equicontinuity conditions of martingale processes based on theoretical results of higher-order spline HAL-MLEs (Appendices \ref{appendix:equicontinuity_general} and \ref{appendix:equicontinuity}).

We first introduce a class of $d$-variate càdlàg functions defined on a unit cube $[0,1]^d$ with bounded sectional variation norm \citep{gill1995inefficient}.

For any $x \in [0,1]^d$, denote $x(j)$ as its $j$-th coordinate.
For a given subset $s \subset \onetod$, denote $x_s = (x(j): j \in s)$.
Let $\bQ$ be a real-valued $d$-variate càdlàg function on a unit cube $[0,1]^d$.
Define $\bQ_s: (0_s,1_s] \to \R$ as $\bQ_s(u_s)=\bQ(u_s,0_{-s})$, i.e., the $s$-specific section of $\bQ$ that sets coordinates in the complement of subset $s$ equal to 0.
The sectional variation norm of $\bQ$ is defined by 
$\bQv := |\bQ(0)|+\sum_{s \subset \onetod, s \ne \emptyset} \int_{(0_s, 1_s]}|d\bQ(u_s, 0_{-s})|$.
Suppose $\bQv$ is finite, then $\bQ$ can be represented as \citep{gill1995inefficient}:
\[
\bQ(x) = \int_{[0, x]} d \bQ(u) = \bQ(0)+\sum_{s \subset\onetod, s \ne \emptyset} \int_{(0_s, 1_s]} \phi_{u_s}^{(0)}(x_s) d \bQ(u_s, 0_{-s}),
\]
where $\phi_{u_s}^{(0)}(x_s) := I(x_s \geq u_s)$ is a zero-order basis function. 

One can use discrete measure with observed support points to approximate the presentation of $\bQ$ \citep{van2015generally, benkeser2016highly}. 
Given $n$ i.i.d samples $(x_i,y_i)$ ($i = 1,\ldots,n$) with $x_i \in [0,1]^d$, $y_i \in \R$, one can use the $s$-specific support points $x_{s,i} := (x_{i}(j): j \in s)$ observed for each unit $i$ across different subsets $s$ to approximate $\bQ$. Specifically, the approximation is given by 
$
\bQ_{\beta}(x) = \beta_0 +  \sum_{s \subset \{1,\ldots,d\}} \sum_{i=1}^n \beta_{s,i}\phi_{x_{s,i}}^{(0)}(x(s))$. 

We define $Pf:= \int f(o) dP(o)$ for a function $f$ with a data generating distribution $P$ that generates data $O \sim P$. Let $P_n f:= \int f(o) dP_n(o)$ under empirical distribution $P_n$.
Define the class
$
\mathcal{\bQ}^{(0)}_{n,C} = \left\{ \bQ_{\beta}(x):  |\beta_0| +  \sum_{s \subset \{1,\ldots,d\}} \sum_{i=1}^n |\beta_{s,i}| < C \right\}$
for some constant $C$, and let 
$L(\bQ)$ be a loss function.
A zero-order HAL-MLE is constructed by solving the following loss-based minimization problem:
$
\bQ_n = \argmin_{\bQ_\beta \in \mathcal{\bQ}^{(0)}_{n,C_n^u}} P_n L(\bQ_\beta)$,
where $C_n^u$ is a data-adaptive cross-validation or undersmoothed selector of bounded sectional variation norm for the working model constructed on the basis functions $\phi_{x_{s,i}}^{(0)}$'s.
\cite{bibaut2019fast} establishes the $L_2$ convergence rate of $n^{-\frac{1}{3}}$ up to logarithmic factors for the zero-order HAL-MLE.

\cite{van2023higher} extends the HAL construction to higher-order smoothness classes. For integer $m \ge 1$, define the $m$-th order smoothness class $\Dkzero$ $(m = 1, 2, \ldots)$ on $[0,1]^d$ as the class with functions whose $m$-th-order Lebesgue-Radon-Nikodym derivatives are càdlàg functions with bounded sectional variation norm.
Like the zero-order HAL-MLE, one can construct a higher-order HAL-MLE that uses $m$-th order splines to estimate the target function in $\Dkzero$. For the first-order ($m=1$), the approximation of $\bQ^{(1)} \in \Done$ can be written by 
$
    \bQ^{(1)}_{\beta}(x) = \beta_0 +  \sum_{s \subset \{1,\ldots,d\}} \sum_{i=1}^n \beta_{s,i}\phi_{x_{s,i}}^{(1)}(x(s)), 
$
where $\phi^{(1)}_u(x) = (x - u)I(x \ge u)$.
Define 
$
\cbQ_{n,C}^{(1)}=\{ \bQ^{(1)}_{\beta} \in \Done: |\beta_0|+ \sum_{s \subset \{1,\ldots,d\}} \sum_{i=1}^n |\beta_{s,i}| < C\}$.
The first-order HAL-MLE is constructed by 
$
\bQ_n^{(1)} = \argmin_{\bQ_\beta^{(1)} \in \cbQ_{n,C_n^u}^{(1)}} P_n L(\bQ_\beta^{(1)}).$
One can also consider the second-order spline function $\phi^{(2)}_u(x) = \frac{1}{2}(x - u)^2 I(x \ge u)$ or higher-order spline basis functions to fit higher-order HAL-MLEs.
Moreover, \citet{van2023higher} establishes uniform covering number and entropy bounds for these higher-order HAL estimators, and shows that the 
$m$-th order spline HAL-MLE achieves asymptotic normality and a uniform convergence rate of $n^{-\frac{m+1}{2m+3}}$ up to logarithmic factors.

\section{Estimation of Conditional Average Treatment Effect} \label{appendix:CATE}

A key procedure in our adaptive design is estimating the conditional average treatment effect (CATE). By estimating the CATE function, the experimenter can identify the optimal dynamic treatment rule that maximizes the expected outcome conditional on baseline covariates, and update treatment randomization probabilities accordingly to increase the chance of allocating the estimated optimal personalized treatment in adaptive design.

In this work, we use the Highly Adaptive Lasso (HAL) estimator to estimate the CATE function, with its advantages introduced in Appendix \ref{appendix:HAL}.
Furthermore, we apply the delta-method to obtain a Wald-type confidence interval for the CATE. This confidence interval guides the exploration-exploitation trade-off in adaptive designs by adjusting treatment randomization probabilities based on the uncertainty in CATE estimates, which will be discussed in Appendix \ref{appendix:details_of_CARA}.

Recall that $\bQ_{0,k}(A, W) = E[Y_{k}|A, W]$ is the true conditional expectation of the $k$-th surrogate outcome, given the treatment $A$ and baseline covariates $W$. 
For every new participant $i$ enrolled at time $t$, we use $B_{n,k,t}(W_i)$ to denote an estimate of $B_{0,k}(W_i):=E[Y_k(1)-Y_k(0)|W_i]$, the true CATE of $Y_k$ given $W_i$, the baseline covariates of participant $i$.
We use $\tau_{n,k,t}(W_i)$ to denote the estimated standard error of CATE. 
Both estimates are based on $\cH(i)$, the historical data observed before participant $i$ is enrolled.

For each surrogate outcome $Y_k$, we estimate CATE functions by two steps. First, we transform $Y_k$ into the pseudo-outcome \citep{van2006statistical,luedtke2016super}:
$$
\eta_{k}(\bQ_k, g)(O)=\frac{2 A-1}{g(A \mid W)}[Y_{k}-\bQ_k(A, W)]+\bQ_k(1, W)-\bQ_k(0, W).
$$
We note that the pseudo-outcome has the doubly-robust property that $E[\eta_k(\bQ_k, g)(O) \mid W]=B_{0,k}(W)$ if $\bQ_k=\bQ_{0, k}$ or $g=g_0$. 
Since $g_0$ is known in adaptive design, we only need to estimate $\bQ_{0, k}$ with $\bQ_{n, k}$, then impute $\bQ_{n, k, t}$ and $g_0$ to generate the pseudo-outcome $\eta_k(\bQ_{n,k,t}, g_0)$. 

Second, we estimate the CATE function $B_{0,k}$ by regressing the pseudo-outcome $\eta_{k}\left(\bQ_{n, k, t}, g_0\right)$ on baseline covariates $W$ by first-order spline HAL-MLE, which minimizes the empirical risk over $\Dwhatzero{1}$, the first-order smoothness class with its first-order derivative being càdlàg function with bounded sectional variation norm.
That is, at time $t$, we estimate the CATE function by solving the following minimization problem:
\begin{align*}
B_{n,k,t} = \argmin_{B_k \in \cbQ_{C_{n,k,t}^u}^{(1)}} P_{n_{t,k}} L_k(B_k)
\end{align*}
where $L_k(B_k)(O) := (B_k(W)-\eta_k(O))^2$, and $P_{n_{t,k}}$ denotes the empirical distribution over the $n_{t,k} := N(t-k)$ observations with observed $Y_k$ at time $t$; $C_{n,k,t}^u$ is an upper bound for the first-order sectional variation norm that is data-adaptively selected by cross-validation. 

As introduced in Appendix \ref{appendix:HAL},  $B_{n,k,t}(\cdot)$ can be written as a linear combination of first-order spline basis functions with non-zero coefficients.
To quantify the uncertainty of the CATE, we use the delta method under this HAL working model to obtain pointwise inference for the conditional mean of the pseudo-outcome, which corresponds to the CATE. Let \(\tau_{n,k,t}(W_i)\) denote the estimate of the standard error of \(B_{n,k,t}(W_i)\); a Wald-type $100(1-\alpha)\%$ confidence interval for the CATE at $W_i$ is $B_{n,k,t}(W_i) \pm z_{1-\alpha/2}\tau_{n,k,t}(W_i)$ ($\alpha = 0.05$). 
Both \(B_{n,k,t}(W_i)\) and \(\tau_{n,k,t}(W_i)\) are then incorporated into the treatment randomization function introduced in Appendix \ref{appendix:details_of_CARA} to construct \(g_{k,i}^*\) for each participant $i$ in the adaptive design guided by surrogate $Y_k$.

\section{Details of CARA Procedures}
\label{appendix:details_of_CARA}

We define a treatment randomization function $\tilde{h}_{\nu}: [-1,1] \times \mathbb{R}^{+} \to [0,1]$ as follows:
\begin{align*}
\tilde{h}_{\nu}(x,b) = \nu I(x \leq -b) + (1-\nu) I(x \geq b) + \left(-\frac{1/2 - \nu}{2b^3}x^3 + \frac{1/2 - \nu}{2b/3}x + \frac{1}{2}\right) I(-b < x < b),
\end{align*}
where $\nu \in (0,0.5)$ is a user-specified constant ensuring that $\nu \leq \tilde{h}_\nu \leq 1-\nu$.

This mapping, adapted from \cite{chambaz2017targeted}, is extended here to explicitly incorporate both the magnitude and uncertainty of the estimated CATE. In our setup, $x$ is the point estimate of CATE for each subject, while $b$ is the estimated standard error of CATE, scaled by the $z_{1-\alpha/2}$ quantile of the standard normal distribution.

For interpretation, it is helpful to re-express this function in terms of the standardized CATE $z = x / b$, which is the CATE divided by half the width of its confidence interval. The function then becomes
\begin{eqnarray}
h_\nu(z) = 
\begin{cases}
    \nu, & z \leq -1 \\
    \nu + (1 - 2\nu)\left( -\dfrac{1}{4}z^3 + \dfrac{3}{4}z + \dfrac{1}{2} \right), & -1 < z < 1 \\ 
    1 - \nu, & z \geq 1
    \nonumber
\end{cases}
\end{eqnarray}

This function smoothly interpolates between balanced randomization and deterministic allocation of optimal personalized treatment based on the standardized CATE signal relative to uncertainty. 
When the signal is modest, a near-balanced allocation is maintained between treatment and control, avoiding over-commitment to a noisy or unstable estimated rule and ensuring continued learning of CATE by preserving exploration in both arms.
As evidence accumulates and the magnitude of the CATE signal increases, the treatment randomization probability shifts further toward the estimated optimal treatment. Meanwhile, a minimal exploration probability is retained, which is necessary for estimating CATE during the experiment and a range of other causal estimands at the end of the adaptive experiment.

In our surrogate-guided adaptive designs, the treatment randomization probabilities are generated as follows. For each candidate surrogate outcome $Y_k$ and each subject $i$ with baseline covariates $W_i$, we estimate CATE $B_{n,k}(W_i)$ and its standard error $\tau_{n,k}(W_i)$ as detailed in  Appendix~\ref{appendix:CATE}.
Consequently, the treatment randomization probability for $a=1$ is given by $h_\nu(\frac{B_{n,k}(W_i)} {z_{1-\alpha/2} \cdot \tau_{n,k}(W_i)})$, and for $a=0$ it is $1 - h_\nu(\frac{B_{n,k}(W_i)}{z_{1-\alpha/2} \cdot \tau_{n,k}(W_i))})$, where $\nu=0.1$. If the surrogate outcome $Y_k$ has not yet been observed, the assignment defaults to equal randomization probability of 0.5.

\section{Equicontinuity Condition for a General Martingale Process}
\label{appendix:equicontinuity_general}

In this section, we provide a theorem about the equicontinuity condition of martingale processes in a general setup that includes but not limited to adaptive design settings.
\subsection{Statistical Setup}
We consider a filtered probability space 
$(\Omega, \mathcal F, \{\mathcal F_i\}_{i \ge 1}, \mathbf P)$, 
where $\{O_i\}_{i \ge 1}$ is a sequence of random variables taking values in $\mathcal O$, and 
$\mathcal F_i := \sigma(O_1, \ldots, O_i)$. 
Let $\bar O(i) := (O_1, \ldots, O_i)$.
We assume that the data are generated sequentially with each $O_i$ depending on $\bar O(i-1)$ through a fixed-dimensional summary $C_i = f_C(\bar O(i-1)) \in \cC$.
Let $\Theta$ be a class of functions, and 
$\{f_\theta : \theta \in \Theta\}$
be measurable functions from 
$\mathcal{O} \times \mathcal{C} \to \mathbb{R}$. 
Define
\[
M_N(\theta) = \frac{1}{N} \sum_{i=1}^N \left\{f_{\theta}(O_i,C_i) - E\left[f_{\theta}(O_i,C_i) \mid \bO(i-1)\right]\right\}.
\]

Given a fixed $\theta^* \in \Theta$ and a $\theta_N \in \Theta$ constructed from data $\bar O(N)$.
We provide a general theorem for the equicontinuity condition with respect to the martingale process as follows:
\[M_{2,N}(\theta_N,\theta^*) := M_N(\theta_N) - M_N(\theta^*) = \smallopsqrtN.\]

\subsection{Proof of Equicontinuity Condition}
We first define a bracketing entropy following Definition A.1 in \cite{van2011minimal} (see also Definition 8.1 in \cite{geer2000empirical}).

\begin{restatable}{definition}{repDefSeqBracketing}{(Bracketing entropy)}
\label{def:seq_bracketing}
Consider a stochastic process of the form $\left\{\left(\xi_{\theta,i}\right)_{i=1}^N: \theta \in \Theta\right\}$ where $\Theta$ is an index set such that, for every $\theta \in \Theta, i \in[N]$, $\xi_{\theta,i}$ is an $\bar{O}(i)$-measurable random variable for all $i$ and $\theta$. 
We say that a collection of random variables of the form $\mathcal{B}:=\left\{\left(\Lambda_i^j, \Upsilon_i^j\right)_{i=1}^N: j \in[J]\right\}$ is an $(N,\Theta, Z, \epsilon)$-bracketing if

\begin{enumerate}
    \item for every $i \in[N]$, and $j \in[J],\left(\Lambda_i^j, \Upsilon_i^j\right)$ is $\bar{O}(i)$-measurable,
    \item for every $\theta \in \Theta$, there exists $j \in[J]$, such that, for every $i \in[N], \Lambda_i^j \leq \xi_{\theta,i} \leq \Upsilon_i^j$,
    \item for every $j \in[J]$,
    $$
    \frac{2Z^2}{N} \sum_{i=1}^N E\left[\phi\left(\frac{\Upsilon_i^j-\Lambda_i^j}{Z}\right) \mid \bO(i-1)\right] \leq \epsilon^2,
    $$
\end{enumerate}
where $\phi(x) = e^x - x - 1$.
We denote $\bracketingnumber{\epsilon}{Z}$ the minimal cardinality of an $\left(N, \Theta, Z, \epsilon \right)$-bracketing.
We define the bracketing entropy integral as
$\bracketingintegral{\epsilon}{Z} := \int_{0}^{\epsilon} \bracketingnumber{u}{Z} du$.
\end{restatable}

\begin{theorem}[Theorem A.4 in \cite{van2011minimal}]
\label{theorem:handel_A4}
Given $Z > 0$, for all $i \ge 0$, define 
\[
M_{\theta, i} = \frac{1}{N} \sum_{l=1}^{i}\left(\xi_{\theta,l} - E\left[\xi_{\theta,l} \mid \cF_{l-1} \right]\right),\]
\[
R_{\theta, i} = \frac{1}{N} 2Z^2 \sum_{l=1}^{i} E\left[\phi\left(\frac{|\xi_{\theta,l}|}{Z}\right) \bigg| \cF_{l-1} \right].
\]
Then,
for any $N \in \mathbb{N}$, $x > 0$, and $0 < R <\infty$, 
\begin{align*}
& P\left( \sup_{\theta \in \Theta} I\left(R_{\theta,N} \le R \right) \max_{i \in [N]}\sqrt{N} M_{\theta,i} \ge 16 \Gamma + 32\sqrt{Rx} + \frac{16}{\sqrt{N}} Zx \right) \le 2 e^{-x},
\end{align*}
where $\Gamma := \frac{Z}{\sqrt{N}} \log \bracketingnumber{\sqrt{R}}{Z} + 4 \bracketingintegral{\sqrt{R}}{Z}$.
\end{theorem}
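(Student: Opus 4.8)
The statement is Theorem A.4 of \cite{van2011minimal} transcribed into our notation, so the first and shortest step is simply to record the dictionary: our $\xi_{\theta,l}$, the normalization $1/N$, the filtration conditional expectations $E[\,\cdot\mid\cF_{l-1}]$, the random quadratic-variation surrogate $R_{\theta,i}$, the bracketing numbers $\bracketingnumber{\cdot}{K}$ of Definition \ref{def:seq_bracketing}, and the Orlicz function $\phi(x)=e^x-x-1$ all coincide with van Handel's objects once his index set, envelope, and $\phi$ are identified with ours. Granting that, there is nothing further to do; the remaining paragraphs only recall the structure of the argument, since we reuse its ingredients in Appendix \ref{appendix:equicontinuity}.

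\textbf{Single-process maximal inequality.} The first ingredient is a one-parameter (fixed $\theta$) bound. The process $l\mapsto \exp\{\lambda N M_{\theta,l}-\sum_{m\le l}E[\phi(\lambda\xi_{\theta,m})\mid\cF_{m-1}]\}$ is a nonnegative supermartingale (the Freedman/Bennett exponential supermartingale written through $\phi$); optional stopping at the first time the $\phi$-compensator exceeds its $R$-level, together with a union over a dyadic grid of $\lambda$, yields a tail of Bernstein shape for $\max_{i\le N}\sqrt N M_{\theta,i}$ on the event $\{R_{\theta,N}\le R\}$ — sub-Gaussian at scale $\sqrt R$ for moderate deviations, crossing over to sub-exponential at scale $K/\sqrt N$ for large ones. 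The indicator $I(R_{\theta,N}\le R)$ in the theorem is exactly the device that lets one replace the random compensator by the deterministic constant $R$ inside this bound.

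\textbf{Chaining over brackets.} The second ingredient is a generic (bracketing) chaining over $\Theta$. Fix, for each scale $\epsilon_q=2^{-q}\sqrt R$, a minimal $(N,\Theta,F,K,\epsilon_q)$-bracketing and an associated approximation $\theta\mapsto\pi_q(\theta)$; expand $\sqrt N M_{\theta,i}$ as a telescoping sum of increments $\sqrt N\,(M_{\pi_{q+1}(\theta),i}-M_{\pi_q(\theta),i})$. Each increment is again an adapted martingale whose $\phi$-compensator is controlled — this is precisely condition~3 of Definition \ref{def:seq_bracketing} — so the single-process bound applies to each of the at most $\bracketingnumber{\epsilon_{q+1}}{K}^2$ increment processes. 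A union bound across scales, with the ``$\sum_q \epsilon_q\log\mathcal N$'' sums telescoped, reproduces the entropy integral $\bracketingintegral{\sqrt R}{K}$ and the $\log\bracketingnumber{\sqrt R}{K}$ term inside $\cH$, while balancing the two deviation regimes of the single-process bound across the chain produces the $32\sqrt{Rx}$ and $16Kx/\sqrt N$ terms; the explicit constants $16,32,16$ are what remains after being mildly wasteful in the unions.

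\textbf{Main obstacle.} The one genuinely delicate point — and the reason Definition \ref{def:seq_bracketing} is phrased as it is — is that the ``variance'' here is the random, $\theta$-dependent compensator $R_{\theta,\cdot}$ rather than a fixed quantity, so an ordinary $L_2(P)$ bracketing would not close the argument: one needs brackets $(\Lambda_i^j,\Upsilon_i^j)$ that are themselves $\bar O(i)$-measurable and whose $\phi$-compensated increments are uniformly small, so that every increment process appearing in the chain inherits a usable exponential-supermartingale bound. Once that structural requirement is in place the rest is standard chaining bookkeeping, which we do not reproduce; in our setting we invoke Theorem \ref{theorem:handel_A4} as given and instead spend the effort in Appendix \ref{appendix:equicontinuity} verifying that the HAL-based CATE martingale of Appendix \ref{appendix:CATE} admits a bracketing of this type with the entropy rate supplied by \cite{van2023higher}.
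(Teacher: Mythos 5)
Your proposal matches the paper's treatment: the paper gives no proof of this statement, presenting it purely as a transcription of Theorem A.4 of \cite{van2011minimal} into the notation of Definition \ref{def:seq_bracketing}, and you likewise reduce it to the cited result after checking the notational dictionary. Your added sketch of the exponential-supermartingale and bracketing-chaining argument is a faithful outline of how the cited result is obtained, but it is supplementary rather than required, so there is nothing to correct.
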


We focus on the process  
\(\xi_{\theta,i} = f_{\theta}(O_i, C_i) - f_{\theta^*}(O_i, C_i)\) for \(i = 1, \ldots, N\), and \(\theta, \theta^* \in \Theta\).
Since the bracketing entropy condition may be difficult to verify, we relate it to the supremum norm covering entropy, following \citet{geer2000empirical} (Section 8.2) and \citet{chambaz2011targeting} (Lemma 7). In our setting, this is done by working with the covering entropy of \(\Theta\).

\begin{assumption}[Entropy Integral]
\label{assumption:entropy_integral}
    Let $\supcoveringTwo{\epsilon}{\Theta}$ denote the $\epsilon$-covering number in supremum norm of $\Theta$ and let $\supintegralTwo{\epsilon}{\Theta}:= \int_0^\epsilon \sqrt{\log \left(1+\supcoveringTwo{u}{\Theta} \right)} d u$ denote the $\epsilon$-entropy integral in supremum norm of $\Theta$. Assume that $\supintegralTwo{\epsilon}{\Theta}$ is finite and $\supintegralTwo{\epsilon}{\Theta} \to 0$ as $\epsilon \to 0$.
\end{assumption}

\begin{assumption}[Uniform Boundedness]
\label{assumption:bounded_function} 
$\sup_{\theta \in \Theta}\supnorm{\theta}< \infty$, $\sup_{\theta \in \Theta}\supnorm{f_{\theta}} < \infty$.
\end{assumption}

\begin{assumption}
\label{assumption:Lipschitz}
$\sup_{\theta_1 \in \Theta, \theta_2 \in \Theta, \theta_1 \ne \theta_2}\frac{\supnorm{f_{\theta_1} - f_{\theta_2}}}{\supnorm{\theta_1 - \theta_2}} < \infty $.
\end{assumption}

\begin{lemma}
\label{lemma: bracketing_dominated_by_supcovering_general}
 Suppose that Assumptions \ref{assumption:entropy_integral},
 \ref{assumption:bounded_function} and \ref{assumption:Lipschitz} hold, such that $\supnorm{\theta} < M$ for some finite constant $M > 0$ and $\sup_{\theta_1 \in \Theta, \theta_2 \in \Theta, \theta_1 \ne \theta_2}\frac{\supnorm{f_{\theta_1} - f_{\theta_2}}}{\supnorm{\theta_1 - \theta_2}} \le L $ for some finite constant $L > 0$. Then the following holds:
\[
\bracketingnumber{2\sqrt{2}L\epsilon}{8LM} \le 
 \cN_{\infty}(\epsilon, \Theta), 
\] 
and
\[
\bracketingintegral{2\sqrt{2}L\epsilon}{8LM} \le
\supintegralTwo{\epsilon}{\Theta}.
\]
\end{lemma}
\begin{proof}
Suppose $\{\theta_j:j \in [J]\} \subset \Theta$ is an $\epsilon$-covering of $\Theta$ in supremum norm such that $\forall \theta \in \Theta$, there exists $j \in [J]$ such that $\theta_j \in \Theta$ and $ \|\theta_j - \theta \|_{\infty} \le \epsilon$.
Denoting $\lambda_j = \theta_j - \epsilon$ and $v_j = \theta_j + \epsilon$. 
With Assumption \ref{assumption:bounded_function}, we can let $\epsilon \le 2M$ as $\lambda_j = \theta_j - M$ and $v_j = \theta_j + M$ always construct a valid pair of brackets.
Under Assumption \ref{assumption:Lipschitz}, $\supnorm{f_{\theta_j}-f_{\theta}} \le L \supnorm{\theta_j - \theta} \le L \epsilon$. 
Define $\Lambda_i^j = f_{\theta_j} - L\epsilon$ and $\Upsilon_i^j = f_{\theta_j} + L\epsilon$.
Then, we have that
    \[ \frac{1}{N}2Z^2 \sum_{i=1}^N E\left[\phi\left(\frac{|\Upsilon_i^j-\Lambda_i^j|}{Z}\right) \mid \bar{O}(i-1)\right]
    \le 2Z^2 \phi\left(\frac{2L\epsilon}{Z}\right)
    \le 2Z^2 \sum_{p \ge 2} \frac{1}{p!} \frac{p!}{2}\left(\frac{ 2L\epsilon}{Z}\right)^p
    =\frac{2L\epsilon}{\frac{1}{2L\epsilon}-\frac{1}{Z}}.\]
Let $Z = 8LM$. We have that
\[
\frac{1}{N}2Z^2 \sum_{i=1}^N E\left[\phi\left(\frac{|\Upsilon_i^j-\Lambda_i^j|}{Z}\right) \mid \bar{O}(i-1)\right] \le (2\sqrt{2}L\epsilon)^2.
\]
Therefore, 
\[
\bracketingnumber{2\sqrt{2}L\epsilon}{8LM} \le 
 \cN_{\infty}(\epsilon, \Theta), 
\] 
and
\[
\bracketingintegral{2\sqrt{2}L\epsilon}{8LM} \le
\supintegralTwo{\epsilon}{\Theta}.
\]
\end{proof}

\begin{assumption}[Convergence of $\theta_N$]
\label{assumption:convergence_of_theta_N}
    For any $\theta, \theta^* \in \Theta$, define 
    \[
    s_N\left(\theta, \theta^*\right):= \sqrt{\frac{1}{N} \sum_{i=1}^N E\left[\phi\left(\frac{f_{\theta}\left(O_i,C_i\right)-f_{\theta^*}\left(O_i,C_i\right)}{Z}\right) \mid \bO(i-1)\right]},
    \] where $Z = 8LM$. 
    Assume that $s_N\left(\theta_N, \theta^*\right) = O_P(\delta_N)$ for some $\delta_N \to 0$.
\end{assumption}

\begin{theorem}[Maximal Inequality of a Martingale Process]
\label{theorem:max_inequality_general}
$\forall x>0, N \in \mathbb{N}$, and $0 < r < \infty$. Denote $a_N(r,x) := \frac{16Z}{\sqrt{N}} \log \supcoveringTwo{\frac{r}{2\sqrt{2}L}}{\Theta} + 64 \supintegralTwo{\frac{r}{2\sqrt{2}L}}{\Theta} + 32r\sqrt{x} + \frac{16}{\sqrt{N}} Zx$.
Suppose that Assumptions \ref{assumption:entropy_integral},
 \ref{assumption:bounded_function}, and \ref{assumption:Lipschitz} 
hold. Then,
\[
P\left(\sup_{\theta \in \Theta} I\left(s_N(\theta, \theta^*) \le r \right) \sqrt{N} M_{2,N}(\theta,\theta^*) \ge a_N(r,x) \right) 
\le 2e^{-x}.
\]
\end{theorem}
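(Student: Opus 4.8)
The plan is to derive Theorem~\ref{theorem:max_inequality_general} by specializing the abstract maximal inequality of Theorem~\ref{theorem:handel_A4} to the increment process $\xi_{\theta,i} = f_{\theta}(O_i,C_i) - f_{\theta_0}(O_i,C_i)$ and then substituting the bracketing-entropy bound from Lemma~\ref{lemma: bracketing_dominated_by_supcovering_general}. First I would fix $K = 8mM$, where $m$ is the Lipschitz constant from Assumption~\ref{assumption:Lipschitz} and $M$ the uniform bound from Assumption~\ref{assumption:bounded_function}, so that the hypotheses of Lemma~\ref{lemma: bracketing_dominated_by_supcovering_general} are in force and the function $\sigma_N(\theta,\theta_0)$ in Assumption~\ref{assumption:convergence_of_theta_N} is exactly the quantity $\sqrt{R_{\theta,N}/(2K^2)}$ appearing implicitly in Theorem~\ref{theorem:handel_A4}: indeed, with this $\xi_{\theta,i}$ and this $K$, one has $R_{\theta,N} = 2K^2 \sigma_N(\theta,\theta_0)^2$. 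The subtle bookkeeping point here is the role reversal: in Theorem~\ref{theorem:handel_A4} the index ranges over $\theta$ with $\theta_0$ fixed, whereas in the statement we are taking a supremum over $\theta_0$ with $\theta = \theta_N$ fixed; since $\xi_{\theta,i}$ is antisymmetric in $(\theta,\theta_0)$ and the bracketing number $\bracketingnumber{\cdot}{\cdot}$ is defined symmetrically over the whole index set $\Theta$, applying Theorem~\ref{theorem:handel_A4} to the family indexed by $\theta_0 \in \Theta$ (with $\theta_N$ playing the role of the fixed reference point) is legitimate.

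Next I would set the radius. Take $R = r^2$ in Theorem~\ref{theorem:handel_A4}, so that the event $\{R_{\theta_0,N} \le R\}$ becomes $\{2K^2\sigma_N(\theta_N,\theta_0)^2 \le r^2\}$. Here I need to be slightly careful about the normalization: I would actually apply Theorem~\ref{theorem:handel_A4} with $R$ chosen so that $\sqrt{R}$ matches the argument $2\sqrt{2}\,m\,(r/(2\sqrt 2 m)) = r$ of the bracketing number, i.e. use the reparametrization $\epsilon = r/(2\sqrt 2 m)$ in Lemma~\ref{lemma: bracketing_dominated_by_supcovering_general} so that $\bracketingnumber{r}{8mM} \le \supcoveringTwo{r/(2\sqrt 2 m)}{\Theta}$ and, integrating, $\bracketingintegral{r}{8mM} \le \supintegralTwo{r/(2\sqrt 2 m)}{\Theta}$. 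Plugging these two bounds into $\cH = \frac{K}{\sqrt N}\log\bracketingnumber{\sqrt R}{K} + 4\bracketingintegral{\sqrt R}{K}$ with $\sqrt R = r$ gives $\cH \le \frac{K}{\sqrt N}\log\supcoveringTwo{r/(2\sqrt 2 m)}{\Theta} + 4\supintegralTwo{r/(2\sqrt 2 m)}{\Theta}$, and then $16\cH + 32\sqrt{Rx} + \frac{16}{\sqrt N}Kx$ is exactly $a_N(r,x)$ after collecting the constants ($16\cdot K/\sqrt N$, $16\cdot 4 = 64$, $32\sqrt{r^2 x} = 32r\sqrt x$). Finally, on the event $\{\sigma_N(\theta_N,\theta_0)\le r\}$ one has $\max_{i\in[N]}\sqrt N M_{\theta_0,i} \ge \sqrt N M_{\theta_0,N} = \sqrt N M_{2,N}(\theta_N,\theta_0)$ up to the sign convention, so the probability of $\{\sup_{\theta_0}I(\sigma_N\le r)\sqrt N M_{2,N} \ge a_N(r,x)\}$ is bounded by the probability in Theorem~\ref{theorem:handel_A4}, which is $\le 2e^{-x}$.

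The remaining step I would spell out is why the indicator in Theorem~\ref{theorem:handel_A4}, $I(R_{\theta_0,N}\le R)$, can be replaced by $I(\sigma_N(\theta_N,\theta_0)\le r)$: this is the identity $R_{\theta_0,N} = 2K^2\sigma_N(\theta_N,\theta_0)^2$ noted above together with the choice $R = 2K^2 r^2$ — so in fact I should take $R = 2K^2 r^2$ rather than $r^2$, which rescales the $32\sqrt{Rx}$ term to $32\sqrt{2}Kr\sqrt x$; I will need to double-check against the stated $a_N(r,x)$ and, if there is a discrepancy, it is absorbed into how $\sigma_N$ versus $R_{\theta,N}$ is normalized in the definitions — this constant-tracking is the one place requiring genuine care, and it is the main (though purely bookkeeping) obstacle. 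Everything else is direct substitution: Assumptions~\ref{assumption:bounded_function} and~\ref{assumption:Lipschitz} are used only to invoke Lemma~\ref{lemma: bracketing_dominated_by_supcovering_general}, and Assumption~\ref{assumption:convergence_of_theta_N} is not needed for the inequality itself (it will be used afterwards, together with this theorem, to conclude $M_{2,N}(\theta_N,\theta_0)=\smallopsqrtN$ by sending $r\to 0$ and $x$ slowly to infinity). I would close by remarking that the finiteness of $\supcoveringTwo{\cdot}{\Theta}$ assumed in the lemma is what makes $a_N(r,x)$ finite.
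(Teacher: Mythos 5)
Your proposal follows essentially the same route as the paper's proof: fix $K = 8mM$, use Lemma~\ref{lemma: bracketing_dominated_by_supcovering_general} to dominate the bracketing entropy and entropy integral by their sup-norm covering counterparts at radius $r/(2\sqrt{2}m)$, and then invoke Theorem~\ref{theorem:handel_A4} (with $\sqrt{R}=r$) to obtain the $2e^{-x}$ tail bound, with Assumption~\ref{assumption:convergence_of_theta_N} indeed playing no role in the inequality itself. The normalization issue you flag between $\sigma_N(\theta_N,\theta_0)$ and $R_{\theta,N}$ is handled in the paper exactly as you suspected---its proof implicitly identifies $I(\sigma_N(\theta_N,\theta_0)\le r)$ with the event $\{R_{\theta,N}\le r^2\}$ without tracking the $2K^2$ factor---so your constant-tracking caveat matches rather than falls short of the paper's own level of detail.
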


\begin{proof}
Recall that  
\[
M_{2,N}(\theta,\theta^*) = \frac{1}{N} \sum_{i=1}^N \left\{f_{\theta}\left(O_i,C_i\right) - f_{\theta^*}\left(O_i,C_i\right) - E\left[f_{\theta}(O_i,C_i)-f_{\theta^*}(O_i,C_i) \mid \bO(i-1)\right]\right\}.
\]

Let $Z = 8LM$ and $b_N(r,x) = \frac{16Z}{\sqrt{N}} \log \bracketingnumber{r}{Z} + 64 \bracketingintegral{r}{Z} + 32r\sqrt{x} + \frac{16}{\sqrt{N}} Zx$. 
We have that
\begin{align*}
& P\left(\sup_{\theta \in \Theta} I\left(s_N\left(\theta, \theta^*\right) \le r \right) \sqrt{N} M_{2,N}(\theta,\theta^*) \ge a_N(r,x)
\right) \\
& \le P\left(\sup_{\theta \in \Theta} I\left(s_N\left(\theta, \theta^*\right) \le r \right) \sqrt{N} M_{2,N}(\theta,\theta^*) \ge b_N(r,x)
\right) \\
& \le 2e^{-x},
\end{align*}
where the first inequality is given by Lemma \ref{lemma: bracketing_dominated_by_supcovering_general} and the second inequality is based on the maximal inequality for the martingale process in Theorem \ref{theorem:handel_A4}.
\end{proof}

\begin{restatable}{corollary}{repCoroHighOrderSplineEquicontinuity_general}
\label{corollary:equicontinuity_general}
Suppose Assumptions \ref{assumption:entropy_integral},
\ref{assumption:bounded_function}, 
\ref{assumption:Lipschitz} and
\ref{assumption:convergence_of_theta_N} hold, then
$M_{2,N}(\theta_N, \theta^*) = \smallopsqrtN$. 
\end{restatable}
\begin{proof}
Since $s_N\left(\theta_N, \theta^*\right)=O_P(\delta_N)$, one can find a $r_N = \delta_N^{1-\rho}$ for some $\rho > 0$ such that $r_N \to 0$ and $\sqrt{N}r_N^2 \to \infty$. Then, for any $x>0$, we have
\begin{align*}
    a_N(r_N, x) &= \frac{16Z}{\sqrt{N}} \log \supcoveringTwo{\frac{r_N}{2\sqrt{2}L}}{\Theta} + 64 \supintegralTwo{\frac{r_N}{2\sqrt{2}L}}{\Theta} + 32r_N\sqrt{x} + \frac{16}{\sqrt{N}} Zx \nonumber \\
    &\le \frac{16Z}{\sqrt{N}} \left(\frac{2\sqrt{2}L}{r_N}\supintegralTwo{\frac{r_N}{2\sqrt{2}L}}{\Theta}\right)^2 + 64 \supintegralTwo{\frac{r_N}{2\sqrt{2}L}}{\Theta} + 32r_N\sqrt{x} + \frac{16}{\sqrt{N}} Zx \nonumber \\
    &\le \left(\frac{108L^2Z}{\sqrt{N} r_N^2}\supintegralTwo{\frac{r_N}{2\sqrt{2}L}}{\Theta} + 64\right) \supintegralTwo{\frac{r_N}{2\sqrt{2}L}}{\Theta} + 32r_N\sqrt{x} + \frac{16}{\sqrt{N}} Zx \label{equation:equi_tail} \\
    & \to 0 \nonumber.
\end{align*}
Therefore, for any $y>0$, there exists a $N_1 > 0$ such that for $N > N_1$, $a_N(r_N,x)<y$.
Hence,
\begin{align*}
&P(I\left(s_N\left(\theta_N, \theta^*\right) \le r_N\right) \sqrt{N}M_{2,N}(\theta_N,\theta^*) > y) \\ 
&\le
P\left(\sup_{\theta \in \Theta} I\left(s_N\left(\theta, \theta^*\right)  \le r_N \right) \sqrt{N} M_{2,N}(\theta,\theta^*)
\ge a_N(r_N,x)
\right) \\
& \le 2e^{-x},
\end{align*}
where the last inequality is implied by 
Theorem \ref{theorem:max_inequality_general}.
Similarly, for any $y>0$ and the chosen $x>0$, there exists a $N_2 > 0$, such that for every $N > N_2$,
$
P(-\sqrt{N}M_{2,N} (\theta_N,\theta^*)I\left(s_N\left(\theta_N, \theta^*\right) \le r_N \right) > y) \le 2e^{-x}$. 
Since the bound can be made arbitrarily close to zero by choosing $x$ sufficiently large, and $P(s_N\left(\theta_N, \theta^*\right) \le r_N) \to 1$,
we have that 
$
\sqrt{N}M_{2,N}(\theta_N,\theta^*) = o_P(1).
$
\end{proof}

As introduced in Appendix \ref{appendix:HAL},
\cite{van2023higher} proved the supremum-norm covering number for the class of functions $\Dkzero$ for $(m = 1, 2, \ldots)$, which implies a bound on its uniform entropy integral 
$J_{\infty}\left(r, \Dkzero \right) \asymp r^{(2m+1) /(2m+2)}(-\log r)^{d-\frac{3}2}$. 
The $m$-th order HAL-MLEs are also defined, with the uniform convergence rate of $N^{-\frac{m+1}{2m+3}}$, up to logarithmic factors.
Using these theoretical results, we provide another corollary which shows that if $\theta^*$ lies in $\Dkzero$ ($m = 1,2,3,\ldots$) and $\theta_N$ is an estimate of $\theta^*$ by higher-order spline HAL-MLE, then the equicontinuity result for the martingale process $M_{2,N}$ holds.

\begin{restatable}{assumption}{repAssmpHigherOrderSpline_general}
\label{assumption:HigherOrderSpline_general}
    Suppose that $\theta^* \in \Dkzero$ and $\theta_N \in  \Dkzero$ is estimated by the $m$-order spline HAL-MLE $(m=1,2,\ldots)$.
\end{restatable}

\begin{restatable}{corollary}{repCoroHighOrderSplineEquicontinuity_general}
\label{corollary:HighOrderSplineEquicontinuity_general}
Suppose Assumptions 
\ref{assumption:bounded_function}, 
\ref{assumption:Lipschitz} and
\ref{assumption:HigherOrderSpline_general} hold, then
$M_{2,N}(\theta_N, \theta^*) = \smallopsqrtN$. 
\end{restatable}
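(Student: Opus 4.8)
The plan is to obtain Corollary~\ref{corollary:HighOrderSplineEquicontinuity_general} as an immediate consequence of Corollary~\ref{corollary:equicontinuity_general}. That corollary already yields $M_{2,N}(\theta_N,\theta_0)=\smallopsqrtN$ from Assumptions~\ref{assumption:bounded_function}, \ref{assumption:Lipschitz} and \ref{assumption:convergence_of_theta_N}, and the first two are hypotheses here, so the only task is to show that Assumption~\ref{assumption:HigherOrderSpline_general} (together with Assumptions~\ref{assumption:bounded_function}--\ref{assumption:Lipschitz}) implies Assumption~\ref{assumption:convergence_of_theta_N}, i.e. $\sigma_N(\theta_N,\theta_0)=o_P(1)$.

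First I would record a purely deterministic, uniform-in-$\theta$ comparison: for every $\theta\in\Theta$, $\sigma_N(\theta,\theta_0)\le\frac{1}{8M}\supnorm{\theta-\theta_0}$, where $M$ bounds $\sup_{\theta\in\Theta}\supnorm{\theta}$ (Assumption~\ref{assumption:bounded_function}), $m$ is the Lipschitz constant of Assumption~\ref{assumption:Lipschitz}, and $K=8mM$ as in Assumption~\ref{assumption:convergence_of_theta_N} and Lemma~\ref{lemma: bracketing_dominated_by_supcovering_general}. The argument is the pointwise estimate already used in the proof of that lemma: for any $(o,c)$,
\[
\left|\frac{f_{\theta}(o,c)-f_{\theta_0}(o,c)}{K}\right|\ \le\ \frac{m\,\supnorm{\theta-\theta_0}}{8mM}\ \le\ \frac{2mM}{8mM}\ =\ \tfrac14,
\]
and on $[-\tfrac14,\tfrac14]$ one has $\phi(x)=\sum_{p\ge2}x^p/p!\le x^2$ (since $\phi(x)\le\phi(|x|)$ and $\sum_{p\ge2}|x|^p/p!\le\tfrac{|x|^2}{2}\sum_{j\ge0}|x|^j\le x^2$ for $|x|\le\tfrac14$). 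Hence $\phi\big(\tfrac{f_\theta(o,c)-f_{\theta_0}(o,c)}{K}\big)\le\tfrac{m^2}{K^2}\supnorm{\theta-\theta_0}^2$ for every $(o,c)$; taking conditional expectations and averaging over $i\in[N]$ in the definition of $\sigma_N$ gives the claimed bound. Evaluating this deterministic inequality at the (random) estimator then gives $\sigma_N(\theta_N,\theta_0)\le\frac{1}{8M}\supnorm{\theta_N-\theta_0}$.

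Next I would invoke the higher-order HAL theory. Under Assumption~\ref{assumption:HigherOrderSpline_general}, $\theta_0\in\Dkzero$ and $\theta_N$ is the $k$-th order spline HAL-MLE, so the supremum-norm (uniform) rate of \cite{van2023higher} recalled in Appendix~\ref{appendix:HAL} gives $\supnorm{\theta_N-\theta_0}=O_P\!\big(N^{-(k+1)/(2k+3)}\big)$ up to logarithmic factors, and in particular $\supnorm{\theta_N-\theta_0}=o_P(1)$. Combined with the display above, $\sigma_N(\theta_N,\theta_0)=o_P(1)$, so Assumption~\ref{assumption:convergence_of_theta_N} holds and Corollary~\ref{corollary:equicontinuity_general} yields $M_{2,N}(\theta_N,\theta_0)=\smallopsqrtN$.

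The hard part will not be any calculation — both steps are short — but making sure the imported ingredient is the right one: one needs \emph{sup-norm} consistency of the $k$-th order spline HAL-MLE (a mere $L^2(P)$ rate would not obviously control $\sigma_N$, which is assembled from conditionally centered increments of adaptively generated data), and one should check that the hypothesis $\theta_0\in\Dkzero$ is compatible with the uniform boundedness in Assumption~\ref{assumption:bounded_function}, so that the cited HAL rate is legitimately applicable in the present martingale setup. Everything else is a routine reduction to Corollary~\ref{corollary:equicontinuity_general}.
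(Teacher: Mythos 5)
Your proposal is correct and follows essentially the same route as the paper: reduce to Corollary~\ref{corollary:equicontinuity_general} by verifying Assumption~\ref{assumption:convergence_of_theta_N} via the supremum-norm convergence rate $\supnorm{\theta_N-\theta_0}=O_P(N^{-(k+1)/(2k+3)}\log^{m_1}N)$ for the $k$-th order spline HAL-MLE from \cite{van2023higher}. The only difference is that you make explicit the deterministic bound $\sigma_N(\theta_N,\theta_0)\le\frac{1}{8M}\supnorm{\theta_N-\theta_0}$ (via the Lipschitz condition and the expansion of $\phi$), a step the paper asserts without detail, so your write-up is a slightly more complete version of the same argument.
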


\begin{proof}
Let $N_{\infty}\left(r, \Dkzero \right)$ denote the number of spheres of size $r$ w.r.t. supremum norm that are needed to cover $\Dkzero$.
Define $J_{\infty}\left(r, \Dkzero \right)$ as the corresponding supremum norm entropy integral of $\Dkzero$.
In \cite{van2023higher}, Lemma 29 shows that
$
J_{\infty}\left(r, \Dkzero \right) \asymp r^{(2m+1) /(2m+2)}(-\log r)^{d-\frac{3}2}$, which implies $J_{\infty}\left(r_N, \Dkzero \right) \to 0$ for  any $r_N \to 0$ and satisfies Assumption \ref{assumption:entropy_integral};
Theorem 12 proves the uniform convergence rate $\supnorm{\theta_N - \theta^*} = O_P(N^{-\frac{m+1}{2m+3}}\log^{c_1} N)$ for some $c_1 < \infty$, which satisfies Assumption \ref{assumption:convergence_of_theta_N}.
Therefore, together with \ref{assumption:bounded_function} and
\ref{assumption:Lipschitz},
one can apply Corollary \ref{corollary:equicontinuity_general} to prove that $M_{2,N}(\theta_N, \theta^*) = \smallopsqrtN$. 
\end{proof}

\section{Proof of Equicontinuity Result in Adaptive Designs} 
\label{appendix:equicontinuity}
Recall that in Section \ref{section:tmle_analysis}
\begin{align*}
    M_{2,n_t}\left(\bQ_{n,K,t}^*, \bQ_{1,K}\right) &= \frac{1}{n_t}\sum_{i=1}^{n_t}\Biggl\{ \biggl[
    D_k\left(\bQ_{n,K,t}^*\right)\left(O_i,C_i\right)-E\left[ D_k\left(\bQ_{n,K,t}^*\right)\left(O_i,C_i\right)|\cH(i) \right] \biggr]\\
    &- \biggl[ D_k\left(\bQ_{1,K}\right)\left(O_i, C_i\right)-E\left[ D_k\left(\bQ_{1,K}\right)\left(O_i, C_i\right)|\cH(i) \right] \biggr] \Biggr\},
\end{align*}
where
$
D_k(\bQ_K)\left(O_i,C_i\right) := \frac{g^*_{k,i}\left(A_i|C_i\right)}{g_{0,i}\left(A_i|C_i\right)}\left( Y_{i, K}-\bQ_K\left(A_i, W_i\right) \right)$.
We apply the general results in Appendix \ref{appendix:equicontinuity_general} to prove the equicontinuity condition
$M_{2,n_t}(\bQ_{n,K,t}^*, \bQ_{1,K}) = o_P(n_t^{-\frac{1}{2}})$ in Theorem \ref{theorem:equicontinuity} and Corollary \ref{corollary:HighOrderSplineEquicontinuity} in the context of adaptive designs.

\begin{proof}[Proof of Theorem \ref{theorem:equicontinuity} and Corollary \ref{corollary:HighOrderSplineEquicontinuity}]
        Let $\Theta = \cbQ_K$, $f_{\theta} = D_k(\bQ_K)$, $\theta^* = \bQ_{1,K}$, $N=n_t$, $\theta_N = \bQ_{n,K,t}$ in the general setup in Appendix \ref{appendix:equicontinuity_general}. 
    Under Assumption \ref{assumption:strong_positivity}, 
    $\zeta < g_{0,i} < 1-\zeta$ and $\zeta < g_{k,i}^{*} < 1-\zeta$.
    Since $Y_K$ and $Q_K$ take values in the interval $[0,1]$, we have that $\sup_{\bQ_K \in \cbQ_K} \supnorm{D_k(\bQ_K)} \le  \frac{2(1-\zeta)}{\zeta} < \infty$, $\sup_{\bQ_{K,1} \in \cbQ_K, \bQ_{K,2} \in \cbQ_K, \bQ_{K,1} \ne \bQ_{K,2}}\frac{\supnorm{D_k(\bQ_{K,1}) - D_k(\bQ_{K,2})}}{\supnorm{\bQ_{K,1} - \bQ_{K,2}}} \le \frac{1-\zeta}{\zeta} < \infty $. 
    Therefore, Assumptions \ref{assumption:bounded_function} and \ref{assumption:Lipschitz} are satisfied.
    Since Assumption \ref{assumption:sigma_N_convergence} corresponds to Assumption \ref{assumption:convergence_of_theta_N} in the general framework, we can utilize Corollary \ref{corollary:equicontinuity_general} to demonstrate that \( M_{2,n_t}\left(\bQ_{n,K,t}^*, \bQ_{1,K}\right) = o_P\left(n_t^{-\frac{1}{2}}\right) \).
    Similarly, Assumption \ref{assumption:HigherOrderSpline} parallels Assumption \ref{assumption:HigherOrderSpline_general} in Corollary \ref{corollary:HighOrderSplineEquicontinuity_general}. Therefore, Corollary \ref{corollary:HighOrderSplineEquicontinuity} is a specification of Corollary \ref{corollary:HighOrderSplineEquicontinuity_general} within our adaptive design setup, which implies $s_{n_t}(\bQ_{n,K,t},\bQ_{1,K})=O_P(\delta_{n_t})$ where $\delta_{n_t}$ is the HAL convergence rate and preserved by $s_{n_t}(\bQ_{n,K,t}^*,\bQ_{1,K})$ with the TMLE update $\bQ_{n,K,t}^*$. This results in the equicontinuity condition $M_{2,n_t}\left(\bQ_{n,K,t}^*, \bQ_{1,K}\right) = o_P\left(n_t^{-\frac{1}{2}}\right)$.
\end{proof}

\section{Additional Results for Section \ref{section:simulations}}
\label{appendix:additional_results}

The data generating distributions of the two simulation scenarios in Section \ref{section:simulations} are as follows.
The baseline covariate $W$ follows a uniform distribution $U(-4,4)$. 
In Scenario 1, the conditional mean functions of $Y_k$ given $A$ and $W$ are
$E[Y_k|A,W] = (2A-1) \times \left[0.5 - (1 + \text{exp}(-(3-k) - W))^{-1}\right]$.
In Scenario 2, the conditional mean functions of $Y_k$ are
$E[Y_k|A,W] = (2A-1) \times \left[0.5 - (1 + \text{exp}(-\gamma_k W))^{-1}\right]$, 
where $\gamma_1 = 3$,$\gamma_2 = 2$,$\gamma_3 = 1$,$\gamma_4 = 0.5$ and $\gamma_5 = 0.25$.
The treatment randomization probability of binary treatment $A \in \{0,1\}$ depends on the configuration of each design.

In simulation studies, we estimate these conditional mean functions using Super Learner \citep{van2007super}, which includes ordinary least squares, Random Forest \citep{breiman2001random}, the first-order Highly Adaptive Lasso (HAL) estimator \citep{van2023higher} and an intercept model in the library. The CATE functions and their pointwise confidence intervals are estimated using the first-order HAL, as described in Appendix \ref{appendix:CATE}.

\begin{table}[H]\centering
    \caption{Bias ($\times 10^{-3}$) of TMLE estimators for $\Psi_{t,k}$ in Scenarios 1 and 2 ($t = 11,21,31,41,50$).}
    \label{tab:bias}
    \vspace{0.5em}
    \begin{minipage}[t]{0.48\linewidth}
        \makebox[\textwidth]{\small (a) Scenario 1} 
        \resizebox{0.96\textwidth}{!}{
        \large
        \begin{tabular}{*{10}{c}}
            \toprule
           $t$ & $\Psi_{t,RCT}$ & $\Psi_{t,1}$ &  $\Psi_{t,2}$ & $\Psi_{t,3}$ & $\Psi_{t,4}$ & $\Psi_{t,5}$ \\
            \midrule
            11 & 3.90 & 3.22 & 4.20 & 5.02 & 5.09 & 4.70 \\
            21 & -0.83 & -0.55 & -0.32 & -0.78 & 1.06 & 0.98 \\
            31 & 0.03 & -0.07 & -0.16 & 1.09 & 0.18 & 1.13 \\
            41 & 0.61 & -0.02 & 1.27 & 0.25 & 0.72 & -0.51 \\
            50 & -0.51 & -0.48 & -0.91 & 0.11 & -0.94 & -0.01 \\
            \bottomrule
        \end{tabular}
    }
\end{minipage}\hfill
    \vspace{0.5em}
    \begin{minipage}[t]{0.48\linewidth}
        \makebox[\textwidth]{\small (b) Scenario 2} 
        \resizebox{0.96\textwidth}{!}{
        \large
        \begin{tabular}{*{10}{c}}
            \toprule
           $t$ & $\Psi_{t,RCT}$ & $\Psi_{t,1}$ &  $\Psi_{t,2}$ & $\Psi_{t,3}$ & $\Psi_{t,4}$ & $\Psi_{t,5}$ \\
            \midrule
            11 & 1.96 & 2.94 & 1.42 & 2.57 & 2.91 & 2.96 \\
            21 & -2.73 & 0.23 & -0.35 & -0.59 & 0.43 & 0.97 \\
            31 & -1.34 & -0.61 & -1.17 & -1.24 & -0.92 & -0.72 \\
            41 & -1.18 & 0.14 & -0.17 & -0.36 & -0.22 & -0.58 \\
            50 & -1.84 & -0.23 & -0.42 & -0.74 & -0.42 & -0.86 \\
            \bottomrule
            \end{tabular}
        }
    \end{minipage}\hfill
\end{table}

\begin{table}[htb]\centering
    \caption{Variance ($\times 10^{-3}$) of TMLE estimators for $\Psi_{t,k}$ in Scenarios 1 and 2 ($t = 11,21,31,41,50$).}
    \label{tab:variance}
    \vspace{0.5em}
    \begin{minipage}[t]{0.48\linewidth}
        \makebox[\textwidth]{\small (a) Scenario 1} 
        \resizebox{0.96\textwidth}{!}{
        \large
        \begin{tabular}{*{10}{c}}
            \toprule
           $t$ & $\Psi_{t,RCT}$ & $\Psi_{t,1}$ &  $\Psi_{t,2}$ & $\Psi_{t,3}$ & $\Psi_{t,4}$ & $\Psi_{t,5}$ \\
            \midrule
            11 & 3.16 & 7.34 & 6.59 & 5.35 & 4.07 & 3.51 \\
            21 & 2.11 & 4.10 & 3.25 & 2.55 & 1.91 & 2.11 \\
            31 & 1.50 & 3.26 & 2.29 & 1.65 & 1.16 & 1.17 \\
            41 & 1.14 & 2.57 & 1.71 & 1.26 & 0.82 & 0.79 \\
            50 & 1.04 & 2.34 & 1.61 & 1.15 & 0.72 & 0.65 \\
            \bottomrule
        \end{tabular}
    }
\end{minipage}\hfill
    \vspace{0.5em}
    \begin{minipage}[t]{0.48\linewidth}
        \makebox[\textwidth]{\small (b) Scenario 2} 
        \resizebox{0.96\textwidth}{!}{
        \large
        \begin{tabular}{*{10}{c}}
            \toprule
           $t$ & $\Psi_{t,RCT}$ & $\Psi_{t,1}$ &  $\Psi_{t,2}$ & $\Psi_{t,3}$ & $\Psi_{t,4}$ & $\Psi_{t,5}$ \\
            \midrule
            11 & 3.40 & 4.74 & 4.64 & 4.24 & 3.81 & 3.76 \\
            21 & 2.46 & 1.63 & 1.62 & 1.60 & 1.64 & 1.82 \\
            31 & 1.59 & 0.96 & 0.98 & 0.97 & 1.00 & 1.13 \\
            41 & 1.30 & 0.66 & 0.67 & 0.67 & 0.71 & 0.82 \\
            50 & 1.07 & 0.54 & 0.55 & 0.54 & 0.58 & 0.68 \\
            \bottomrule
            \end{tabular}
        }
    \end{minipage}\hfill
\end{table}

\begin{table}[H]\centering
    \caption{Coverage probability (\%) of confidence intervals for $\Psi_{t,k}$ in Scenarios 1 and 2 ($t = 11,21,31,41,50$).}
    \label{tab:coverage}
    \vspace{0.5em}
    \begin{minipage}[t]{0.48\linewidth}
        \makebox[\textwidth]{\small (a) Scenario 1} 
        \resizebox{0.96\textwidth}{!}{
        \large
        \begin{tabular}{*{10}{c}}
            \toprule
           $t$ & $\Psi_{t,RCT}$ & $\Psi_{t,1}$ &  $\Psi_{t,2}$ & $\Psi_{t,3}$ & $\Psi_{t,4}$ & $\Psi_{t,5}$ \\
            \midrule
            11 & 96.4 & 96.2 & 95.8 & 95.6 & 95.6 & 96.2 \\
            21 & 95.0 & 93.8 & 93.4 & 94.6 & 94.2 & 95.2 \\
            31 & 96.2 & 95.2 & 94.8 & 94.2 & 95.2 & 95.0 \\
            41 & 95.6 & 95.6 & 97.2 & 94.4 & 94.6 & 95.8 \\
            50 & 94.4 & 94.2 & 95.0 & 93.6 & 94.0 & 94.8 \\
            \bottomrule
        \end{tabular}
    }
\end{minipage}\hfill
    \vspace{0.5em}
    \begin{minipage}[t]{0.48\linewidth}
        \makebox[\textwidth]{\small (b) Scenario 2} 
        \resizebox{0.96\textwidth}{!}{
        \large
        \begin{tabular}{*{10}{c}}
            \toprule
           $t$ & $\Psi_{t,RCT}$ & $\Psi_{t,1}$ &  $\Psi_{t,2}$ & $\Psi_{t,3}$ & $\Psi_{t,4}$ & $\Psi_{t,5}$ \\
            \midrule
            11 & 96.2 & 95.2 & 94.8 & 95.0 & 95.4 & 94.6 \\
            21 & 94.0 & 95.2 & 95.2 & 94.4 & 94.4 & 95.0 \\
            31 & 94.4 & 94.2 & 94.2 & 94.8 & 94.0 & 95.4 \\
            41 & 94.8 & 95.0 & 94.4 & 94.0 & 95.4 & 95.4 \\
            50 & 94.6 & 94.2 & 94.8 & 94.2 & 94.6 & 94.8 \\
            \bottomrule
            \end{tabular}
        }
    \end{minipage}\hfill
\end{table}

\begin{table}[H]\centering
    \caption{Probability (\%) of not assigning optimal personalized treatments in Scenarios 1 and 2 ($t = 11,21,31,41,50$). $\Pnonopt_{t} := \frac{1}{E(t)}\sum_{i: t_i = t}I\left(A_i \ne d_{0,K}(W_i)\right)$, where $d_{0,K}(W) := I(\bQ_{0,K}(1,W)-\bQ_{0,K}(0,W)>0)$}
    \label{tab:incorrectTreatment}
    \begin{minipage}[t]{0.48\linewidth}
        \makebox[\textwidth]{\small (a) Scenario 1} 
        \resizebox{0.96\textwidth}{!}{
        \large
        \begin{tabular}{*{10}{c}}
            \toprule
            $t$ & $\Pnonopt_{t,SL}$ & $\Pnonopt_{t,1}$ &  $\Pnonopt_{t,2}$ & $\Pnonopt_{t,3}$ & $\Pnonopt_{t,4}$ & $\Pnonopt_{t,5}$  & $\Pnonopt_{t,RCT}$ \\
            \midrule
            11 & 28.1 & 51.7 & 40.4 & 29.9 & 20.2 & 15.9 & 50.2 \\
            21 & 20.0 & 51.2 & 39.9 & 29.7 & 19.7 & 14.4 & 49.7 \\
            31 & 17.8 & 51.0 & 40.9 & 30.3 & 19.6 & 13.5 & 50.5 \\
            41 & 15.8 & 50.4 & 40.1 & 30.3 & 19.9 & 13.3 & 49.9 \\
            50 & 15.0 & 50.8 & 40.2 & 30.1 & 19.7 & 12.4 & 50.0 \\  
            \bottomrule
        \end{tabular}
        }
\end{minipage}\hfill
    \begin{minipage}[t]{0.48\linewidth}
        \makebox[\textwidth]{\small (b) Scenario 2} 
        \resizebox{0.96\textwidth}{!}{
        \large
        \begin{tabular}{*{10}{c}}
            \toprule
            $t$ & $\Pnonopt_{t,SL}$ & $\Pnonopt_{t,1}$ &  $\Pnonopt_{t,2}$ & $\Pnonopt_{t,3}$ & $\Pnonopt_{t,4}$ & $\Pnonopt_{t,5}$  & $\Pnonopt_{t,RCT}$\\
            \midrule
            11 & 15.5 & 13.9 & 14.3 & 15.7 & 18.7 & 29.8 & 49.8 \\
            21 & 13.4 & 13.1 & 13.2 & 13.9 & 16.3 & 22.7 & 49.7 \\
            31 & 13.2 & 12.3 & 12.4 & 12.7 & 14.8 & 19.7 & 49.5 \\
            41 & 12.5 & 11.8 & 11.8 & 12.7 & 13.9 & 17.9 & 49.5 \\
            50 & 12.4 & 12.0 & 12.1 & 12.5 & 14.2 & 16.9 & 50.1 \\
            \bottomrule
        \end{tabular}
        }
    \end{minipage}\hfill
\end{table}

\begin{table}[H]\centering
    \caption{Average regret of not selecting optimal personalized treatments in Scenarios 1 and 2 ($t = 11,21,31,41,50$).}
    \label{tab:regret}
    \begin{minipage}[t]{0.48\linewidth}
        \makebox[\textwidth]{\small (a) Scenario 1} 
        \resizebox{0.96\textwidth}{!}{
        \large
        \begin{tabular}{*{10}{c}}
            \toprule
            $t$ & $R_{t,SL}$ & $R_{t,1}$ &  $R_{t,2}$ & $R_{t,3}$ & $R_{t,4}$ & $R_{t,5}$ & $R_{t,RCT}$ \\
            \midrule
            11 & 0.153 & 0.350 & 0.245 & 0.159 & 0.100 & 0.083 & 0.343 \\
            21 & 0.102 & 0.344 & 0.239 & 0.154 & 0.095 & 0.076 & 0.342 \\
            31 & 0.089 & 0.341 & 0.247 & 0.158 & 0.094 & 0.074 & 0.346 \\
            41 & 0.083 & 0.337 & 0.240 & 0.158 & 0.094 & 0.074 & 0.341 \\
            50 & 0.080 & 0.341 & 0.241 & 0.157 & 0.093 & 0.071 & 0.345 \\
            \bottomrule
        \end{tabular}
        }
\end{minipage}\hfill
    \begin{minipage}[t]{0.48\linewidth}

        \makebox[\textwidth]{\small (b) Scenario 2} 
        \resizebox{0.96\textwidth}{!}{
        \large
        \begin{tabular}{*{10}{c}}
            \toprule
            $t$ & $R_{t,SL}$ & $R_{t,1}$ &  $R_{t,2}$ & $R_{t,3}$ & $R_{t,4}$ & $R_{t,5}$ & $R_{t,RCT}$ \\
            \midrule
            11 & 0.030 & 0.026 & 0.026 & 0.028 & 0.033 & 0.062 & 0.120 \\
            21 & 0.026 & 0.026 & 0.026 & 0.026 & 0.030 & 0.043 & 0.120 \\
            31 & 0.025 & 0.024 & 0.025 & 0.025 & 0.027 & 0.035 & 0.119 \\
            41 & 0.024 & 0.023 & 0.023 & 0.024 & 0.025 & 0.031 & 0.118 \\
            50 & 0.025 & 0.024 & 0.024 & 0.024 & 0.025 & 0.029 & 0.120 \\
            \bottomrule
        \end{tabular}
        }
    \end{minipage}\hfill
\end{table}

\begin{figure}[htb]
    \centering    \includegraphics[width=0.475\textwidth]{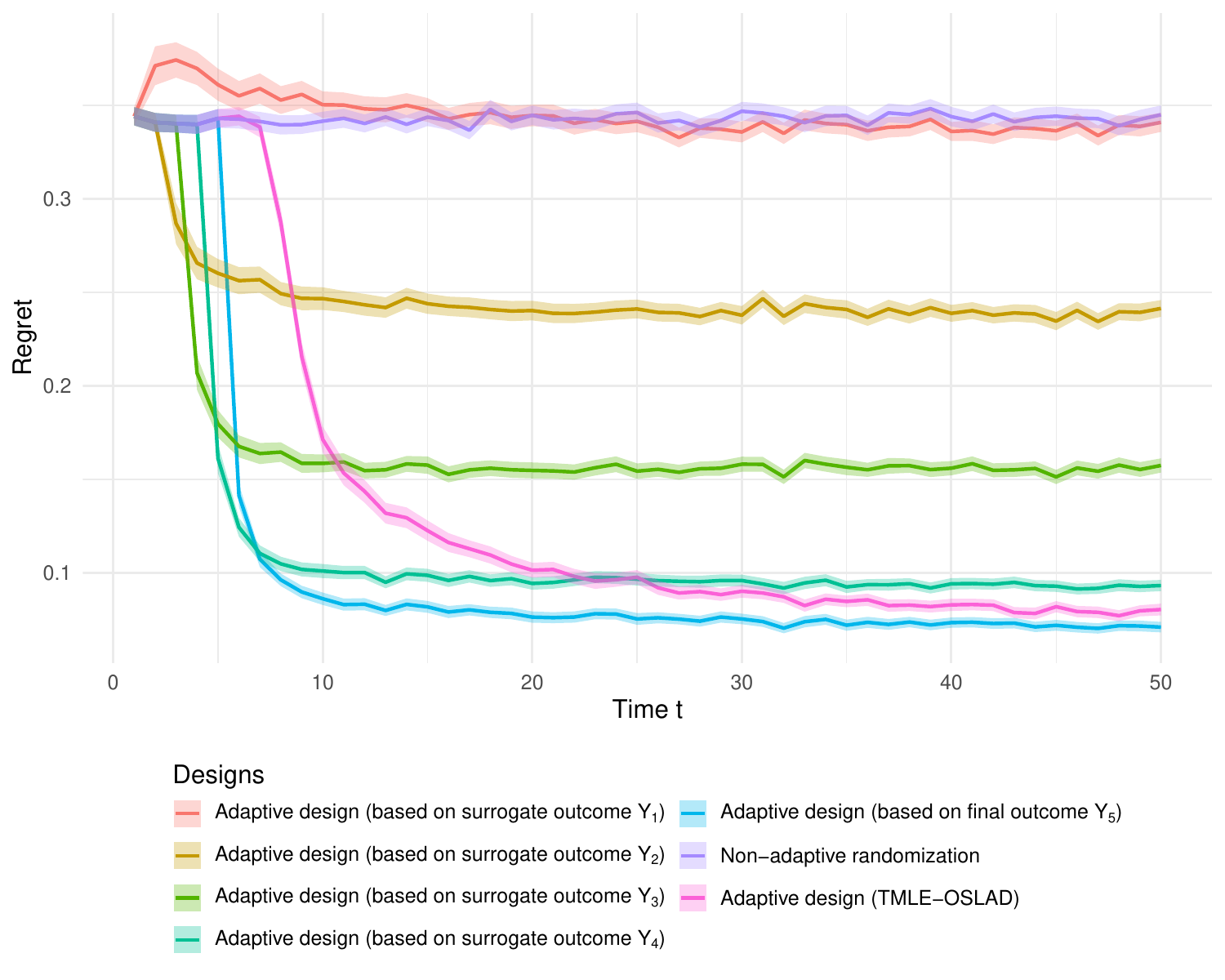}
    \includegraphics[width=0.475\textwidth]{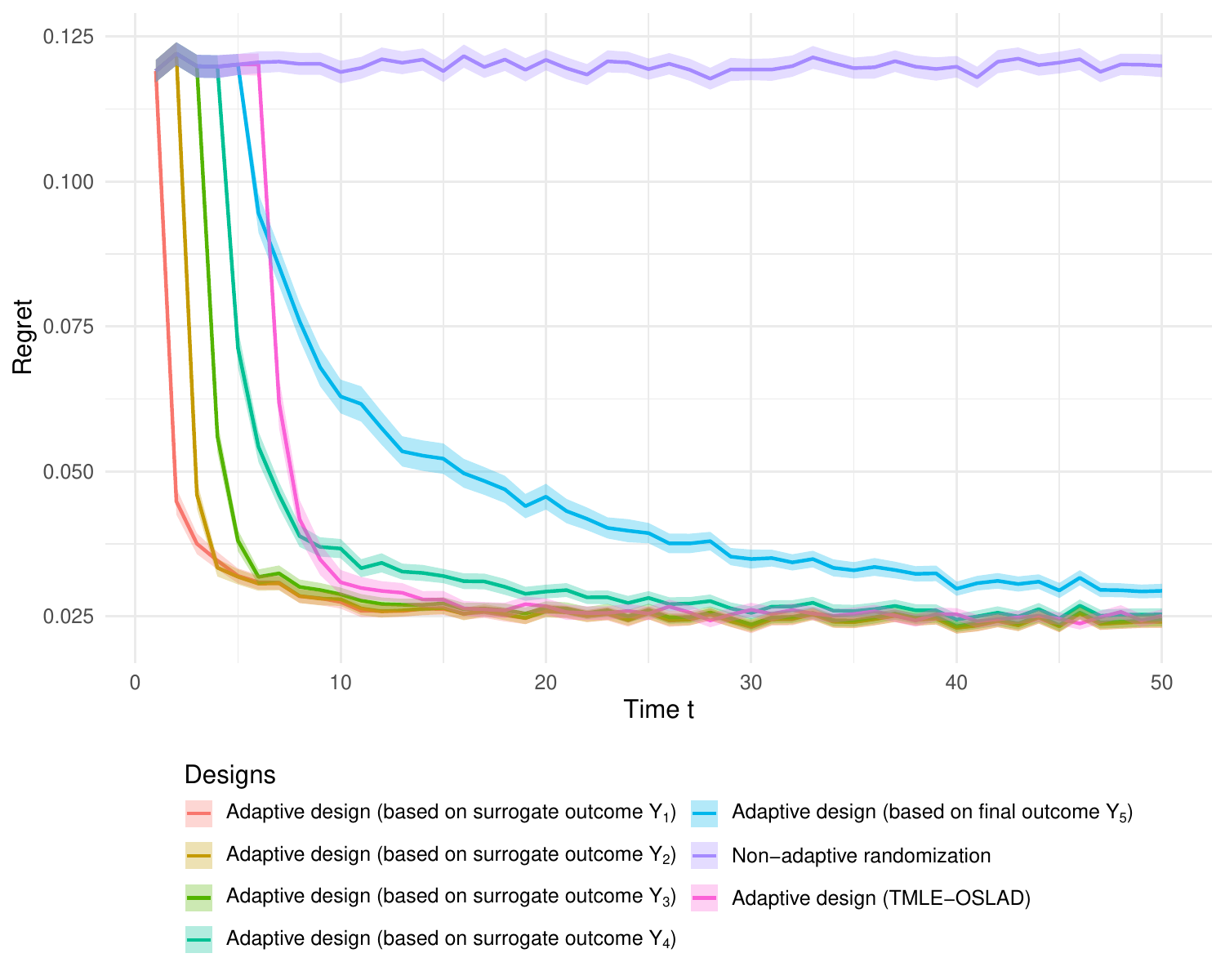}
    \caption{Regret at time $t$ for different design strategies in Scenario 1 (left) and Scenario 2 (right). This figure includes seven curves: five curves correspond to regrets from adaptive designs using outcomes from $Y_1$ to $Y_5$, respectively; one curve illustrates the regret for an adaptive design employing a TMLE-OSLAD to evaluate and utilize surrogates; and the last curve represents the regret of a fixed design, which assigns equal probabilities to each treatment group. The shaded regions represent twice the standard error of the regret across simulations.}
    \label{fig:regret}
\end{figure}

\begin{table}[thb]\centering
    \caption{Performance of TMLE estimators for the proposed adaptive design estimands at the end of experiments in Scenarios 1 and 2 across Monte Carlo simulations.}
    \label{tab:estimation_design_estimands_at_end}
    \makebox[\textwidth]{\small (a) Scenario 1} 
    \resizebox{0.96\textwidth}{!}{
    \large
    \begin{tabular}{*{10}{c}}
        \toprule
        Estimand & Notation & Truth & Bias ($\times 10^{-3}$) & Var ($\times 10^{-3}$) & Coverage  (\%)\\
        \midrule
        Expected final outcome $Y_5$ under the adaptive design using $Y_1$ & $\Psi_{\Tend,1}$ & 0.000 & -0.88 & 2.20 & 94.0 \\
        Expected final outcome $Y_5$ under the adaptive design using $Y_2$ & $\Psi_{\Tend,2}$ & 0.097 & -1.03 & 1.49 & 94.4 \\
        Expected final outcome $Y_5$ under the adaptive design using $Y_3$ & $\Psi_{\Tend,3}$ & 0.175 & 0.01 & 1.06 & 93.6 \\
        Expected final outcome $Y_5$ under the adaptive design using $Y_4$ & $\Psi_{\Tend,4}$ & 0.225 & 0.92 & 0.67 & 94.2 \\
        Expected final outcome $Y_5$ under the adaptive design using $Y_5$ & $\Psi_{\Tend,5}$ & 0.238 & 0.06 & 0.59 & 94.6 \\ \bottomrule
    \end{tabular}
    }
    \vspace{0.5em} 
    
    \makebox[\textwidth]{\small (b) Scenario 2} 
    \resizebox{0.96\textwidth}{!}{
    \large
    \begin{tabular}{*{10}{c}}
        \toprule
        Estimand & Notation & Truth & Bias ($\times 10^{-3}$) & Var ($\times 10^{-3}$) & Coverage (\%)\\
        \midrule
        Expected Final Outcome $Y_5$ under the adaptive design using $Y_1$ & $\Psi_{\Tend,1}$ & 0.093 & -0.46 & 0.48 & 94.6 \\
        Expected Final Outcome $Y_5$ under the adaptive design using $Y_2$ & $\Psi_{\Tend,2}$ & 0.091 & -0.70 & 0.50 & 94.4 \\
        Expected Final Outcome $Y_5$ under the adaptive design using $Y_3$ & $\Psi_{\Tend,3}$ & 0.088 & -1.01 & 0.49 & 94.2 \\
        Expected Final Outcome $Y_5$ under the adaptive design using $Y_4$ & $\Psi_{\Tend,4}$ & 0.083 & -0.62 & 0.52 & 94.6 \\
        Expected Final Outcome $Y_5$ under the adaptive design using $Y_5$ & $\Psi_{\Tend,5}$ & 0.068 & -0.88 & 0.60 & 94.4 \\ 
        \bottomrule
    \end{tabular}
    }
\end{table}

Table \ref{tab:truePsi_marginal} reports the true expected final outcome $Y_5$ obtained from implementing different adaptive designs at illustrative time points $t = 11, 21, 31, 41$, and $50$, respectively.
These values represent the ideal estimand as discussed in Appendix \ref{appendix:ideal_estimand}.
We use $\tPsi_{t,k}$ to denote the ideal estimand under the adaptive design guided by surrogate $Y_k$. We further use $\tPsi_{t,RCT}$ and $\tPsi_{t,SL}$ to denote the true expected final outcomes obtained from implementing the non-adaptive RCT and our proposed TMLE-OSLAD design, respectively.
Comparing with the proposed context-specific estimands $\Psi_{t,k}$ in Table \ref{tab:truePsi}, the two estimands show minimal discrepancy, and the relative superiority between different surrogate-guided designs remains unchanged.

\begin{table}[thb]\centering
    \caption{True expected final outcome $\tPsi_{t,k}$ of implementing different adaptive designs in Scenarios 1 and 2 ($t = 11,21,31,41,50,T_{\textit{end}}$).}
    \label{tab:truePsi_marginal}
    \vspace{0.5em}
    \begin{minipage}[t]{0.48\linewidth}
        \makebox[\textwidth]{\small (a) Scenario 1} 
        \resizebox{0.96\textwidth}{!}{
        \large
        \begin{tabular}{*{10}{c}}
            \toprule
           $t$ & $\tPsi_{t,RCT}$ & $\tPsi_{t,1}$ &  $\tPsi_{t,2}$ & $\tPsi_{t,3}$ & $\tPsi_{t,4}$ & $\tPsi_{t,5}$ & $\tPsi_{t,SL}$\\
            \midrule
            11 & 0.000 & -0.019 & 0.050 & 0.078 & 0.066 & 0.033 & 0.000 \\
            21 & 0.000 & -0.012 & 0.079 & 0.144 & 0.176 & 0.172 & 0.101 \\
            31 & 0.000 & -0.007 & 0.088 & 0.161 & 0.203 & 0.208 & 0.155 \\
            41 & 0.000 & -0.004 & 0.092 & 0.168 & 0.216 & 0.225 & 0.183 \\
            50 & 0.000 & -0.002 & 0.095 & 0.172 & 0.222 & 0.234 & 0.199 \\
            $T_{\textit{end}}$ & 0.000 & -0.001 & 0.096 & 0.174 & 0.225 & 0.238 & 0.205 \\
            \bottomrule
        \end{tabular}
    }
\end{minipage}\hfill
    \vspace{0.5em}
    \begin{minipage}[t]{0.48\linewidth}

        \makebox[\textwidth]{\small (b) Scenario 2} 
        \resizebox{0.96\textwidth}{!}{
        \large
        \begin{tabular}{*{10}{c}}
            \toprule
           $t$ & $\tPsi_{t,RCT}$ & $\tPsi_{t,1}$ &  $\tPsi_{t,2}$ & $\tPsi_{t,3}$ & $\tPsi_{t,4}$ & $\tPsi_{t,5}$ & $\tPsi_{t,SL}$\\
            \midrule
            11 & 0.000 & 0.070 & 0.057 & 0.039 & 0.019 & 0.004 & 0.000 \\
            21 & 0.000 & 0.085 & 0.079 & 0.072 & 0.060 & 0.038 & 0.054 \\
            31 & 0.000 & 0.089 & 0.085 & 0.081 & 0.072 & 0.053 & 0.069 \\
            41 & 0.000 & 0.090 & 0.088 & 0.085 & 0.078 & 0.062 & 0.076 \\
            50 & 0.000 & 0.091 & 0.090 & 0.087 & 0.081 & 0.067 & 0.080 \\
            $T_{\textit{end}}$ & 0.000 & 0.092 & 0.090 & 0.088 & 0.082 & 0.069 & 0.082 \\
            \bottomrule
        \end{tabular}
        }
    \end{minipage}\hfill
\end{table}

\begingroup
\color{vz}

\section{Sensitivity Analyses}
\label{app:sensitivity}
We further conduct a series of sensitivity analyses to compare the performance of TMLE-OSLAD across different design configurations.

First, we perform sensitivity analyses to assess how TMLE-OSLAD’s surrogate selection pattern changes with different $\alpha$'s used in (i) the $100(1-\alpha)\%$ TMLE-based Wald-type confidence interval for evaluating and choosing from candidate designs, and (ii) the $100(1-\alpha)\%$ CATE confidence intervals for generating treatment randomization probabilities.
Figures \ref{fig:sim1-sensi-tmle-alpha-freq} and \ref{fig:sim2-sensi-tmle-alpha-freq} report the selection frequency of each surrogate-guided design at each time point when the TMLE-based selection criterion, originally based on 95\% CIs, is re-evaluated using 90\% and 99\% confidence intervals. The selection patterns remain consistent across these choices.
\begin{figure}[H]
    \centering
    \includegraphics[width=0.95\linewidth]{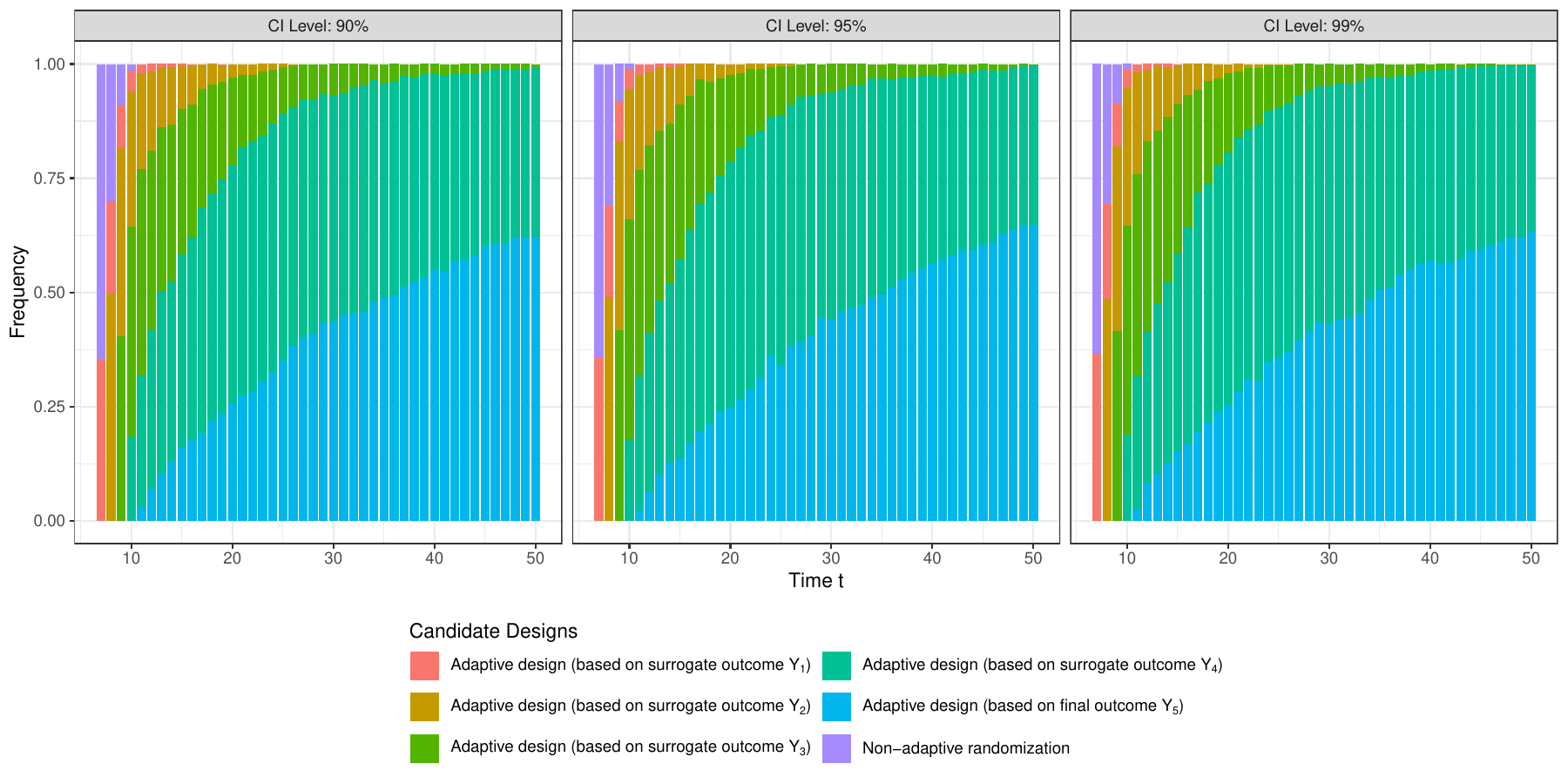}
    \caption{Frequency of candidate designs selected by TMLE-OSLAD using different TMLE confidence interval levels in Scenario 1.}
    \label{fig:sim1-sensi-tmle-alpha-freq}
\end{figure}
\begin{figure}[H]
    \centering
    \includegraphics[width=0.95\linewidth]{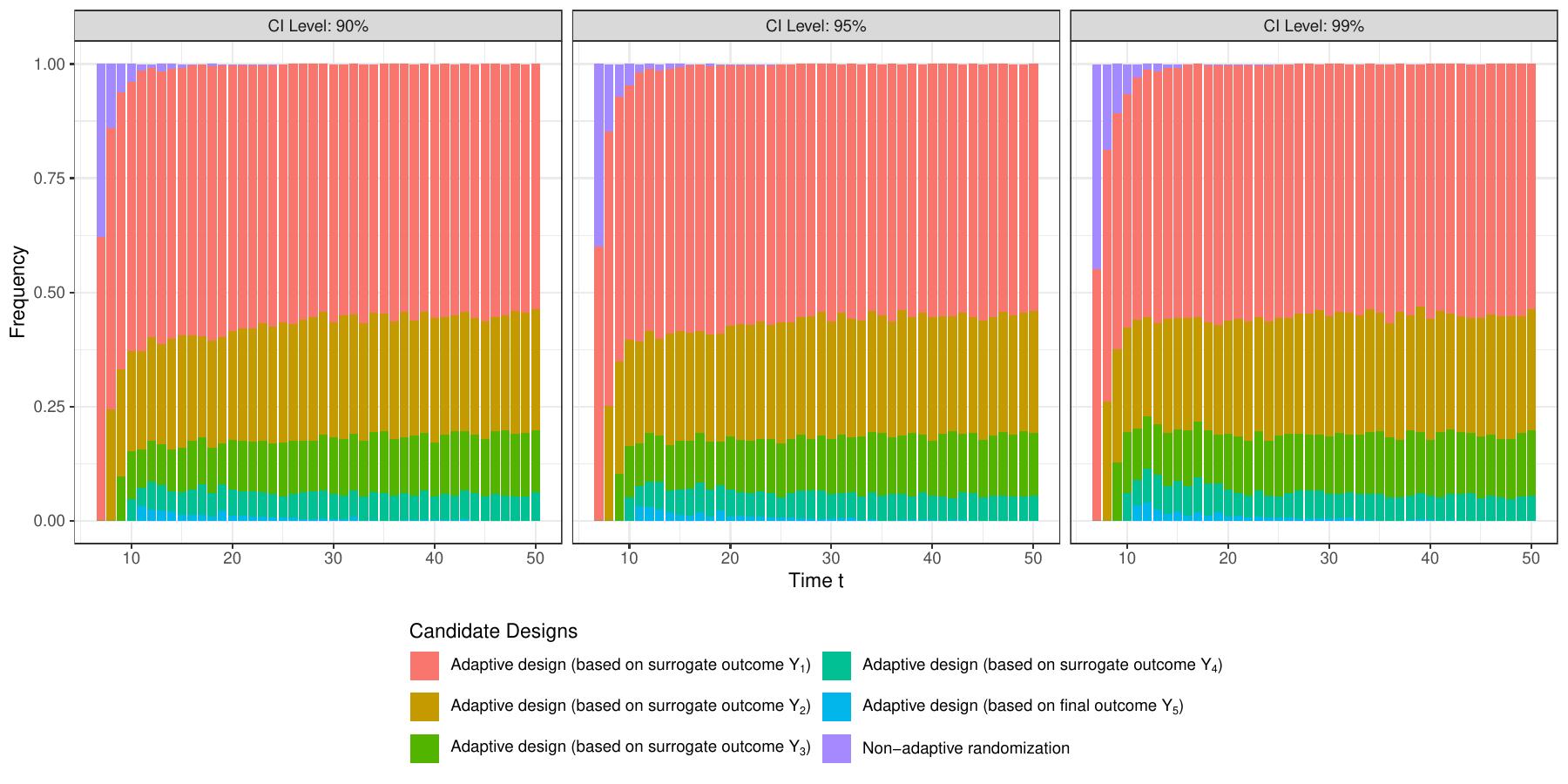}
    \caption{Frequency of candidate designs selected by TMLE-OSLAD using different TMLE confidence interval levels in Scenario 2.}
    \label{fig:sim2-sensi-tmle-alpha-freq}
\end{figure}

Figures \ref{fig:sim1-sensi-cate-alpha-freq} and \ref{fig:sim2-sensi-cate-alpha-freq} report the selection frequency of each candidate surrogate-driven design obtained by implementing adaptive designs where treatment-randomization probabilities are based upon $100(1-\alpha)\%$ CATE confidence intervals with $\alpha \in \{0.01, 0.05, 0.10\}$ for Scenarios 1 and 2, respectively. The selection patterns are consistent across different CI levels.
\begin{figure}[H]
    \centering
    \includegraphics[width=0.95\linewidth]{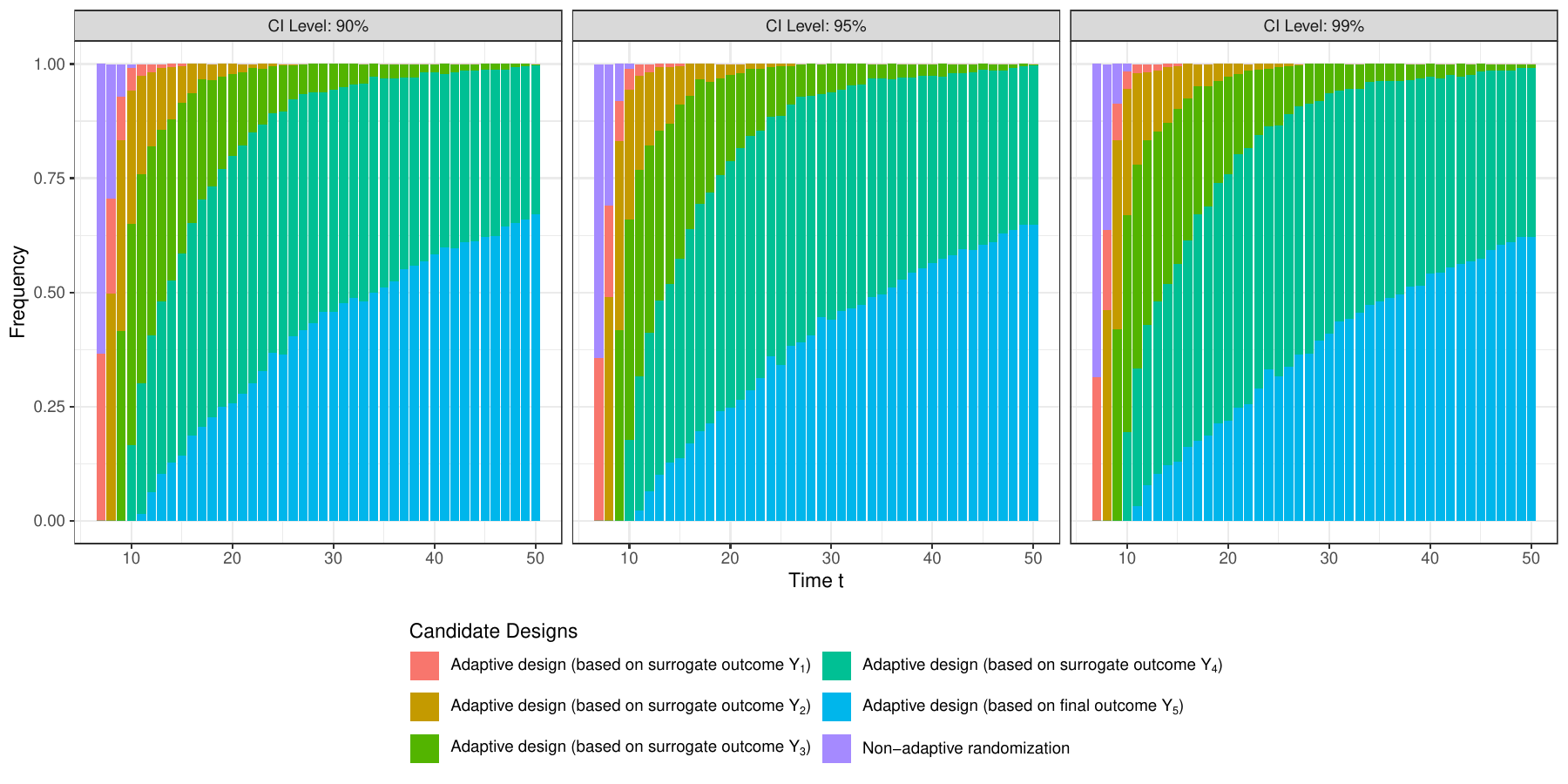}
    \caption{Frequency of candidate designs selected by TMLE-OSLAD using different levels of CATE confidence interval to adapt treatment randomization probabilities in Scenario 1.}
    \label{fig:sim1-sensi-cate-alpha-freq}
\end{figure}
\begin{figure}[H]
    \centering
    \includegraphics[width=0.95\linewidth]{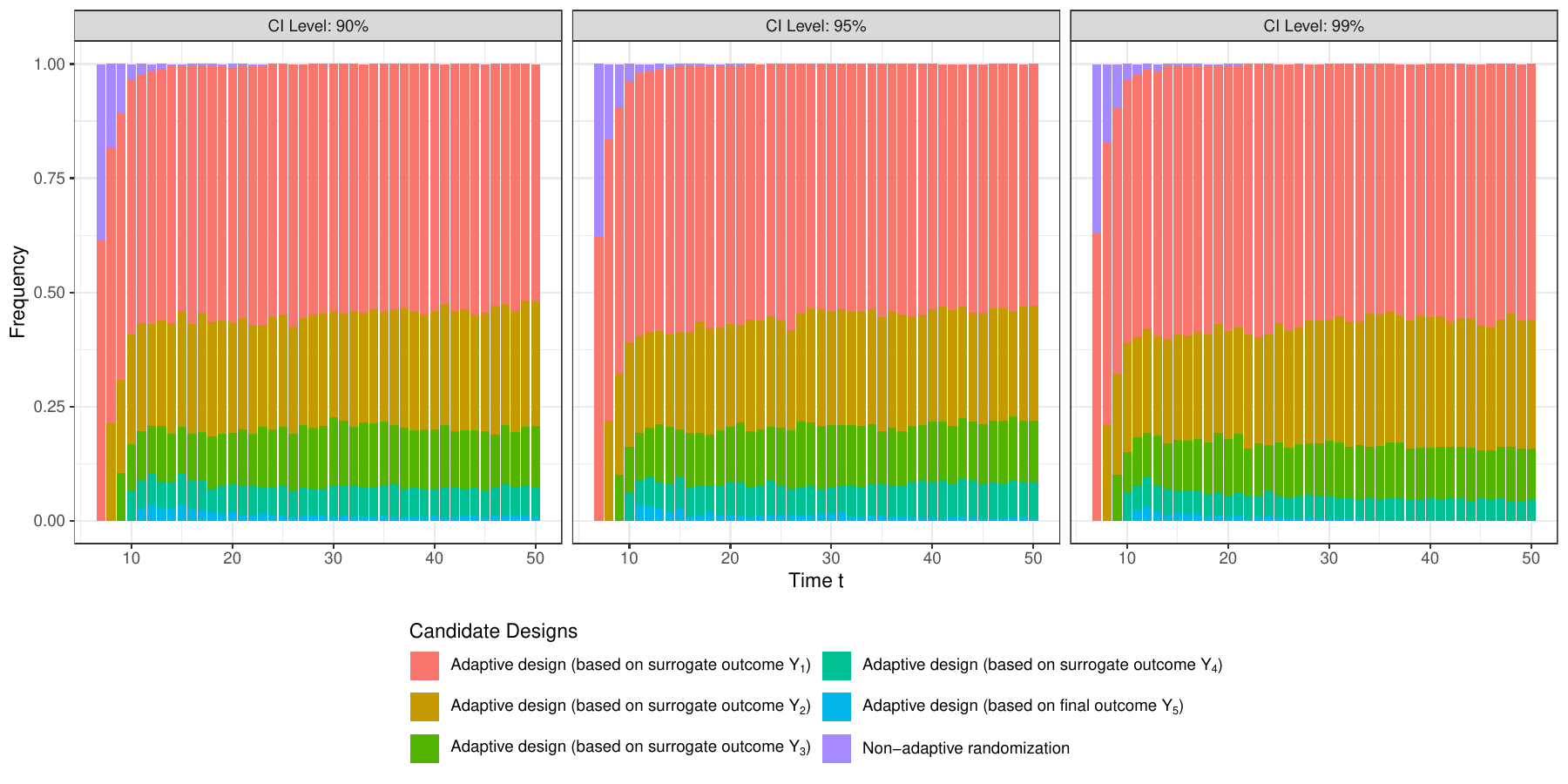}
    \caption{Frequency of candidate designs selected by TMLE-OSLAD using different levels of CATE confidence interval to adapt treatment randomization probabilities in Scenario 2.}
    \label{fig:sim2-sensi-cate-alpha-freq}
\end{figure}


Second, we conduct a sensitivity analysis on our proposed TMLE-OSLAD design with a different choice of $h_{\nu}$.
As noted in Appendix \ref{appendix:details_of_CARA}, the treatment randomization probabilities
 we use is
\begin{eqnarray}
h_\nu(z) = 
\begin{cases}
    \nu, & z \leq -1 \\
    \nu + (1 - 2\nu)\left( -\dfrac{1}{4}z^3 + \dfrac{3}{4}z + \dfrac{1}{2} \right), & -1 < z < 1 \\ 
    1 - \nu, & z \geq 1
\end{cases},
\end{eqnarray}
where the input \( z \) is the ratio of the CATE estimate to half the width of its \( 100(1-\alpha)\% \) confidence interval. 
Let
$
f^{(0)}
$
denote the function used in the second term of \( h_\nu \) for \( -1 < z < 1 \), defined as $f^{(0)}(z) = -\frac{1}{4}z^3 + \frac{3}{4}z + \frac{1}{2}$. 
To examine the sensitivity of our proposed methods to different treatment randomization schemes, we consider three additional functions that vary in how aggressively the estimated CATE signal is transformed into a skewed treatment randomization probability favoring the estimated optimal personalized treatment. These functions are defined as follows: (i) $f^{(1)}(z) = 1/(1 + \exp(-8z))$, (ii) $f^{(2)}(z) = \frac{1}{4}z^3 + \frac{1}{4}z + \frac{1}{2}$, and (iii) $f^{(3)}(z) = -\frac{1}{4}z^5 + \frac{3}{4}z^3 + \frac{1}{2}$.
To facilitate comparison, we use \( h_{\nu}^{(0)} \) to denote the baseline mapping function \( h_{\nu} \) in the main text. We then define \( h_{\nu}^{(1)}(z) \), \( h_{\nu}^{(2)}(z) \), and \( h_{\nu}^{(3)}(z) \) as alternative mapping functions that replace \( f^{(0)}(z) \) with \( f^{(1)}(z) \), \( f^{(2)}(z) \), and \( f^{(3)}(z) \) for $-1 < z < 1$, respectively.
Figure~\ref{fig:sensitivity_functions} compares the four functions. Relative to $h_{\nu}^{(0)}$, $h_{\nu}^{(1)}$ shifts the treatment randomization probability toward the estimated optimal treatment more rapidly as CATE increases,
while $h_{\nu}^{(2)}$ adjusts more gradually, and $h_{\nu}^{(3)}$ exhibits the slowest transition.
\begin{figure}
    \centering
    \includegraphics[width=0.75\linewidth]{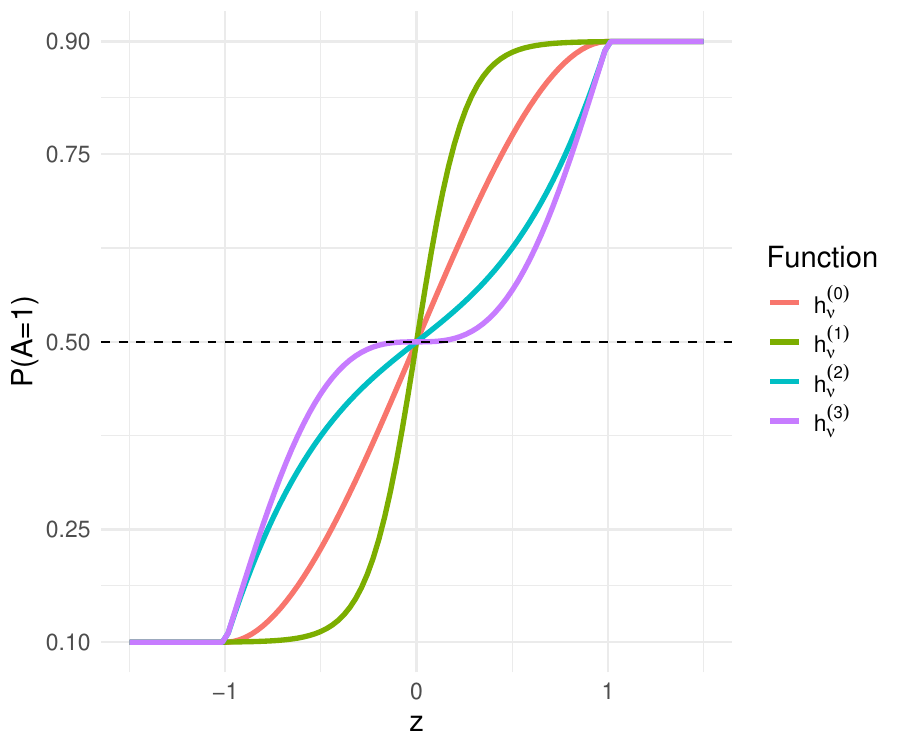}
    \caption{Comparison of the baseline mapping \( h_{\nu}^{(0)} \) and three alternatives \( h_{\nu}^{(1)} \), \( h_{\nu}^{(2)} \), and \( h_{\nu}^{(3)} \) (\( \nu = 0.1 \)) for sensitivity analysis. The x-axis shows the standardized CATE $(z)$, and the y-axis shows the probability of assigning treatment \( a=1 \) according to each mapping function (shown as curves).}    \label{fig:sensitivity_functions}
    \end{figure}

\begin{sidewaysfigure}
\renewcommand{\arraystretch}{1.2}

\noindent
\begin{minipage}[t]{\textheight}
\vspace{0pt}  
\centering
\captionsetup{width=\textheight}
\captionof{table}{True target estimand $\Psi_{t,k}$ had participants enrolled from $1, \cdots, t-K$ were adaptively treated by a candidate surrogate $Y_k$ in Scenarios 1 ($t = 11,21,31,41,50$) under different mapping functions \( h_{\nu}^{(0)} \) to \( h_{\nu}^{(3)} \). Expected outcomes under the adaptive design using the oracle surrogate are shown in bold.}
\label{tab:sensi-Psi-sim1}

\resizebox{\textwidth}{!}{   
\begin{tabular}{@{}c|cccccc|cccccc|cccccc|cccccc@{}}
\toprule
& \multicolumn{6}{c|}{\( h_{\nu}^{(0)} \)}
& \multicolumn{6}{c|}{\( h_{\nu}^{(1)} \)}
& \multicolumn{6}{c|}{\( h_{\nu}^{(2)} \)}
& \multicolumn{6}{c}{\( h_{\nu}^{(3)} \)} \\
\cmidrule{2-7} \cmidrule{8-13} \cmidrule{14-19} \cmidrule{20-25}
$t$ & $\Psi_{t,RCT}$ & $\Psi_{t,1}$ &  $\Psi_{t,2}$ & $\Psi_{t,3}$ & $\Psi_{t,4}$ & $\Psi_{t,5}$
   & $\Psi_{t,RCT}$ & $\Psi_{t,1}$ &  $\Psi_{t,2}$ & $\Psi_{t,3}$ & $\Psi_{t,4}$ & $\Psi_{t,5}$
   & $\Psi_{t,RCT}$ & $\Psi_{t,1}$ &  $\Psi_{t,2}$ & $\Psi_{t,3}$ & $\Psi_{t,4}$ & $\Psi_{t,5}$
   & $\Psi_{t,RCT}$ & $\Psi_{t,1}$ &  $\Psi_{t,2}$ & $\Psi_{t,3}$ & $\Psi_{t,4}$ & $\Psi_{t,5}$ \\
\midrule

11 & 0.000 & -0.016 & 0.053 & \textbf{0.078} & 0.066 & 0.033 
   & 0.000 & -0.013 & 0.058 & \textbf{0.083} & 0.071 & 0.036
   & 0.000 & -0.015 & 0.047 & \textbf{0.070} & 0.060 & 0.030 
   & 0.000 & -0.015 & 0.045 & \textbf{0.068} & 0.058 & 0.029 \\
21 & 0.000 & -0.008 & 0.083 & 0.145 & \textbf{0.177} & 0.173
   & 0.000 & -0.004 & 0.086 & 0.148 & \textbf{0.181} & 0.176
   & 0.000 & -0.009 & 0.079 & 0.140 & \textbf{0.171} & 0.166
   & 0.000 & -0.010 & 0.078 & 0.138 & \textbf{0.169} & 0.164 \\
31 & 0.000 & -0.005 & 0.091 & 0.161 & 0.203 & \textbf{0.208} 
   & 0.000 & -0.002 & 0.093 & 0.163 & 0.207 & \textbf{0.211}
   & 0.000 & -0.007 & 0.088 & 0.158 & 0.199 & \textbf{0.202}
   & 0.000 & -0.008 & 0.087 & 0.156 & 0.197 & \textbf{0.201} \\
41 & 0.000 & -0.002 & 0.094 & 0.169 & 0.215 & \textbf{0.225} 
   & 0.000 & 0.000 & 0.096 & 0.170 & 0.218 & \textbf{0.228}
   & 0.000 & -0.004 & 0.092 & 0.166 & 0.212 & \textbf{0.220} 
   & 0.000 & -0.006 & 0.091 & 0.165 & 0.211 & \textbf{0.218} \\
50 & 0.000 & -0.001 & 0.096 & 0.173 & 0.222 & \textbf{0.234}
   & 0.000 & 0.001 & 0.098 & 0.174 & 0.225 & \textbf{0.236}
   & 0.000 & -0.003 & 0.094 & 0.170 & 0.219 & \textbf{0.229}
   & 0.000 & -0.004 & 0.094 & 0.170 & 0.218 & \textbf{0.228} \\
\bottomrule
\end{tabular}
}
\end{minipage}

\hfill


\noindent
\begin{minipage}[t]{\textheight}
\vspace{0pt}  
\centering
\captionsetup{width=\textheight}
\captionof{table}{True target estimand $\Psi_{t,k}$ had participants enrolled from $1, \cdots, t-K$ were adaptively treated by a candidate surrogate $Y_k$ in Scenarios 2 ($t = 11,21,31,41,50$) under different mapping functions \( h_{\nu}^{(0)} \) to \( h_{\nu}^{(3)} \). Expected outcomes under the adaptive design using the oracle surrogate are shown in bold.}
\label{tab:sensi-Psi-sim2}

\resizebox{\textwidth}{!}{ 
\begin{tabular}{@{}c|cccccc|cccccc|cccccc|cccccc@{}}
\toprule
& \multicolumn{6}{c|}{\( h_{\nu}^{(0)} \)}
& \multicolumn{6}{c|}{\( h_{\nu}^{(1)} \)}
& \multicolumn{6}{c|}{\( h_{\nu}^{(2)} \)}
& \multicolumn{6}{c}{\( h_{\nu}^{(3)} \)} \\
\cmidrule{2-7} \cmidrule{8-13} \cmidrule{14-19} \cmidrule{20-25}
$t$ & $\Psi_{t,RCT}$ & $\Psi_{t,1}$ &  $\Psi_{t,2}$ & $\Psi_{t,3}$ & $\Psi_{t,4}$ & $\Psi_{t,5}$
   & $\Psi_{t,RCT}$ & $\Psi_{t,1}$ &  $\Psi_{t,2}$ & $\Psi_{t,3}$ & $\Psi_{t,4}$ & $\Psi_{t,5}$
   & $\Psi_{t,RCT}$ & $\Psi_{t,1}$ &  $\Psi_{t,2}$ & $\Psi_{t,3}$ & $\Psi_{t,4}$ & $\Psi_{t,5}$
  & $\Psi_{t,RCT}$ & $\Psi_{t,1}$ &  $\Psi_{t,2}$ & $\Psi_{t,3}$ & $\Psi_{t,4}$ & $\Psi_{t,5}$ \\
\midrule

11 & 0.000 & \textbf{0.073} & 0.057 & 0.039 & 0.019 & 0.004
   & 0.000 & \textbf{0.075} & 0.060 & 0.042 & 0.021 & 0.005
   & 0.000 & \textbf{0.069} & 0.054 & 0.035 & 0.015 & 0.003
   & 0.000 & \textbf{0.068} & 0.053 & 0.034 & 0.014 & 0.003 \\
21 & 0.000 & \textbf{0.087} & 0.080 & 0.073 & 0.061 & 0.038 
   & 0.000 & \textbf{0.088} & 0.082 & 0.075 & 0.064 & 0.042 
   & 0.000 & \textbf{0.084} & 0.078 & 0.070 & 0.056 & 0.031
   & 0.000 & \textbf{0.084} & 0.078 & 0.070 & 0.054 & 0.028 \\
31 & 0.000 & \textbf{0.090} & 0.086 & 0.081 & 0.073 & 0.052
   & 0.000 & \textbf{0.091} & 0.087 & 0.082 & 0.075 & 0.057
   & 0.000 & \textbf{0.088} & 0.085 & 0.079 & 0.068 & 0.043 
   & 0.000 & \textbf{0.088} & 0.084 & 0.079 & 0.067 & 0.041 \\
41 & 0.000 & \textbf{0.091} & 0.089 & 0.085 & 0.078 & 0.060
   & 0.000 & \textbf{0.092} & 0.089 & 0.086 & 0.081 & 0.065
   & 0.000 & \textbf{0.090} & 0.087 & 0.084 & 0.074 & 0.051
   & 0.000 & \textbf{0.090} & 0.087 & 0.083 & 0.073 & 0.049 \\
50 & 0.000 & \textbf{0.092} & 0.090 & 0.087 & 0.081 & 0.065
   & 0.000 & \textbf{0.093} & 0.091 & 0.088 & 0.083 & 0.070 
   & 0.000 & \textbf{0.091} & 0.089 & 0.086 & 0.078 & 0.057
   & 0.000 & \textbf{0.091} & 0.089 & 0.085 & 0.077 & 0.054 \\
\bottomrule
\end{tabular}
}
\end{minipage}

\end{sidewaysfigure}

Tables~\ref{tab:sensi-Psi-sim1} and \ref{tab:sensi-Psi-sim2} report the true target estimand $\Psi_{t,k}$ for each surrogate-specific adaptive design under the different mapping functions $h_{\nu}^{(0)}$ through $h_{\nu}^{(3)}$.  
The oracle surrogate that yields the highest target estimand is consistent across different mapping functions. 
Figures \ref{fig:sim1-sensi-freq} and \ref{fig:sim2-sensi-freq} further show the selection frequency for each candidate design by TMLE-OSLAD across different mapping functions used for treatment randomization. 
The surrogate selection pattern is robust to the choice of the function, with TMLE-OSLAD consistently identifying and increasingly favoring the oracle surrogate as the experiment progresses.
Moreover, TMLE provides robust inference for these target estimands, with confidence intervals achieving nominal coverage across simulations using different mapping functions (Tables~\ref{tab:sensi-cov-sim1} and \ref{tab:sensi-cov-sim2}). 

\begin{figure}[H]
    \centering
    \includegraphics[width=0.99\linewidth]{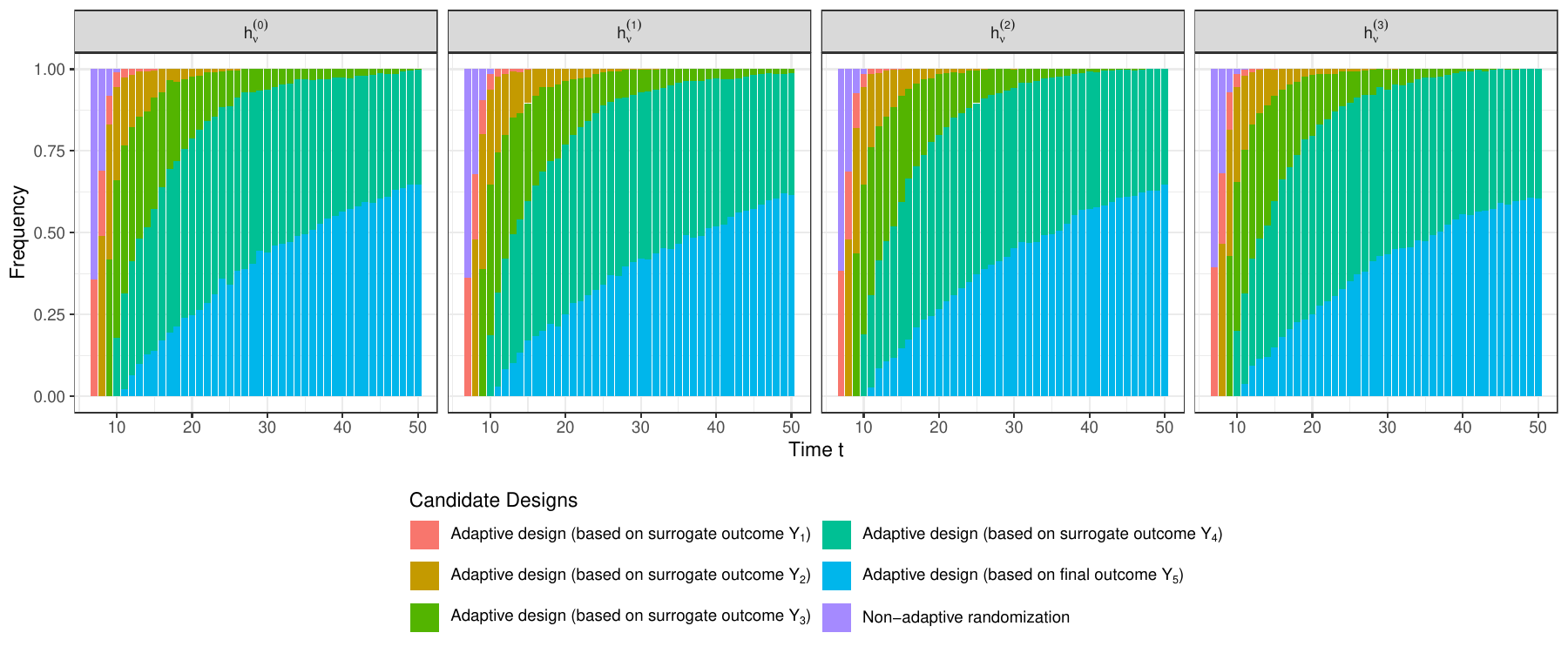}
    \caption{Frequency of candidate designs selected by TMLE-OSLAD under different treatment randomization functions applied in Scenario 1.}
    \label{fig:sim1-sensi-freq}
\end{figure}
\begin{figure}[H]
    \centering
    \includegraphics[width=0.99\linewidth]{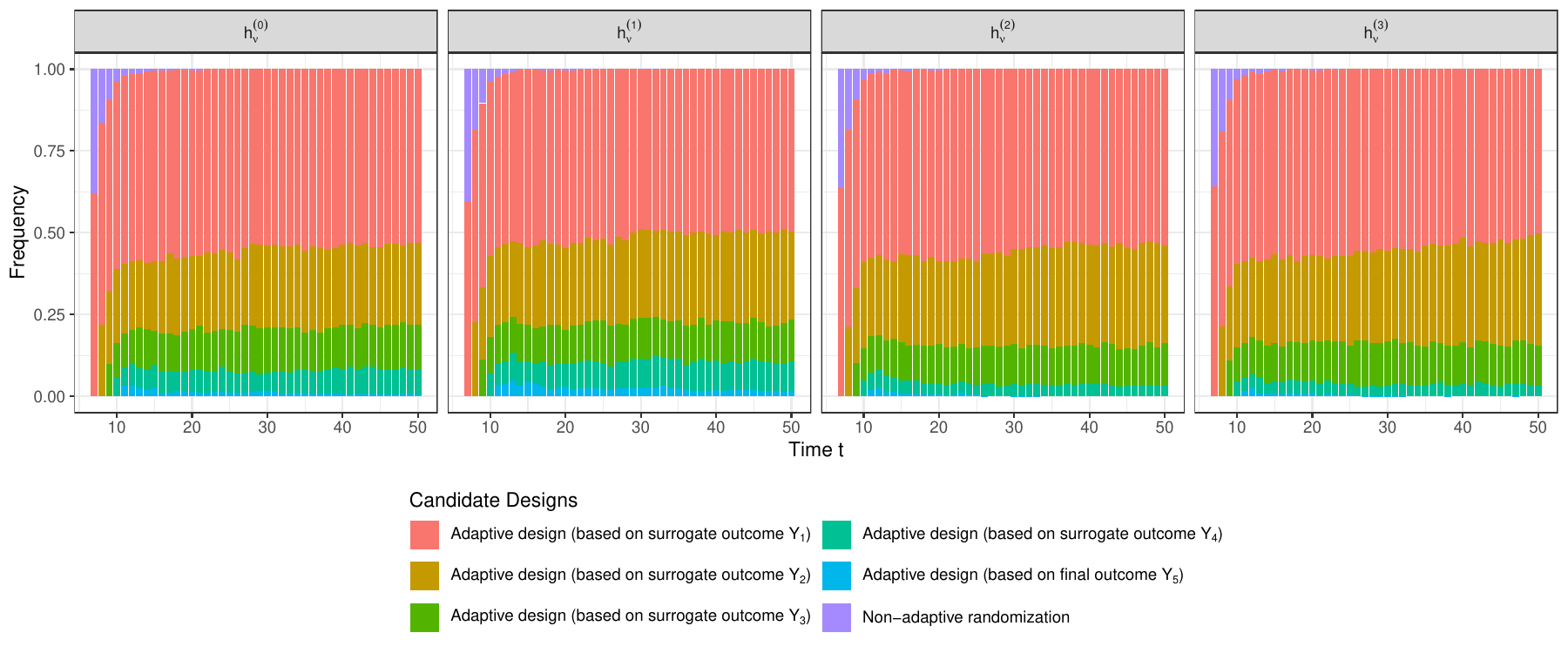}
    \caption{Frequency of candidate designs selected by TMLE-OSLAD under different treatment randomization functions applied in Scenario 2.}
    \label{fig:sim2-sensi-freq}
\end{figure}

\begin{sidewaysfigure}
\renewcommand{\arraystretch}{1.2}

\noindent
\begin{minipage}[t]{\textheight}
\vspace{0pt}  
\centering
\captionsetup{width=\textheight}
\captionof{table}{Coverage probability (\%) of confidence interval for target estimands $\Psi_{t,k}$ in Scenarios 1 ($t = 11,21,31,41,50$) under different mapping functions \( h_{\nu}^{(0)} \) to \( h_{\nu}^{(3)} \).}
\label{tab:sensi-cov-sim1}


\resizebox{\textwidth}{!}{   
\begin{tabular}{@{}c|cccccc|cccccc|cccccc|cccccc@{}}
\toprule
& \multicolumn{6}{c|}{\( h_{\nu}^{(0)} \)}
& \multicolumn{6}{c|}{\( h_{\nu}^{(1)} \)}
& \multicolumn{6}{c|}{\( h_{\nu}^{(2)} \)}
& \multicolumn{6}{c}{\( h_{\nu}^{(3)} \)} \\
\cmidrule{2-7} \cmidrule{8-13} \cmidrule{14-19} \cmidrule{20-25}
$t$ & $\Psi_{t,RCT}$ & $\Psi_{t,1}$ &  $\Psi_{t,2}$ & $\Psi_{t,3}$ & $\Psi_{t,4}$ & $\Psi_{t,5}$
   & $\Psi_{t,RCT}$ & $\Psi_{t,1}$ &  $\Psi_{t,2}$ & $\Psi_{t,3}$ & $\Psi_{t,4}$ & $\Psi_{t,5}$
   & $\Psi_{t,RCT}$ & $\Psi_{t,1}$ &  $\Psi_{t,2}$ & $\Psi_{t,3}$ & $\Psi_{t,4}$ & $\Psi_{t,5}$
   & $\Psi_{t,RCT}$ & $\Psi_{t,1}$ &  $\Psi_{t,2}$ & $\Psi_{t,3}$ & $\Psi_{t,4}$ & $\Psi_{t,5}$ \\
\midrule
11 & 96.4 & 96.2 & 95.8 & 95.6 & 95.6 & 96.2
   & 96.8 & 96.2 & 95.0 & 95.4 & 95.8 & 96.4
   & 96.8 & 95.8 & 95.8 & 95.6 & 95.8 & 96.6
   & 96.8 & 95.6 & 95.8 & 95.4 & 96.0 & 96.6 \\
21 & 95.0 & 93.8 & 93.4 & 94.6 & 94.2 & 95.2
   & 94.2 & 95.0 & 93.6 & 94.4 & 92.4 & 94.8
   & 93.2 & 93.6 & 92.0 & 94.2 & 92.2 & 94.8
   & 93.2 & 94.0 & 92.4 & 93.8 & 93.2 & 94.4 \\
31 & 96.2 & 95.2 & 94.8 & 94.2 & 95.2 & 95.0
   & 94.4 & 94.6 & 94.2 & 93.0 & 94.0 & 95.6
   & 95.4 & 94.0 & 93.2 & 94.0 & 94.4 & 94.0
   & 96.6 & 94.8 & 93.4 & 94.0 & 94.6 & 94.8 \\
41 & 95.6 & 95.6 & 97.2 & 94.4 & 94.6 & 95.8
   & 96.2 & 96.8 & 96.6 & 95.4 & 95.6 & 95.0
   & 95.4 & 95.4 & 95.4 & 95.4 & 95.2 & 95.6
   & 95.8 & 96.2 & 96.8 & 95.4 & 94.4 & 95.8 \\
50 & 94.4 & 94.2 & 95.0 & 93.6 & 94.0 & 94.8
   & 95.8 & 95.2 & 94.2 & 93.4 & 94.4 & 94.6
   & 95.8 & 93.6 & 93.8 & 95.0 & 95.0 & 95.0
   & 95.8 & 95.2 & 95.8 & 95.0 & 94.4 & 95.2 \\
\bottomrule
\end{tabular}
}
\end{minipage}

\hfill

\noindent
\begin{minipage}[t]{\textheight}
\vspace{0pt}  
\centering
\captionsetup{width=\textheight}
\captionof{table}{Coverage probability (\%) of confidence interval for target estimands $\Psi_{t,k}$ in Scenarios 2 ($t = 11,21,31,41,50$) under different mapping functions \( h_{\nu}^{(0)} \) to \( h_{\nu}^{(3)} \).}
\label{tab:sensi-cov-sim2}

\resizebox{\textwidth}{!}{ 
\begin{tabular}{@{}c|cccccc|cccccc|cccccc|cccccc@{}}
\toprule
& \multicolumn{6}{c|}{\( h_{\nu}^{(0)} \)}
& \multicolumn{6}{c|}{\( h_{\nu}^{(1)} \)}
& \multicolumn{6}{c|}{\( h_{\nu}^{(2)} \)}
& \multicolumn{6}{c}{\( h_{\nu}^{(3)} \)} \\
\cmidrule{2-7} \cmidrule{8-13} \cmidrule{14-19} \cmidrule{20-25}
$t$ & $\Psi_{t,RCT}$ & $\Psi_{t,1}$ &  $\Psi_{t,2}$ & $\Psi_{t,3}$ & $\Psi_{t,4}$ & $\Psi_{t,5}$
   & $\Psi_{t,RCT}$ & $\Psi_{t,1}$ &  $\Psi_{t,2}$ & $\Psi_{t,3}$ & $\Psi_{t,4}$ & $\Psi_{t,5}$
   & $\Psi_{t,RCT}$ & $\Psi_{t,1}$ &  $\Psi_{t,2}$ & $\Psi_{t,3}$ & $\Psi_{t,4}$ & $\Psi_{t,5}$
  & $\Psi_{t,RCT}$ & $\Psi_{t,1}$ &  $\Psi_{t,2}$ & $\Psi_{t,3}$ & $\Psi_{t,4}$ & $\Psi_{t,5}$ \\
\midrule

11 & 96.2 & 95.2 & 94.8 & 95.0 & 95.4 & 94.6
   & 95.8 & 96.0 & 95.2 & 95.0 & 95.6 & 94.0
   & 95.8 & 94.8 & 95.0 & 95.6 & 95.4 & 95.4
   & 95.8 & 94.8 & 94.8 & 95.6 & 95.4 & 95.4 \\
21 & 94.0 & 95.2 & 95.2 & 94.4 & 94.4 & 95.0
   & 93.6 & 94.4 & 95.0 & 95.2 & 94.0 & 95.4
   & 93.6 & 94.8 & 95.8 & 95.6 & 94.2 & 95.2
   & 94.0 & 96.2 & 96.4 & 95.2 & 95.2 & 95.6 \\
31 & 94.4 & 94.2 & 94.2 & 94.8 & 94.0 & 95.4
   & 94.4 & 94.6 & 93.4 & 93.2 & 94.2 & 94.8 
   & 94.0 & 94.0 & 93.8 & 94.4 & 94.0 & 95.0
   & 95.6 & 94.0 & 93.6 & 95.0 & 93.4 & 94.6 \\
41 & 94.8 & 95.0 & 94.4 & 94.0 & 95.4 & 95.4
   & 93.6 & 94.0 & 94.4 & 93.2 & 93.8 & 96.0
   & 93.6 & 93.8 & 93.2 & 94.2 & 94.2 & 95.4
   & 94.2 & 94.4 & 94.0 & 94.0 & 94.4 & 94.6 \\
50 & 94.6 & 94.2 & 94.8 & 94.2 & 94.6 & 94.8
   & 94.4 & 93.6 & 95.2 & 94.0 & 94.6 & 95.6
   & 94.4 & 94.4 & 93.4 & 93.6 & 95.0 & 94.8
   & 94.8 & 93.8 & 93.8 & 93.0 & 94.2 & 94.8 \\
\bottomrule
\end{tabular}
}
\end{minipage}

\end{sidewaysfigure}


Lastly, we extend the simulation scenarios to include three covariates to further evaluate the performance of our proposed methods. 
In Scenario 1, the conditional mean functions of $Y_k$ given $A$ and $W$ are $E[Y_k|A,W_1,W_2,W_3] = (2A-1) \times \left[0.5 - (1 + \text{exp}(-(3-k) - 0.4W_1-0.4W_2-0.2W_3))^{-1}\right]$. In Scenario 2, the conditional mean functions of $Y_k$ are $E[Y_k|A,W_1,W_2,W_3] = (2A-1) \times \left[0.5 - (1 + \text{exp}(-\gamma_k \times (0.4W_1+0.4W_2+0.2W_3)))^{-1}\right]$, where $\gamma_1 = 3$,$\gamma_2 = 2$,$\gamma_3 = 1$,$\gamma_4 = 0.5$ and $\gamma_5 = 0.25$. $W_1$, $W_2$, and $W_3$ are independent and all follow a uniform distribution $U(-4,4)$. Let $V:= 0.4W_1+0.4W_2+0.2W_3$. The CATE functions for the candidate outcomes $Y_1,\ldots,Y_5$ as functions of $V$ follow the same patterns shown in Figure \ref{fig:scenarios}.

Table~\ref{tab:dim3-truePsi} reports the true target estimand $\Psi_{t,k}$. Table~\ref{tab:dim3-coverage} shows the coverage probabilities of TMLE-based confidence intervals for these estimands. Figure~\ref{fig:dim3-SLselection} plots the probability of each surrogate-driven design being selected over time by TMLE-OSLAD in the two scenarios. Similar to the simulation results in the main paper, TMLE-OSLAD consistently provides valid statistical inference with confidence intervals that achieve nominal coverage and selects the oracle surrogate with high frequency in both scenarios. 
\begin{table}[thb]\centering
    \caption{True target estimand $\Psi_{t,k}$ had participants enrolled from $1, \cdots, t-K$ were adaptively treated by a candidate surrogate $Y_k$ in Scenarios 1 and 2 ($t = 11,21,31,41,50$) with three covariates. Expected outcomes under the adaptive design using the oracle surrogate are shown in bold.}
    \label{tab:dim3-truePsi}
    \begin{minipage}[t]{0.48\linewidth}
        \makebox[\textwidth]{\small (a) Scenario 1} 
        \resizebox{0.96\textwidth}{!}{
        \large
        \begin{tabular}{*{10}{c}}
            \toprule
           $t$ & $\Psi_{t,RCT}$ & $\Psi_{t,1}$ &  $\Psi_{t,2}$ & $\Psi_{t,3}$ & $\Psi_{t,4}$ & $\Psi_{t,5}$ \\
            \midrule
            11 & 0.000 & -0.174 & -0.066 & 0.035 & \textbf{0.060} & 0.036 \\
            21 & 0.000 & -0.207 & -0.078 & 0.088 & 0.162 & \textbf{0.170} \\
            31 & 0.000 & -0.214 & -0.080 & 0.103 & 0.189 & \textbf{0.204} \\
            41 & 0.000 & -0.217 & -0.081 & 0.109 & 0.202 & \textbf{0.219} \\
            50 & 0.000 & -0.218 & -0.080 & 0.112 & 0.209 & \textbf{0.228} \\
            \bottomrule
        \end{tabular}
    }
\end{minipage}\hfill
    \begin{minipage}[t]{0.48\linewidth}
        \makebox[\textwidth]{\small (b) Scenario 2} 
        \resizebox{0.96\textwidth}{!}{
        \large
        \begin{tabular}{*{10}{c}}
            \toprule
           $t$ & $\Psi_{t,RCT}$ & $\Psi_{t,1}$ &  $\Psi_{t,2}$ & $\Psi_{t,3}$ & $\Psi_{t,4}$ & $\Psi_{t,5}$ \\
            \midrule
            11 & 0.000 & \textbf{0.034} & 0.025 & 0.014 & 0.005 & 0.001 \\
            21 & 0.000 & \textbf{0.044} & 0.040 & 0.033 & 0.021 & 0.008 \\
            31 & 0.000 & \textbf{0.047} & 0.044 & 0.039 & 0.028 & 0.013 \\
            41 & 0.000 & \textbf{0.048} & 0.046 & 0.042 & 0.033 & 0.016 \\
            50 & 0.000 & \textbf{0.049} & 0.047 & 0.044 & 0.035 & 0.018 \\
            \bottomrule
        \end{tabular}
        }
    \end{minipage}\hfill
\end{table}

\begin{table}[H]\centering
    \caption{Coverage probability (\%) of confidence intervals for $\Psi_{t,k}$ in Scenarios 1 and 2 ($t = 11,21,31,41,50$) with three covariates.}
    \label{tab:dim3-coverage}
    \vspace{0.5em}
    \begin{minipage}[t]{0.48\linewidth}
        \makebox[\textwidth]{\small (a) Scenario 1} 
        \resizebox{0.96\textwidth}{!}{
        \large
        \begin{tabular}{*{10}{c}}
            \toprule
           $t$ & $\Psi_{t,RCT}$ & $\Psi_{t,1}$ &  $\Psi_{t,2}$ & $\Psi_{t,3}$ & $\Psi_{t,4}$ & $\Psi_{t,5}$ \\
            \midrule
            11 & 93.4 & 95.0 & 93.6 & 93.6 & 94.0 & 94.0 \\
            21 & 92.4 & 93.4 & 92.8 & 94.0 & 93.0 & 93.6 \\
            31 & 94.6 & 93.2 & 94.0 & 95.0 & 95.4 & 94.0 \\
            41 & 94.2 & 95.2 & 94.8 & 94.4 & 94.2 & 94.6 \\
            50 & 94.2 & 93.6 & 94.0 & 93.0 & 94.0 & 94.4 \\
            \bottomrule
        \end{tabular}
    }
\end{minipage}\hfill
    \vspace{0.5em}
    \begin{minipage}[t]{0.48\linewidth}
        \makebox[\textwidth]{\small (b) Scenario 2} 
        \resizebox{0.96\textwidth}{!}{
        \large
        \begin{tabular}{*{10}{c}}
            \toprule
           $t$ & $\Psi_{t,RCT}$ & $\Psi_{t,1}$ &  $\Psi_{t,2}$ & $\Psi_{t,3}$ & $\Psi_{t,4}$ & $\Psi_{t,5}$ \\
            \midrule
            11 & 93.2 & 93.4 & 93.2 & 93.4 & 93.4 & 94.0 \\
            21 & 93.6 & 91.8 & 94.2 & 94.4 & 93.6 & 93.6 \\
            31 & 94.4 & 91.8 & 92.4 & 93.0 & 93.8 & 94.0 \\
            41 & 95.0 & 94.4 & 94.2 & 93.4 & 95.0 & 95.0 \\
            50 & 93.8 & 93.6 & 93.0 & 93.0 & 94.0 & 95.2 \\
            \bottomrule
        \end{tabular}
        }
    \end{minipage}\hfill
\end{table}

\begin{figure}[htb]
    \centering
    \begin{subfigure}[b]{0.45\textwidth}
        \centering
        \includegraphics[width=\textwidth]{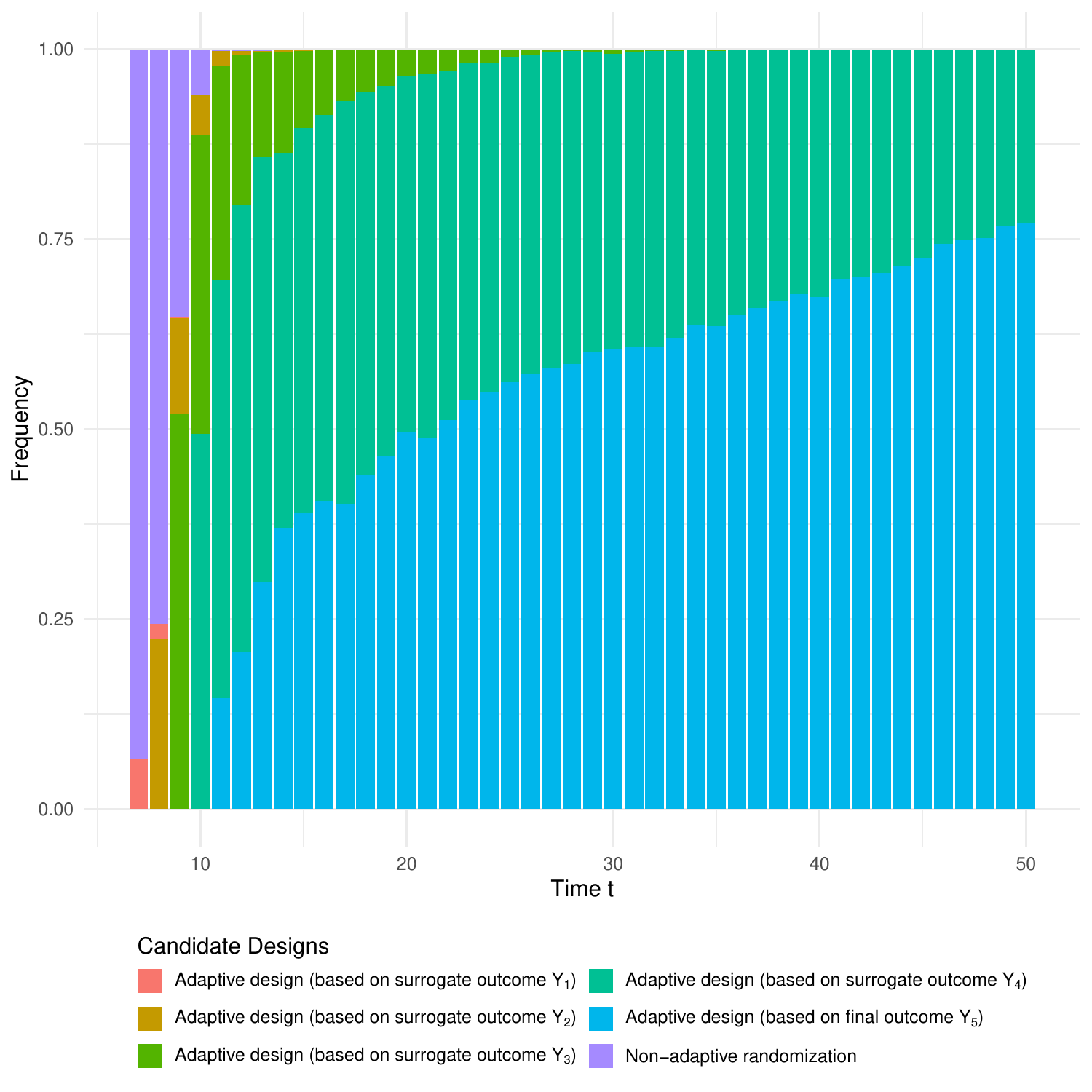}
        \caption{} 
        \label{fig:SLselectionA}
    \end{subfigure}
    \hfill
    \begin{subfigure}[b]{0.45\textwidth}
        \centering
        \includegraphics[width=\textwidth]{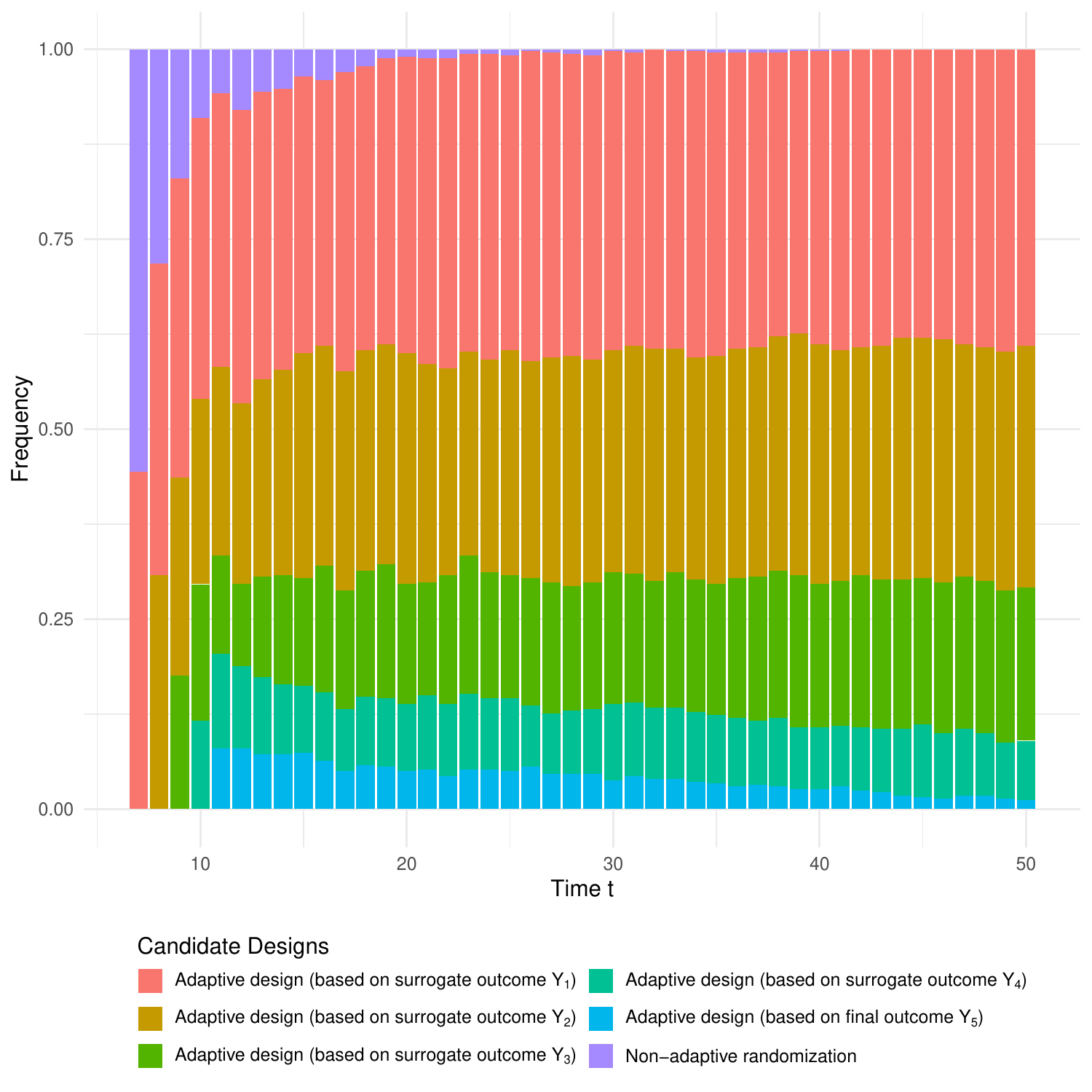}
        \caption{} 
        \label{fig:SLselectionB}
    \end{subfigure}
    \caption{Frequency of candidate adaptive design selected by TMLE-OSLAD at each time point. Panels (a) and (b) correspond to Scenario 1 and Scenario 2 with three covariates.}\label{fig:dim3-SLselection}
\end{figure}

\par
\endgroup

\section{Additional Results for Section 
\ref{section:real_sim}}
\label{appendix:real_sim}
In this section, we present additional results in ADAPT-R simulation.
Figure \ref{fig:CATE-ADAPT-R} shows the true CATE functions of Navigator vs. SMS+CCT in ADAPT-R simulation. 
For participants initially assigned to SMS, the optimal personalized treatment varies with time-to-lapse after initial treatment and shows different patterns across outcomes. For earlier outcomes ($Y_1$ through $Y_3$), the CATE suggests that Navigator is preferable with shorter time-to-lapse, but the benefit wanes over time, shifting the optimal treatment to SMS+CCT. In contrast, for later outcomes ($Y_4$ and $Y_5$), SMS+CCT is preferred for short time-to-lapse, with their effectiveness decreasing over time, making Navigator the optimal choice for longer time-to-lapse.
For participants whose initial treatment was SOC, the treatment effects are minimal: CATE remains close to zero over the entire range of time-to-lapse periods, indicating small differences between Navigator and SMS+CCT.
For participants with CCT as their initial strategy, Navigator is preferred for participants with short time-to-lapse periods, showing a positive CATE. However, as time-to-lapse increases, SMS+CCT becomes increasingly favored, with the CATE decreasing and turning negative.
\begin{figure}[tbh]
    \centering \includegraphics[width=0.7\textwidth]{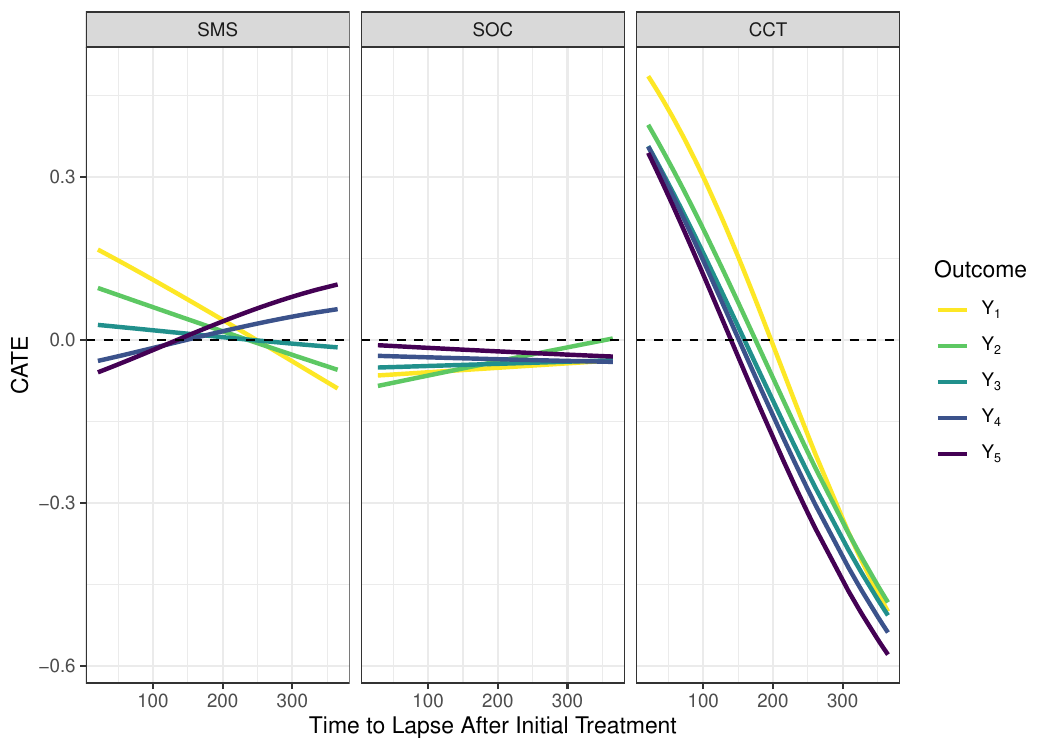}
    \caption{Estimated Conditional Average Treatment Effect (CATE) functions of Navigator v.s. SMS+CCT in Phase 2, plotted against initial retention strategy (grids) and time to lapse after initial strategy in Phase 1 (x-axis).  Each curve corresponds to the CATE function of a candidate surrogate, ranging from $Y_1$ to $Y_5$.}
    \label{fig:CATE-ADAPT-R}
\end{figure}

Figure \ref{fig:realSim_abs_regret} illustrates regret over time obtained from implementing different design strategies.
Table \ref{tab:realSim_truePsi_and_coverage} shows the true values of the proposed adaptive design target estimand $\Psi_{t,k}$ at the illustrative time points 11, 13, 15, 17, and 19, along with the coverage probabilities of the confidence intervals based on TMLE.
Table \ref{tab:realSim_table_bias_variance} reports the bias and variance of TMLE at these time points.
Table \ref{tab:realSim_estimation_adaptive_design_at_end} presents TMLE's performance in estimating the proposed adaptive design estimands at the end of the experiment.
Table \ref{tab:realSim_estimation_at_end} presents TMLE's performance of estimating ATE on final outcome and mean final outcome of optimal dynamic treatment rules based on surrogate outcomes and final outcomes after the experiment ends.
Table \ref{tab:realSim_truePsi_marginal} shows the true expected final outcome $\tPsi_{t,k}$ of implementing different designs in ADAPT-R simulation.

\begin{figure}[H]
    \centering
    \includegraphics[width=0.9\linewidth]{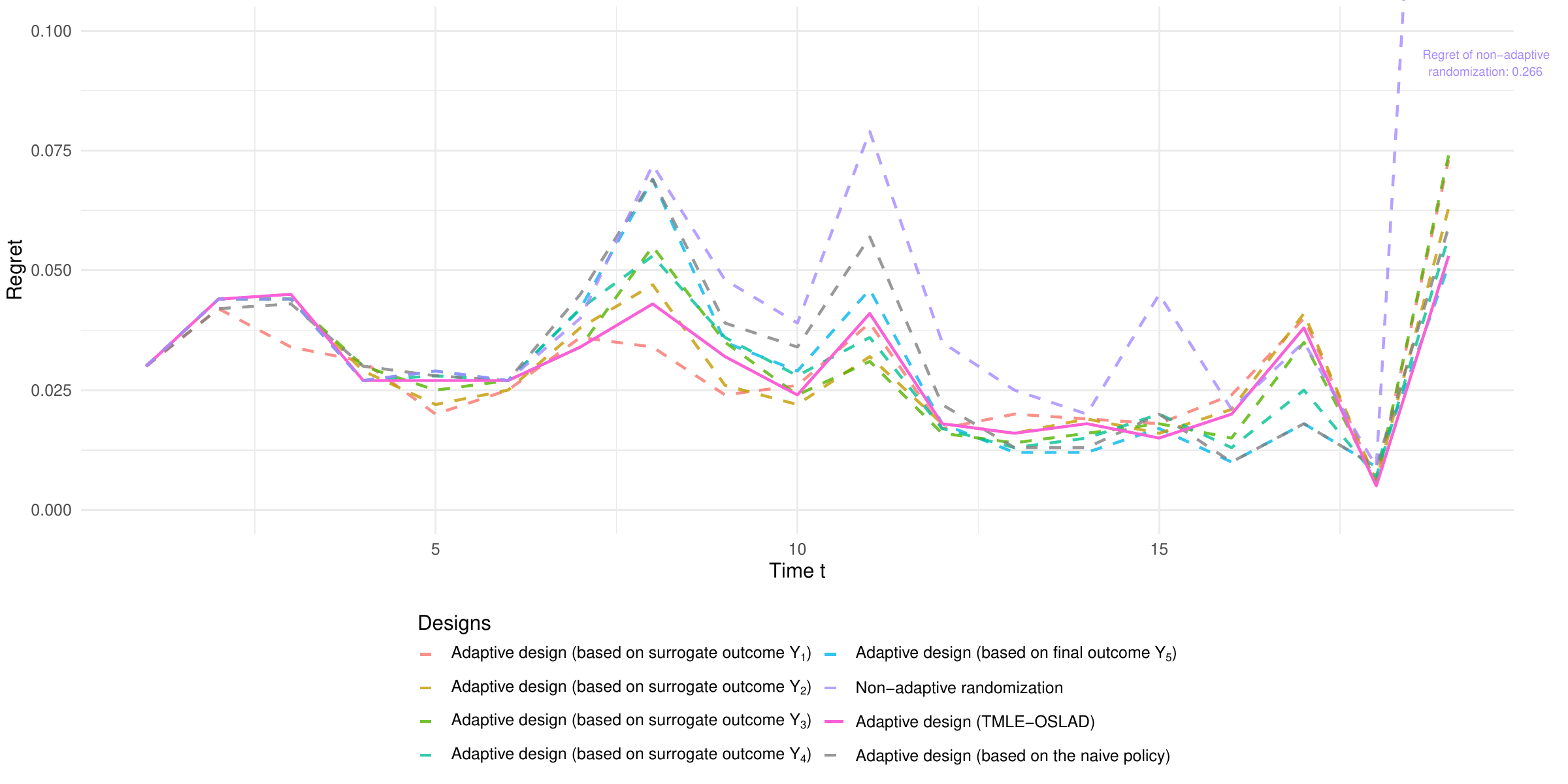}
    \caption{Regret at each time point $t$ obtained from implementing different design strategies in the ADAPT-R trial simulation. 
    }    \label{fig:realSim_abs_regret}
    \end{figure}
    
\begin{table}[H]\centering
    \caption{(a) True target estimand $\Psi_{t,k}$ had participants enrolled from $1, \cdots, t-K$ been adaptively treated by a candidate adaptive design using surrogate $Y_k$ in ADAPT-R trial simulations ($t = 11,13,15,17,19$). (b) Coverage probability (\%) of confidence intervals for $\Psi_{t,k}$ in ADAPT-R trial simulations ($t = 11,13,15,17,19$)}
    \label{tab:realSim_truePsi_and_coverage}
    \vspace{0.5em}
    \begin{minipage}[t]{0.48\linewidth}
        \makebox[\textwidth]{\small (a) $\Psi_{t,k}$} 
        \resizebox{0.96\textwidth}{!}{
        \large
        \begin{tabular}{*{7}{p{1.2cm}}}
            \toprule
           $t$ & $\Psi_{t,RCT}$ & $\Psi_{t,1}$ &  $\Psi_{t,2}$ & $\Psi_{t,3}$ & $\Psi_{t,4}$ & $\Psi_{t,5}$ \\
            \midrule
            11 & 0.538 & 0.542 & 0.540 & 0.538 & 0.538 & 0.538 \\
            13 & 0.547 & 0.559 & 0.555 & 0.552 & 0.549 & 0.547 \\
            15 & 0.557 & 0.572 & 0.569 & 0.566 & 0.563 & 0.560 \\
            17 & 0.562 & 0.579 & 0.577 & 0.574 & 0.571 & 0.568 \\
            19 & 0.564 & 0.580 & 0.578 & 0.576 & 0.573 & 0.570 \\
            \bottomrule
        \end{tabular}
    }
\end{minipage}\hfill
    \vspace{0.5em}
    \begin{minipage}[t]{0.48\linewidth}
        \makebox[\textwidth]{\small (b) Coverage probability (\%)} 
        \resizebox{0.96\textwidth}{!}{
        \large
        \begin{tabular}{*{7}{p{1.2cm}}}
            \toprule
           $t$ & $\Psi_{t,RCT}$ & $\Psi_{t,1}$ &  $\Psi_{t,2}$ & $\Psi_{t,3}$ & $\Psi_{t,4}$ & $\Psi_{t,5}$ \\
            \midrule
            11 & 94.8 & 95.6 & 95.8 & 95.0 & 94.6 & 94.8 \\
            13 & 94.2 & 95.8 & 95.8 & 93.0 & 93.6 & 94.2 \\
            15 & 92.2 & 96.2 & 95.4 & 95.4 & 92.4 & 93.2 \\
            17 & 93.8 & 96.8 & 95.6 & 95.2 & 94.2 & 95.6 \\
            19 & 93.2 & 96.8 & 95.2 & 95.4 & 95.2 & 95.6 \\
           \bottomrule
        \end{tabular}
        }
    \end{minipage}\hfill
\end{table}

\begin{table}[htb]\centering
    \caption{Bias and variance of TMLE estimators for $\Psi_{t,k}$ in ADAPT-R trial simulations ($t = 11,13,15,17,19$).}
    \label{tab:realSim_table_bias_variance}
    \vspace{0.5em}
    \begin{minipage}[t]{0.48\linewidth}
        \makebox[\textwidth]{\small (a) Bias ($\times 10^{-3}$)} 
        \resizebox{0.96\textwidth}{!}{
        \large
        \begin{tabular}{*{7}{p{1.2cm}}}
            \toprule
           $t$ & $\Psi_{t,RCT}$ & $\Psi_{t,1}$ &  $\Psi_{t,2}$ & $\Psi_{t,3}$ & $\Psi_{t,4}$ & $\Psi_{t,5}$ \\
            \midrule    
            11 & -0.50 & -0.40 & -0.30 &  0.22 & -0.54 & -0.31 \\
            13 &  0.45 & -0.16 &  0.18 &  0.88 &  0.36 &  0.33 \\
            15 &  0.56 & -0.11 &  0.29 &  0.52 &  0.15 &  0.10 \\
            17 &  0.29 & -0.02 &  0.09 &  0.37 & -0.16 & -0.12 \\
            19 &  0.16 & -0.08 & -0.04 &  0.23 & -0.22 & -0.12 \\
            \bottomrule
            \end{tabular}
    }
\end{minipage}\hfill
    \vspace{0.5em}
    \begin{minipage}[t]{0.48\linewidth}
        \makebox[\textwidth]{\small (b) Variance ($\times 10^{-3}$)} 
        \resizebox{0.96\textwidth}{!}{
        \large
        \begin{tabular}{*{7}{p{1.2cm}}}
            \toprule
           $t$ & $\Psi_{t,RCT}$ & $\Psi_{t,1}$ & $\Psi_{t,2}$ & $\Psi_{t,3}$ & $\Psi_{t,4}$ & $\Psi_{t,5}$ \\
            \midrule
            11 & 0.16 & 0.22 & 0.22 & 0.20 & 0.20 & 0.17 \\
            13 & 0.16 & 0.16 & 0.19 & 0.19 & 0.20 & 0.22 \\
            15 & 0.14 & 0.13 & 0.15 & 0.16 & 0.18 & 0.19 \\
            17 & 0.12 & 0.10 & 0.12 & 0.13 & 0.14 & 0.17 \\
            19 & 0.11 & 0.10 & 0.11 & 0.12 & 0.13 & 0.16 \\
            \bottomrule
        \end{tabular}
        }
    \end{minipage}\hfill
\end{table}
\begin{table}[H]\centering
    \caption{Performance of TMLE estimators for adaptive-design estimands at the end of experiments in ADAPT-R trial simulation.}
    \label{tab:realSim_estimation_adaptive_design_at_end}
    \vspace{0.5em}
    \makebox[\textwidth]{\small} 
    \resizebox{0.96\textwidth}{!}{
    \large
    \begin{tabular}{*{10}{c}}
        \toprule
        Estimand & Notation & Truth & Bias ($\times 10^{-3}$) & Var ($\times 10^{-3}$) & Coverage  (\%)\\
        \midrule
        Expected final outcome $Y_5$ under the adaptive design using $Y_1$ & $\Psi_{\Tend,1}$ & 0.581 & -0.11 & 0.09 & 96.2 \\
        Expected final outcome $Y_5$ under the adaptive design using $Y_2$ & $\Psi_{\Tend,2}$ & 0.580 & -0.05 & 0.10 & 95.8 \\
        Expected final outcome $Y_5$ under the adaptive design using $Y_3$ & $\Psi_{\Tend,3}$ & 0.578 & 0.29 & 0.11 & 95.0 \\
        Expected final outcome $Y_5$ under the adaptive design using $Y_4$ & $\Psi_{\Tend,4}$ & 0.576 & -0.17 & 0.13 & 93.8 \\
        Expected final outcome $Y_5$ under the adaptive design using $Y_5$ & $\Psi_{\Tend,5}$ & 0.573 & -0.15 & 0.15 & 95.0 \\ \bottomrule
    \end{tabular}
    }
\end{table}

\begin{table}[htb]\centering
    \caption{Performance of TMLE estimators for other causal estimands at the end of experiments in ADAPT-R trial simulation.}
    \label{tab:realSim_estimation_at_end}
    \vspace{0.5em}

    \makebox[\textwidth]{\small} 
    \resizebox{0.96\textwidth}{!}{
    \renewcommand{\arraystretch}{1.2}
    \large
    \begin{tabular}{*{10}{c}}
        \toprule
        Estimand & Notation & Truth & Bias ($\times 10^{-3}$) & Var ($\times 10^{-3}$) & Coverage  (\%)\\
        \midrule
        Average treatment effect on the final outcome $Y_5$ & $\psi_{\Tend}^{\text{ATE}}$ & -0.015 & 0.00 & 0.46 & 93.6 \\
        Expected final outcome $Y_5$ under the estimated ODTR of $Y_1$ & $\psi^{d^*_{n,1}}_{\Tend}$ & 0.590 & 0.98 & 0.12 & 95.0 \\
        Expected final outcome $Y_5$ under the estimated ODTR of $Y_2$ & $\psi^{d^*_{n,2}}_{\Tend}$ & 0.594 & 0.07 & 0.13 & 95.4 \\
        Expected final outcome $Y_5$ under the estimated ODTR of $Y_3$ & $\psi^{d^*_{n,3}}_{\Tend}$ & 0.595 & 1.87 & 0.15 & 91.2 \\
        Expected final outcome $Y_5$ under the estimated ODTR of $Y_4$ & $\psi^{d^*_{n,4}}_{\Tend}$ & 0.598 & 1.73 & 0.16 & 93.6 \\
        Expected final outcome $Y_5$ under the estimated ODTR of $Y_5$ & $\psi^{d^*_{n,5}}_{\Tend}$ & 0.597 & 1.13 & 0.20 & 93.2 \\ \bottomrule
    \end{tabular}
    }
\end{table}

\begin{table}[H]\centering
    \caption{True expected final outcome $\tPsi_{t,k}$ of implementing different designs in ADAPT-R trial simulation ($t = 11,13,15,17,19,T_{end}$).}
    \label{tab:realSim_truePsi_marginal}
    \scriptsize
    \vspace{0.5em}

    \resizebox{0.6\textwidth}{!}{
    \large
    \begin{tabular}{*{10}{c}}
    \toprule
    t & $\tPsi_{t,RCT}$ & $\tPsi_{t,1}$ & $\tPsi_{t,2}$ & $\tPsi_{t,3}$ & $\tPsi_{t,4}$ & $\tPsi_{t,5}$ & $\tPsi_{t,naive}$ & $\tPsi_{t,SL}$ \\
    \midrule
    11 & 0.538 & 0.541 & 0.540 & 0.538 & 0.538 & 0.538 & 0.538 & 0.538 \\
    13 & 0.547 & 0.555 & 0.552 & 0.551 & 0.549 & 0.547 & 0.546 & 0.552 \\
    15 & 0.557 & 0.569 & 0.566 & 0.564 & 0.562 & 0.560 & 0.559 & 0.565 \\
    17 & 0.562 & 0.576 & 0.574 & 0.572 & 0.570 & 0.568 & 0.566 & 0.573 \\
    19 & 0.564 & 0.577 & 0.576 & 0.574 & 0.572 & 0.570 & 0.568 & 0.574 \\
    $T_{end}$ & 0.565 & 0.579 & 0.578 & 0.576 & 0.575 & 0.573 & 0.570 & 0.576 \\
    \bottomrule
    \end{tabular}
    }
\end{table}

\section{An Extension of the Proposed CARA Design}
\label{appendix:extended_CARA}
In some applications, when a plausible surrogate is available, experimenters may wish to adopt surrogate-guided adaptations before formally validating their utility for optimizing the final outcome.
In such cases, a variant of TMLE-OSLAD can be used by designating a plausible surrogate (e.g., $Y_k$) as a primary outcome.
The design then treats $Y_k$ as the working primary outcome and updates treatment randomization probabilities as early as $t=k+1$.
As the experiment progresses, later surrogates can be promoted in sequence as primary outcomes until the true final outcome $Y_K$ is observed.
If prior knowledge or external data may indicate the relationship between surrogates and the primary
outcome, one may also initiate an earlier adaptation based on that information. However, these approaches typically rely on generalizability assumptions or pre-specified models of surrogate and primary outcomes.
Regardless of the strategy chosen, the TMLE-OSLAD framework can be invoked throughout the experiment to evaluate the opportunities and costs of each candidate adaptive design and dynamically select the best-performing one to support real-time decision making.


\end{document}